\newtheorem{theorem}{Theorem}
\newtheorem{lemma}[theorem]{Lemma}
\newtheorem{proposition}[theorem]{Proposition}
\newtheorem{corollary}[theorem]{Corollary}
\begin{document}

\title{Morphisms in categories of nonlocal games}
\author{Brad Lackey}
\affiliation{Joint Center for Quantum Information and Computer Science}
\affiliation{University of Maryland Institute for Advanced Computer Studies}
\affiliation{Department of Computer Science, University of Maryland, College Park}
\affiliation{Department of Mathematics, University of Maryland, College Park}
\affiliation{The Cryptography Office, Mathematics Research Group, National Security Agency}
\author{Nishant Rodrigues}
\affiliation{Joint Center for Quantum Information and Computer Science}
\affiliation{Department of Computer Science, University of Maryland, College Park}

\date{\today}

\newcommand{\bra}[1]{\ensuremath \langle{#1}|}%
\newcommand{\ket}[1]{{\ensuremath |{#1}\rangle}}%
\newcommand{\bracket}[2]{\ensuremath \langle{#1}|{#2}\rangle}%
\newcommand{\h}[1]{\ensuremath \mathfrak{#1}}%
\newcommand{\indicator}[1]{\ensuremath \openone_{\{#1\}}}%
\newcommand{\id}{\ensuremath \mathrm{id}}%
\newcommand{\tr}{\ensuremath{\mathrm{tr}}}
\newcommand{\ip}[2]{\ensuremath \langle{#1}|{#2}\rangle}%
\newcommand{\category}[1]{\ensuremath{\mathsf{#1}}}
\newcommand{\Hom}{\mathrm{Hom}}

\renewcommand{\Pr}{\ensuremath \mathrm{Pr}}

\newcommand{\covec}[1]{\ensuremath \undertilde{#1}}
\newcommand{\bcl}[1]{{\color{orange} (\textbf{Brad}: #1)}}

\begin{abstract}
Synchronous correlations provide a class of nonlocal games that behave like functions between finite sets. In this work we examine categories whose morphisms are games with synchronous classical, quantum, or general nonsignaling correlations. In particular, we characterize when morphisms in these categories are monic, epic, sections, or retractions.
\end{abstract}

\maketitle

\begin{section}{Introduction}

Synchronous correlations provide a class of nonlocal games that behave like functions between finite sets. These have long been used to generalize notions such the chromatic number of a graph \cite{galliard2002pseudo, cleve2004consequences, cameron2007quantum, scarpa2012kochen, mancinska2013new, mancinska2016quantum, helton2017algebras, gribling2018bounds, mancinska2018oddities}, or more generally graph homomorphisms \cite{atserias2016quantum, ortiz2016quantum, lupini2017nonlocal, mancinska2017relaxations, musto2018morita}. Recently synchronous correlations in particular have provided a narrow focus where Tsirl'son's problem and the Connes embedding conjecture may be resolved \cite{dykema2016synchronous, harris2017unitary, kim2018synchronous}; specifically, the separation of some quantum correlation sets has been achieved \cite{cleve2017perfect, dykema2017non, slofstra2017set, coladangelo2018unconditional, goh2018geometry, musat2018non}. It is this last point that inspired this paper, the hope that exploring how correlations behave under composition would reveal properties that distinguish different correlation sets.

Category theory is the natural level of abstraction to study composition. For this work we examine the categorical notions of one-to-one (section and monomorphism) and onto (retraction and epimorphism) and characterize when a synchronous nonlocal games satisfies each of these. We also get analogous characterizations for the synchronous classical, quantum, and nonsignaling correlation sets. Unfortunately, our hope was not realized in that the property for being a section (or monomomorphism or retraction or epimorphism) is the same for synchronous classical, quantum, or nonsignaling correlations. For this reason we did not define approximating, spacial, or commuting of quantum correlation sets as these properties are too course to distinguish them. Nonetheless, our techniques for producing these characterizations rely on detailed construction of correlations that have particular properties and so we hope these methods are useful for other works.

\end{section}

\begin{section}{Nonlocal games and synchronous correlations}

In our previous work \cite{lackey2017nonlocal} we examined a class of nonlocal games as a generalization functions between finite sets. Given an input set $X$ and an output set $Y$, two players---Alice and Bob---receive $x_A, x_B\in X$ respectively and must then produce $y_A, y_B\in Y$ respectively. Their strategy defines a conditional probability distribution, or \emph{correlation} $p(y_A,y_B \:|\: x_A, x_B)$. We say they win the game if whenever $x_A = x_B$ the produce outputs such that $y_A = y_B$. That is, they win if they play a \emph{synchronous} correlation:
\begin{equation}\label{equation:synchronous}
    p(y_A,y_B \:|\: x, x) = 0 \text{ whenever $x \in X$ and $y_A \not= y_B$ in $Y$.}
\end{equation}

Synchronous correlations generalize functions in that given a function $f:X \to Y$, Alice and Bob choose the strategy of applying $F$ to their respective inputs. This gives the correlation
$$p(y_A, y_B \:|\; x_A, x_B) = \indicator{y_A = f(x_A)}\indicator{y_B = f(x_B)}$$
where we write $\indicator{S}$ for the indicator function of the set $S$ and abbreviation $\{y = f(x)\} = \{ (x,y) \::\: y = f(x)\}$.

Beyond the category of finite sets and functions between them, we study three categories of nonlocal games corresponding to various restrictions we place on the allowed correlation sets. A \emph{hidden variables}, or simply \emph{classical}, correlation takes the form
$$p(y_A, y_B \:|\: x_A, x_B) = \sum_{\omega\in\Omega} \mu(\omega) \Pr_A(y_A \:|\: x_A, \omega) \Pr_B(y_B \:|\: x_B, \omega)$$
for some (discrete) probability space $(\Omega, \mu)$ and ``local'' conditional probabilities $\Pr_A$ and $\Pr_B$. A every synchronous classical correlation can be expressed in the form
\begin{equation}\label{equation:classical:form}
p(y_A, y_B \:|\: x_A, x_B) = \sum_{f\in\Hom(X,Y)} \mu(f) \indicator{y_A = f(x_A)}\indicator{y_B = f(x_B)}
\end{equation}
where we write $\Hom(X,Y)$ for the set of functions from $X$ to $Y$.

A \emph{quantum} correlation takes the form
$$p(y_A, y_B \:|\: x_A, x_B) = \tr(\rho(E^{x_A}_{y_A}\otimes F^{x_B}_{y_B}))$$
where $\rho$ is a density operator on a finite dimensional Hilbert space $\h(H)_A\otimes\h{H}_B$ with $\{E^x_y\}_{y\in Y}$ and $\{F^x_y\}_{y\in Y}$ being positive operator valued measures (POVMs) on $\h{H}_A$ and $\h{H}_B$ respectively for each $x\in X$. For synchronous quantum correlations it can be shown that the $E^x_y$ and $F^x_y$ must be already be projection-valued \cite{cameron2007quantum, mancinska2016quantum}. A general synchronous correlations can be expressed as a convex sum of those where $\rho$ is maximally entangled, and these latter such correlation can be further reduced to tracial states
\begin{equation}\label{equation:quantum:form}
p(y_A, y_B \:|\: x_A, x_B) = \frac{1}{d}\tr(\Pi^{x_A}_{y_A}\Pi^{x_B}_{y_B})
\end{equation}
where $\{\Pi^x_y\}_{y\in Y}$ is a projection-valued measure on a Hilbert space $\h{H}$ and $d = \dim(\h{H})$ \cite{paulsen2016estimating, lackey2017nonlocal}.

A \emph{nonsignaling} correlation \cite{popescu1994quantum} satisfies
\begin{equation}\label{equation:nonsignaling-condition}
\begin{array}{rcl}
\sum_{y_B} p(y_A, y_B \:|\: x_A, x_B) &=& \sum_{y_B} p(y_A, y_B \:|\: x_A, x_B') \text{ for all $y_A, x_A, x_B, x_B'$,}\\
\sum_{y_A} p(y_A, y_B \:|\: x_A, x_B) &=& \sum_{y_A} p(y_A, y_B \:|\: x_A', x_B) \text{ for all $y_B, x_A, x_B, x_A'$.}
\end{array}
\end{equation}
That is, the marginals over Bob's outputs does not depend on Bob's input, and similarly for Alice. In later computation we will find it convenient to express this condition as
$$\sum_{y_B} p(y_A, y_B \:|\: x_A, x_B) = \sum_{y_B} p(y_A, y_B \:|\: x_A, \cdot),$$
with ``$\cdot$'' used to emphasize that the sum does not depend on this argument.

In this work we need three technical lemmas from \cite{lackey2017nonlocal}, which we reproduce here for convenience. In fact, each of these lemmas was created for this work but proved useful to characterize synchronous Bell inequalities.

\begin{lemma}[\cite{lackey2017nonlocal}]\label{lemma:2-point-domain:nonsignaling}
    Let $Y$ be a finite set and $u = u(y_A,y_B)$ and $v = v(y_A,y_B)$ be probability distributions on $Y^2$ such that for all $y \in Y$
    \begin{enumerate}
        \item $\sum_{y'} u(y,y') = \sum_{y'} v(y',y)$ and
        \item $\sum_{y'} u(y',y) = \sum_{y'} v(y,y').$
    \end{enumerate}
    Write $\theta(y)$ and $\phi(y)$ for these two sums respectively and define
    \begin{align*}
        p(y_A,y_B\:|\:0,0) &= \indicator{y_A=y_B}\theta(y_A)\\
        p(y_A,y_B\:|\:0,1) &= u(y_A,y_B)\\
        p(y_A,y_B\:|\:1,0) &= v(y_A,y_B)\\
        p(y_A,y_B\:|\:1,1) &= \indicator{y_A=y_B}\phi(y_A).
    \end{align*}
    Then $p$ is a synchronous nonsignaling correlation. Moreover every nonsignaling correlation with domain $\{0,1\}$ arises this way.
\end{lemma}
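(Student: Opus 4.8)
The plan is to establish the two halves of the statement separately: that the displayed prescription always yields a synchronous nonsignaling correlation, and that conversely every synchronous nonsignaling $p$ with domain $\{0,1\}$ arises from such a pair $u,v$ (necessarily the pair $u(y_A,y_B)=p(y_A,y_B\:|\:0,1)$, $v(y_A,y_B)=p(y_A,y_B\:|\:1,0)$).

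For the forward direction I would first unpack the hypotheses: writing $\theta(y)$ for the common value $\sum_{y'}u(y,y')=\sum_{y'}v(y',y)$ and $\phi(y)$ for $\sum_{y'}u(y',y)=\sum_{y'}v(y,y')$, the number $\theta(y)$ is at once the $y$-th row sum of $u$ and the $y$-th column sum of $v$, while $\phi(y)$ is the $y$-th column sum of $u$ and the $y$-th row sum of $v$. With this in hand the verifications are short. Each slice is a probability distribution on $Y^2$: the $(0,1)$ and $(1,0)$ slices are $u$ and $v$ by hypothesis, while $\sum_y\theta(y)=\sum_{y,y'}u(y,y')=1$ (and $\theta\geq 0$) handles the $(0,0)$ slice, and similarly $\phi$ the $(1,1)$ slice. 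Synchronicity \eqref{equation:synchronous} is visible in the two diagonal slices. For nonsignaling \eqref{equation:nonsignaling-condition} I would compute the eight relevant marginals and pair them off; for instance $\sum_{y_A}p(y_A,y_B\:|\:x_A,0)$ equals $\theta(y_B)$ both when $x_A=0$ (from the indicator) and when $x_A=1$ (as the $y_B$-th column sum of $v$), and the other three marginal identities go through identically, each invoking one of the two descriptions of $\theta$ or $\phi$ above.

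For the converse, with $u,v$ defined as above, synchronicity collapses the diagonal marginals, $\sum_{y_B}p(y,y_B\:|\:0,0)=p(y,y\:|\:0,0)$ and $\sum_{y_A}p(y_A,y\:|\:0,0)=p(y,y\:|\:0,0)$ (and likewise at $(1,1)$), while nonsignaling lets one swap a free input, $\sum_{y_B}p(y,y_B\:|\:0,1)=\sum_{y_B}p(y,y_B\:|\:0,0)$ and $\sum_{y_A}p(y_A,y\:|\:1,0)=\sum_{y_A}p(y_A,y\:|\:0,0)$. Chaining these shows both sides of hypothesis (1) equal $p(y,y\:|\:0,0)$, and symmetrically both sides of hypothesis (2) equal $p(y,y\:|\:1,1)$; so $u,v$ satisfy the hypotheses with $\theta(y)=p(y,y\:|\:0,0)$ and $\phi(y)=p(y,y\:|\:1,1)$. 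Synchronicity of $p$ then forces $p(y_A,y_B\:|\:0,0)=\indicator{y_A=y_B}\theta(y_A)$ and the analogous $(1,1)$ identity, so $p$ is exactly the correlation built from this $u$ and $v$.

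I do not expect a real obstacle: the entire content is matching the combinatorial identities (1)--(2) against synchronicity and the two directions of nonsignaling. The only point needing care is that the two equalities in each of (1) and (2) are genuinely both used---in the forward direction different marginals call on the ``row-sum'' versus the ``column-sum'' description of $\theta$ (resp.\ $\phi$), and in the converse the value $p(y,y\:|\:0,0)$ (resp.\ $p(y,y\:|\:1,1)$) must be reached both from a sum over $y_A$ and from a sum over $y_B$, which is precisely where nonsignaling in each argument is invoked.
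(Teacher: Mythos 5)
Your proof is correct. The paper states this lemma without proof (it is reproduced from \cite{lackey2017nonlocal}), and your direct verification---checking each of the eight marginals against the ``row-sum'' versus ``column-sum'' readings of $\theta$ and $\phi$ for the forward direction, and chaining synchronicity with the two nonsignaling conditions to identify both sides of hypotheses (1) and (2) with $p(y,y\:|\:0,0)$ and $p(y,y\:|\:1,1)$ for the converse---is exactly the intended argument.
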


\begin{lemma}[\cite{lackey2017nonlocal}]\label{lemma:2-point-domain:classical}
    Let $Y$ be a finite set and $u = u(y_A,y_B)$ be a probability distribution on $Y^2$. Write
    \begin{enumerate}
        \item $\theta(y) = \sum_{y'} u(y,y')$ and
        \item $\phi(y) = \sum_{y'} u(y',y).$
    \end{enumerate}
    Define
    \begin{align*}
        p(y_A,y_B\:|\:0,0) &= \indicator{y_A=y_B}\theta(y_A)\\
        p(y_A,y_B\:|\:0,1) &= u(y_A,y_B)\\
        p(y_A,y_B\:|\:1,0) &= u(y_B,y_A)\\
        p(y_A,y_B\:|\:1,1) &= \indicator{y_A=y_B}\phi(y_A).
    \end{align*}
    Then $p$ is a synchronous classical correlation. Moreover every nonsignaling correlation with domain $\{0,1\}$ arises this way.
\end{lemma}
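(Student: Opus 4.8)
The plan is to recognize the displayed correlation as the classical correlation attached to a particular probability distribution over $\Hom(\{0,1\},Y)$, and then to read the converse off the same computation. For the forward direction, I would identify a function $f\colon\{0,1\}\to Y$ with the pair $(f(0),f(1))\in Y^2$, so that $\Hom(\{0,1\},Y)\cong Y^2$, and regard the given probability distribution $u$ on $Y^2$ as a probability distribution $\mu$ on $\Hom(\{0,1\},Y)$ via $\mu(f)=u(f(0),f(1))$. Substituting $\mu$ into the canonical form~\eqref{equation:classical:form} and expanding over the four input pairs $(0,0)$, $(0,1)$, $(1,0)$, $(1,1)$, the indicator sums collapse to $\indicator{y_A=y_B}\sum_{b}u(y_A,b)$, $u(y_A,y_B)$, $u(y_B,y_A)$, and $\indicator{y_A=y_B}\sum_{a}u(a,y_A)$ respectively, which by the definitions of $\theta$ and $\phi$ are precisely the four expressions in the statement. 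Hence $p$ has the form~\eqref{equation:classical:form}, so it is a synchronous classical correlation---synchronicity being automatic for any correlation of that form, since on the diagonal inputs its support lies in $\{y_A=y_B\}$---while nonnegativity and normalization of $p$ are inherited from those of $u$.

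For the converse I would run this computation backwards. I would first note that the construction forces $p(\cdot\:|\:1,0)$ to be the transpose of $p(\cdot\:|\:0,1)=u$, so it produces only classical correlations---a strictly smaller family than the general two-input nonsignaling correlations of Lemma~\ref{lemma:2-point-domain:nonsignaling}, where $u$ and $v$ are independent subject only to the two marginal identities. Accordingly the ``moreover'' clause must be read as: every \emph{synchronous classical} correlation with domain $\{0,1\}$ arises this way (the word ``nonsignaling'' having evidently been carried over verbatim from Lemma~\ref{lemma:2-point-domain:nonsignaling}), and this is the statement I would prove. Given such a $p$ with output set $Y$, I would invoke~\eqref{equation:classical:form} to obtain a probability distribution $\mu$ on $\Hom(\{0,1\},Y)$ and set $u(a,b)=\mu(f_{a,b})$, where $f_{a,b}$ is the function with $f_{a,b}(0)=a$ and $f_{a,b}(1)=b$. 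The forward computation, read in reverse, then shows that $p$ agrees with the correlation built from this $u$, and that $\theta(y)=\sum_{y'}u(y,y')$ and $\phi(y)=\sum_{y'}u(y',y)$ recover the marginals $\sum_{y_B}p(y,y_B\:|\:0,1)$ and $\sum_{y_A}p(y_A,y\:|\:0,1)$ that appear on the diagonal inputs.

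I do not anticipate a substantial obstacle: the whole argument is a single expansion of~\eqref{equation:classical:form} over a two-element input set, carried out once and then reversed. The only points needing care are the bookkeeping of the identification $\Hom(\{0,1\},Y)\cong Y^2$---in particular keeping track of which coordinate of $u$ plays which role once the inputs are swapped in the $(1,0)$ case---and, as flagged above, correcting ``nonsignaling'' to ``synchronous classical'' in the converse, since the displayed family is exactly the classical specialization $v=u^{\mathsf{T}}$ of Lemma~\ref{lemma:2-point-domain:nonsignaling}.
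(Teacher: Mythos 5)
Your proof is correct, and since the paper only reproduces this lemma from \cite{lackey2017nonlocal} without giving a proof, there is nothing to diverge from: identifying $\Hom(\{0,1\},Y)\cong Y^2$, expanding the canonical form (\ref{equation:classical:form}) over the four input pairs, and reversing that same computation for the converse is exactly the intended argument. You are also right to read the final clause as ``every \emph{synchronous classical} correlation with domain $\{0,1\}$ arises this way'': as printed (with ``nonsignaling'') it is false---for instance, taking $v=u$ in Lemma \ref{lemma:2-point-domain:nonsignaling} with $u$ having equal row and column marginals but $u\neq u^{\mathsf T}$ gives a synchronous nonsignaling correlation whose $(1,0)$ block is not $u(y_B,y_A)$---so the wording is a slip carried over from that lemma, and your corrected converse is the right statement to prove.
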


\begin{lemma}[\cite{lackey2017nonlocal}]\label{lemma:2-point-range:nonsignaling}
    Suppose $|X| \geq 2$ and let $w = w(x_A,x_B)$ be a nonnegative function on $X^2$ such that for every $x_A,x_B \in X$
    \begin{enumerate}
        \item $w(x_A,x_B) \leq w(x_A,x_A)$,
        \item $w(x_A,x_B) \leq w(x_B,x_B)$, and
        \item $w(x_A,x_A) + w(x_B,x_B) \leq 1 + w(x_A,x_B)$.
    \end{enumerate}
    Define
    \begin{align*}
        p(0,0\:|\:x_A,x_B) &= 1 + w(x_A,x_B) - w(x_A,x_A) - w(x_B,x_B)\\
        p(0,1\:|\:x_A,x_B) &= w(x_B,x_B) - w(x_A,x_B)\\
        p(1,0\:|\:x_A,x_B) &= w(x_A,x_A) - w(x_A,x_B)\\
        p(1,1\:|\:x_A,x_B) &= w(x_A,x_B).
    \end{align*}
    Then $p$ is a synchronous nonsignaling correlation from $X$ to $\{0,1\}$. Moreover every synchronous nonsignaling correlation from $X$ to $\{0,1\}$ arises in this way.
\end{lemma}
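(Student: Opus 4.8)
The plan is to verify the forward direction by direct computation and then establish the converse by showing that any synchronous nonsignaling correlation $p$ with range $\{0,1\}$ is completely determined by the single function $w(x_A,x_B) := p(1,1\:|\:x_A,x_B)$, which must then satisfy conditions (1)--(3).

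For the forward direction, first I would check that the four listed quantities are nonnegative: nonnegativity of $p(0,1\:|\:x_A,x_B)$ and $p(1,0\:|\:x_A,x_B)$ is exactly conditions (1) and (2), nonnegativity of $p(0,0\:|\:x_A,x_B)$ is condition (3), and nonnegativity of $p(1,1\:|\:x_A,x_B)$ is the hypothesis that $w\geq 0$. Next I would observe that the four entries sum to $1$ for each fixed $(x_A,x_B)$, so $p$ is a genuine conditional distribution. Then I would verify synchronicity: setting $x_A = x_B = x$ collapses condition (3)'s expression to $1 + w(x,x) - 2w(x,x) = 1 - w(x,x)$ for $p(0,0)$, while $p(0,1\:|\:x,x) = p(1,0\:|\:x,x) = w(x,x) - w(x,x) = 0$, so the off-diagonal outputs vanish as required by \eqref{equation:synchronous}. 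Finally I would check the nonsignaling conditions \eqref{equation:nonsignaling-condition}: the marginal $\sum_{y_B} p(y_A, y_B\:|\:x_A,x_B)$ for $y_A = 0$ is $(1 + w(x_A,x_B) - w(x_A,x_A) - w(x_B,x_B)) + (w(x_B,x_B) - w(x_A,x_B)) = 1 - w(x_A,x_A)$, which is independent of $x_B$; similarly for $y_A = 1$ one gets $w(x_A,x_A)$, and the analogous computation works for Bob's marginals. This establishes that $p$ is a synchronous nonsignaling correlation.

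For the converse, suppose $p$ is an arbitrary synchronous nonsignaling correlation from $X$ to $\{0,1\}$. I would define $w(x_A,x_B) = p(1,1\:|\:x_A,x_B)$. Nonsignaling says the marginal $\sum_{y_B} p(1, y_B\:|\:x_A,x_B)$ depends only on $x_A$; evaluating at $x_B = x_A$ and using synchronicity (which forces $p(1,0\:|\:x_A,x_A) = 0$) gives that this marginal equals $p(1,1\:|\:x_A,x_A) = w(x_A,x_A)$. Hence $p(1,0\:|\:x_A,x_B) = w(x_A,x_A) - w(x_A,x_B)$, and symmetrically $p(0,1\:|\:x_A,x_B) = w(x_B,x_B) - w(x_A,x_B)$ using Bob's nonsignaling marginal. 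The remaining entry $p(0,0\:|\:x_A,x_B)$ is then forced by normalization to be $1 + w(x_A,x_B) - w(x_A,x_A) - w(x_B,x_B)$, matching the stated formulas. Nonnegativity of each of these four recovered probabilities then yields conditions (1), (2), (3), and $w \geq 0$ respectively.

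The only subtle point — and the place I would be most careful — is making sure the two-output structure genuinely pins everything down from $w$ alone: the argument hinges on the interplay of synchronicity (killing the off-diagonal terms on the diagonal inputs) with nonsignaling (propagating the diagonal marginal value to all inputs). The hypothesis $|X|\geq 2$ is needed only so that conditions (1)--(3) are non-vacuous; for $|X| = 1$ the statement degenerates. No deep machinery is required — this is a bookkeeping lemma — but the bookkeeping must be done in exactly the right order, establishing the marginals before inverting for the individual probabilities.
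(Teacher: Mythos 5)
Your proof is correct and complete: the forward direction is the straightforward check of nonnegativity, normalization, synchronicity, and the nonsignaling marginals, and your converse correctly uses synchronicity on the diagonal inputs together with nonsignaling to pin down $p(1,0\:|\:x_A,x_B)$ and $p(0,1\:|\:x_A,x_B)$ from $w(x_A,x_B) = p(1,1\:|\:x_A,x_B)$, with normalization forcing the remaining entry. Note that the paper itself gives no proof of this lemma---it is reproduced from \cite{lackey2017nonlocal}---so there is nothing here to compare against, but your argument is exactly the direct verification one would expect.
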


\end{section}

\begin{section}{Categories of synchronous nonlocal games}

In this work we aim to understand the structure of certain synchronous correlations when treated as homomorphisms in a category. Our prototype is the category \category{FinSet} whose objects are finite sets and morphisms are functions between them. Above we used the notation $\Hom(X,Y)$ for the set of functions between $X$ and $Y$, which is consistent with our use here.

It is currently fashionable to ``categorify'' quantum foundations \cite{abramsky2009categorical, baez2006quantum, chiribella2011informational, coecke2016generalised, tull2016operational}, semantics \cite{coecke2016categories, dixon2009graphical, hines2013types, selinger2010survey}, protocols \cite{abramsky2004categorical, vicary2012higher}, and computation \cite{lee2015computation, barnum2018oracles} as a form of generalized probability theory and resource theory, \cite{chiribella2016operational, coecke2013resource, coecke2016mathematical}. The symmetric monoidal category \cite{maclane1963natural} is the core construction in this approach. While we will not dwell on the monoid product in this work, as it serves as our motivation for our study we simply indicate that in the category \category{FinSet} it is the Cartesian product that acts as a binary operation $X\otimes Y := X \times Y$ 
\begin{enumerate}
    \item being associative $(X\otimes Y)\otimes Z \cong X \otimes (Y\otimes Z)$,
    \item being commutative $X\otimes Y \cong Y \otimes X$, and
    \item having an identity $I$ where $I \otimes X \cong X \cong X \otimes I$.
\end{enumerate}
It is important to note that these monoidal structures are only required to hold up to isomorphism. For example, in \category{FinSet} the identity is any choice of singleton set. Categories where we have equality in properties (1) and (3) are called ``strict'' monoidal categories. For example, the category of Hilbert spaces and isometric maps is strict under the tensor product.

Here we enlarge the category \category{FinSet} by taking $\Hom(X,Y)$ to include not only functions from $X$ to $Y$, but also nonlocal games with input set $X$ and output set $Y$. Specifically, let $\Hom^S(X,Y)$ be the set of nonlocal games from $X$ to $Y$ with synchronous correlations and we define $\Hom^S_{HV}(X,Y)$, $\Hom^S_Q(X,Y)$, and $\Hom^S_{NS}(X,Y)$ to be the subsets of $\Hom^S(X,Y)$ whose correlations are classical, quantum, and nonsignaling respectively. Clearly
$$\Hom(X,Y) \subseteq \Hom^S_{HV}(X,Y) \subseteq \Hom^S_Q(X,Y) \subseteq \Hom^S_{NS}(X,Y) \subseteq \Hom^S(X,Y).$$
In order that these be the sets of morphisms for categories (in each case whose objects are finite sets), we need to define a composition that is closed and associative. Note that the identity function $\id_X\in\Hom(X,X)$ already exists in each of these hom-sets and so can serve as the identity in our enlarged categories.

We take the most natural composition rule: Kolmogorov's Law. Namely, given correlations
$$p(y_A, y_B \:|\: x_A, x_B) \text{ and } p(z_A, z_B \:|\: y_A, y_B)$$
we define the composition as
$$(q\circ p)(z_A, z_B \:|\: x_A, x_B) = \sum_{y_A, y_B} q(z_A, z_B \:|\: y_A, y_B) p(y_A, y_B \:|\: x_A, x_B).$$
Any correlation is associated to a $(|Y|^2\times |X|^2)$-column stochastic matrix defined by $P = [p(y_A, y_B \:|\: x_A, x_B)]$. The Kolmogorov Law above is just matrix multiplication of the associated stochastic matrices and is therefore associative. The challenge is to show that this composition is closed in each of our categories. We do this case by case as follows.

\begin{lemma}
    The composition of synchronous correlations is synchronous.
\end{lemma}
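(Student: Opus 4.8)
The plan is to verify the synchronous condition \eqref{equation:synchronous} directly for the composite correlation $q\circ p$ using the definition via Kolmogorov's Law. Suppose $p\in\Hom^S(X,Y)$ and $q\in\Hom^S(Y,Z)$ are synchronous, and fix an input $x\in X$ together with outputs $z_A\neq z_B$ in $Z$. I want to show $(q\circ p)(z_A,z_B\:|\:x,x) = 0$. Expanding the definition,
\begin{equation*}
(q\circ p)(z_A,z_B\:|\:x,x) = \sum_{y_A,y_B} q(z_A,z_B\:|\:y_A,y_B)\,p(y_A,y_B\:|\:x,x).
\end{equation*}
Because $p$ is synchronous, the factor $p(y_A,y_B\:|\:x,x)$ vanishes whenever $y_A\neq y_B$, so the sum collapses to the diagonal terms $y_A=y_B=:y$, giving $\sum_{y} q(z_A,z_B\:|\:y,y)\,p(y,y\:|\:x,x)$.

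Now the first factor in each surviving term is $q(z_A,z_B\:|\:y,y)$, which is an instance of $q$ evaluated on a repeated input $y\in Y$; since $q$ is synchronous and $z_A\neq z_B$, this is $0$ for every $y$. Hence every term in the collapsed sum is zero, and $(q\circ p)(z_A,z_B\:|\:x,x)=0$, which is exactly the synchronous condition for $q\circ p$. One should also note in passing that $q\circ p$ is again a genuine correlation (a column-stochastic matrix), which is immediate since it is the product of two column-stochastic matrices; this is what makes the statement meaningful as a closure property rather than merely an identity about a formal expression.

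There is essentially no obstacle here: the argument is a two-line computation exploiting the fact that the synchronous condition is a statement about the diagonal blocks of the stochastic matrix, and Kolmogorov composition on a diagonal input only ever feeds diagonal inputs into the second correlation once the first has been forced onto its diagonal. The only thing to be slightly careful about is the logical structure of the collapse — that we use synchrony of $p$ to kill off-diagonal $(y_A,y_B)$ and then synchrony of $q$ to kill the diagonal ones — but neither step requires any of the refined structural forms \eqref{equation:classical:form}, \eqref{equation:quantum:form}, or Lemmas~\ref{lemma:2-point-domain:nonsignaling}--\ref{lemma:2-point-range:nonsignaling}; those will be needed only for the subsequent case-by-case closure results for the classical, quantum, and nonsignaling sub-hom-sets.
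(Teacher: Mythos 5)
Your argument is correct and is exactly the paper's proof: expand the Kolmogorov composition on a repeated input $x$, use synchrony of $p$ to collapse the sum to the diagonal $y_A=y_B=y$, then use synchrony of $q$ to kill each surviving term when $z_A\neq z_B$. No issues.
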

\begin{proof}
    This is a simple computation based on (\ref{equation:synchronous}):
    \begin{align*}
        (q\circ p)(z_A, z_B \:|\: x, x) &= \sum_{y_A, y_B} q(z_A, z_B \:|\: y_A, y_B) p(y_A, y_B \:|\: x, x)\\
        &= \sum_{y} q(z_A, z_B \:|\: y, y) p(y, y \:|\: x, x)\ =\ 0
    \end{align*}
    when $z_A \not= z_B$.
\end{proof}

\begin{proposition}
    The composition of classical correlation is classical.
\end{proposition}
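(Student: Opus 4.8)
The plan is to use the explicit form (\ref{equation:classical:form}) of synchronous classical correlations and show that the composition can again be written in that form. Suppose
$$p(y_A,y_B\:|\:x_A,x_B) = \sum_{f\in\Hom(X,Y)} \mu(f)\indicator{y_A = f(x_A)}\indicator{y_B = f(x_B)}$$
and
$$q(z_A,z_B\:|\:y_A,y_B) = \sum_{g\in\Hom(Y,Z)} \nu(g)\indicator{z_A = g(y_A)}\indicator{z_B = g(y_B)}.$$
First I would substitute these into Kolmogorov's Law and exchange the (finite) sums, obtaining $(q\circ p)(z_A,z_B\:|\:x_A,x_B)$ as a double sum over $f$ and $g$ of $\mu(f)\nu(g)$ times $\sum_{y_A,y_B}\indicator{z_A = g(y_A)}\indicator{z_B = g(y_B)}\indicator{y_A = f(x_A)}\indicator{y_B = f(x_B)}$. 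The inner sum over $y_A,y_B$ collapses because the last two indicators force $y_A = f(x_A)$ and $y_B = f(x_B)$, leaving $\indicator{z_A = g(f(x_A))}\indicator{z_B = g(f(x_B))} = \indicator{z_A = (g\circ f)(x_A)}\indicator{z_B = (g\circ f)(x_B)}$.

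Next I would push the sum forward along the composition map $\Hom(X,Y)\times\Hom(Y,Z)\to\Hom(X,Z)$, $(f,g)\mapsto g\circ f$. Defining
$$\lambda(h) = \sum_{\substack{f\in\Hom(X,Y),\ g\in\Hom(Y,Z)\\ g\circ f = h}} \mu(f)\nu(g)$$
for each $h\in\Hom(X,Z)$, this is a probability distribution on $\Hom(X,Z)$ since the $\mu(f)\nu(g)$ are nonnegative and sum to $1$ over all pairs $(f,g)$. Then $(q\circ p)(z_A,z_B\:|\:x_A,x_B) = \sum_{h\in\Hom(X,Z)}\lambda(h)\indicator{z_A = h(x_A)}\indicator{z_B = h(x_B)}$, which is exactly the form (\ref{equation:classical:form}), so $q\circ p$ is a synchronous classical correlation.

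There is no real obstacle here: everything is a finite manipulation of indicator functions and reindexing of finite sums, and the key algebraic fact is simply that classical strategies compose as functions, $(g\circ f)(x) = g(f(x))$, consistently for both Alice and Bob because they use the same (shared) $f$ and $g$. The only point worth stating carefully is that one must invoke the normal-form characterization (\ref{equation:classical:form}) rather than the raw hidden-variable definition; working directly with a product measure $\mu\otimes\nu$ on $\Omega_p\times\Omega_q$ and local conditionals would also work, but the function form makes the collapse of the inner sum transparent.
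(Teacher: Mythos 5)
Your proof is correct, but it takes a different route from the paper's. The paper works directly with the raw hidden-variable definition: it forms the product probability space $(\Omega\times\Upsilon,\mu\times\nu)$ and defines Alice's new local conditional as the Kolmogorov composition $\Pr_A(z_A\:|\:x_A,\omega,\upsilon)=\sum_{y_A}\Pr_A^{(q)}(z_A\:|\:y_A,\upsilon)\Pr_A^{(p)}(y_A\:|\:x_A,\omega)$ (and similarly for Bob), then checks that the double sum factors accordingly. You instead invoke the synchronous normal form (\ref{equation:classical:form}), collapse the inner sum over $(y_A,y_B)$ using the deterministic indicators, and push the product distribution $\mu\times\nu$ forward along the composition map $(f,g)\mapsto g\circ f$. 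Both arguments are finite-sum manipulations and both are valid. The trade-off: your version depends on the normal-form characterization, which is a nontrivial fact special to \emph{synchronous} classical correlations, but in exchange it exhibits the composite explicitly in the form (\ref{equation:classical:form}) and so delivers synchronicity of $q\circ p$ for free, whereas the paper's argument applies to arbitrary (not necessarily synchronous) hidden-variable correlations and handles synchronicity of the composite in a separate lemma. For the category $\category{FinSet}^S_{HV}$, where all morphisms are synchronous, either proof suffices.
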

\begin{proof}
    Let $p,q$ be classical correlations with
    \begin{align*}
        p(y_A, y_B \:|\: x_A, x_B) &= \sum_{\omega\in\Omega} \mu(\omega) \Pr^{(p)}_A(y_A \:|\: x_A, \omega) \Pr^{(p)}_B(y_B \:|\: x_B, \omega) \text{ and}\\
        q(z_A, z_B \:|\: y_A, y_B) &= \sum_{\upsilon\in\Upsilon} \nu(\upsilon) \Pr^{(q)}_A(z_A \:|\: y_A, \upsilon) \Pr^{(q)}_B(z_B \:|\: y_B, \upsilon).
    \end{align*}
    Take the probability space $(\Xi,\rho) = (\Omega\times\Upsilon, \mu\times\nu)$ and define Alice's local correlation as
    $$\Pr_A(z_A \:|\: x_A,\omega,\upsilon) = \sum_{y_A} \Pr_A^{(q)}(z_A \:|\: y_A, \upsilon)\Pr^{(p)}_A(y_A \:|\: x_A, \omega),$$
    and similarly for Bob. Then
    \begin{align*}
        (q\circ p)(z_A, z_B \:|\: x_A, x_B) &= \sum_{y_A, y_B} q(z_A, z_B \:|\: y_A, y_B) p(y_A, y_B \:|\: x_A, x_B)\\
        &= \sum_{\omega\in\Omega, \upsilon\in\Upsilon} \mu(\omega)\nu(\upsilon) \sum_{y_A} \Pr_A^{(q)}(z_A \:|\: y_A, \upsilon)\Pr^{(p)}_A(y_A \:|\: x_A, \omega)\\
        & \qquad\qquad\qquad \cdot\ \sum_{y_B}\Pr_B^{(q)}(z_B \:|\: y_B, \upsilon)\Pr^{(p)}_B(y_B \:|\: x_B, \omega)\\
        &= \sum_{(\omega,\upsilon)\in\Xi} \rho(\omega,\nu)\Pr_A(z_A \:|\: x_A,\omega,\upsilon)\Pr_B(z_B \:|\: x_B,\omega,\upsilon).
    \end{align*}
\end{proof}

\begin{proposition}
    The composition of quantum correlations is quantum.
\end{proposition}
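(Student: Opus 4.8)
The plan is to mimic the classical case, forming a tensor product of the two witnessing Hilbert spaces and defining a composite PVM by ``summing over the middle index,'' just as Kolmogorov's law prescribes. Concretely, suppose $p$ is realized on $\h{H}$ by projections $\{\Pi^x_y\}_{y\in Y}$ in the tracial form (\ref{equation:quantum:form}), so $p(y_A,y_B\:|\:x_A,x_B) = \frac{1}{d}\tr(\Pi^{x_A}_{y_A}\Pi^{x_B}_{y_B})$ with $d=\dim(\h{H})$, and $q$ is realized on $\h{K}$ by projections $\{Q^y_z\}_{z\in Z}$ with $e=\dim(\h{K})$. On the space $\h{H}\otimes\h{K}$ I would define, for each $x\in X$,
$$R^x_z \ :=\ \sum_{y\in Y} \Pi^x_y \otimes Q^y_z.$$
First I would check this is a projection-valued measure over $z\in Z$: the sum $\sum_z R^x_z = \sum_y \Pi^x_y \otimes (\sum_z Q^y_z) = \sum_y \Pi^x_y \otimes \id = \id$ since $\{\Pi^x_y\}$ resolves the identity; and because the $\Pi^x_y$ are pairwise orthogonal for fixed $x$, one gets $R^x_z R^x_{z'} = \sum_{y,y'} \Pi^x_y\Pi^x_{y'} \otimes Q^y_z Q^{y'}_{z'} = \sum_y \Pi^x_y \otimes Q^y_z Q^y_{z'}$, which is $\delta_{z,z'} R^x_z$ and in particular self-adjoint and idempotent when $z=z'$.

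Next I would compute the tracial correlation that $\{R^x_z\}$ produces on $\h{H}\otimes\h{K}$ with $\dim(\h{H}\otimes\h{K}) = de$. Using $\tr_{\h{H}\otimes\h{K}}(A\otimes B) = \tr_{\h{H}}(A)\tr_{\h{K}}(B)$ and orthogonality again,
$$\frac{1}{de}\tr\big(R^{x_A}_{z_A} R^{x_B}_{z_B}\big) \ =\ \frac{1}{de}\sum_{y_A,y_B}\tr\big(\Pi^{x_A}_{y_A}\Pi^{x_B}_{y_B}\big)\,\tr\big(Q^{y_A}_{z_A} Q^{y_B}_{z_B}\big) \ =\ \sum_{y_A,y_B} p(y_A,y_B\:|\:x_A,x_B)\,q(z_A,z_B\:|\:y_A,y_B),$$
which is exactly $(q\circ p)(z_A,z_B\:|\:x_A,x_B)$. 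This shows the composite is a synchronous quantum correlation of the reduced tracial form, hence lies in $\Hom^S_Q(X,Z)$.

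The one genuine subtlety I need to address is that the tracial form (\ref{equation:quantum:form}) is only available \emph{after} the reduction discussed in the text: a general synchronous quantum correlation is a convex combination of maximally-entangled ones, each of which further reduces to the tracial form. So before applying the construction above I would first reduce $p$ and $q$ individually to convex sums $p = \sum_i \lambda_i p_i$ and $q = \sum_j \kappa_j q_j$ of tracial-form correlations; since composition (matrix multiplication) is bilinear, $q\circ p = \sum_{i,j}\lambda_i\kappa_j\,(q_j\circ p_i)$, each summand is quantum by the tensor-product argument, and the set of quantum correlations is convex (direct-sum the Hilbert spaces, or just invoke that a convex combination of quantum correlations is quantum), so the composite is quantum. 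The main obstacle is thus bookkeeping rather than conceptual: making sure the reduction to tracial form is invoked correctly and that the resulting $R^x_z$ really are projections, which hinges entirely on the pairwise orthogonality of the $\Pi^x_y$ for fixed $x$ (i.e.\ that synchronous quantum correlations are projection-valued, as cited). No new ideas beyond the classical proof are needed; one simply replaces ``marginalize the hidden variable'' with ``partial trace over $\h{K}$'' and ``product of local probabilities'' with ``tensor product of projections.''
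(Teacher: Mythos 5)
Your proof is correct, and its central construction is the same as the paper's: the composite measurement operators are obtained by summing over the intermediate index, $\bar{E}^x_z = \sum_{y} E^x_y \otimes M^y_z$. The difference is which representation of a quantum correlation you start from. The paper works directly with the general bipartite POVM form $p(y_A,y_B\:|\:x_A,x_B) = \tr(\rho(E^{x_A}_{y_A}\otimes F^{x_B}_{y_B}))$, forms $\rho\otimes\sigma$ (suitably reordered by a swap), and only needs to check that the composite operators are POVMs; no reduction to tracial form is invoked, and indeed the argument never uses synchronicity, so it shows that composition of arbitrary quantum correlations is quantum. You instead start from the reduced tracial form (\ref{equation:quantum:form}), which buys a stronger conclusion for those pieces --- your $R^x_z$ are again projections and the composite is again of tracial form on $\h{H}\otimes\h{K}$ --- but at the cost of two extra ingredients: the (cited, nontrivial) reduction of a synchronous quantum correlation to a convex combination of tracial-form correlations, and the convexity of the synchronous quantum correlation set, glued together by bilinearity of composition. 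Both ingredients are legitimate (convexity follows from the direct-sum construction you mention), so your argument goes through; it is simply a longer road to the same destination, and the paper's POVM-level version is preferable if closure under composition is all that is needed.
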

\begin{proof}
    Let $p,q$ be quantum correlations with
    \begin{align*}
        p(y_A, y_B \:|\: x_A, x_B) &= \tr(\rho(E^{x_A}_{y_A}\otimes F^{x_B}_{y_B})) \text{ and}\\
        q(z_A, z_B \:|\: y_A, y_B) &= \tr(\sigma(M^{y_A}_{z_A}\otimes N^{y_B}_{z_B})),
    \end{align*}
    where $\rho$ is a density operator on $\h{H}_A \otimes \h{H}_B$ and $\sigma$ on $\h{K}_A \otimes \h{K}_B$, and $\{E^x_y\}$, $\{F^x_y\}$, $\{M^x_y\}$, and $\{N^x_y\}$ are POVMs on the appropriate spaces. Then $\rho\otimes\sigma$ is a density operator on $\h{H}_A \otimes \h{H}_B \otimes \h{K}_A \otimes \h{K}_B$ however to align the Hilbert spaces we want the density $R = S^{-1}(\rho\times\sigma)S$ where $S:\h{H}_A \otimes \h{K}_A \otimes \h{H}_B \otimes \h{K}_B \to \h{H}_A \otimes \h{H}_B \otimes \h{K}_A \otimes \h{K}_B$ is the natural swap transformation.
    
    Next consider the family of operators $\{\bar{E}^x_z\}_{z\in Z}$ (for each $x\in X$) given by
    $$\bar{E}^x_z = \sum_{y\in Y} E^x_y \otimes M^y_z.$$
    These are nonnegative operators on $\h{H}_A\otimes\h{K}_A$; we show these form POVMs by computing:
    \begin{align*}
        \sum_{z\in Z} \bar{E}^x_z &= \sum_{y \in Y} E^x_y \otimes \sum_{z\in Z} M^y_z\\
        &= \sum_{y \in Y} E^x_y \otimes \openone_{\h{K}_A}\\
        &= \openone_{\h{H}_A}\otimes \openone_{\h{K}_A}.
    \end{align*}
    Identically, $\bar{F}^x_z = \sum_{y\in Y} F^x_y\otimes N^y_z$ define POVMs on $\h{H}_B \otimes \h{K}_B$.
    
    Finally we compute
    \begin{align*}
        \tr(R(\bar{E}^{x_A}_{z_A}\times \bar{F}^{x_B}_{z_B})) &= \tr(S^{-1}(\rho\otimes\sigma)S(\bar{E}^{x_A}_{z_A}\times \bar{F}^{x_B}_{z_B}))\\
        &= \sum_{y_A, y_B} \tr((\rho\otimes\sigma)S(E^{x_A}_{y_A} \otimes F^{x_B}_{y_B} \otimes M^{y_A}_{z_A} \times N^{y_B}_{z_B})S^{-1})\\
        &= \sum_{y_A, y_B} \tr((\rho\otimes\sigma)(E^{x_A}_{y_A} \otimes M^{y_A}_{z_A} \otimes F^{x_B}_{y_B} \otimes N^{y_B}_{z_B}))\\
        &= \sum_{y_A, y_B} \tr(\rho(E^{x_A}_{y_A}\otimes F^{x_B}_{y_B}))\tr(\sigma(M^{y_A}_{z_A}\otimes N^{y_B}_{z_B}))\\
        &= (q\circ p)(z_A, z_B \:|\: x_A, x_B).
    \end{align*}
\end{proof}

\begin{proposition}
    The composition of nonsignaling correlations is nonsignaling.
\end{proposition}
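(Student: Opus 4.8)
The plan is to verify the two nonsignaling equations of (\ref{equation:nonsignaling-condition}) directly for $q\circ p$; the ``$\cdot$'' notation introduced above makes this essentially mechanical. By the left--right symmetry of the construction it suffices to show that $\sum_{z_B}(q\circ p)(z_A,z_B\:|\:x_A,x_B)$ does not depend on $x_B$, since the corresponding statement for Alice follows by interchanging the roles of the two players throughout.

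First I would expand the definition of the Kolmogorov composition and interchange the two finite sums:
$$\sum_{z_B}(q\circ p)(z_A,z_B\:|\:x_A,x_B) = \sum_{y_A,y_B}\Big(\sum_{z_B} q(z_A,z_B\:|\:y_A,y_B)\Big)\,p(y_A,y_B\:|\:x_A,x_B).$$
The nonsignaling condition for $q$ says the inner sum equals $\sum_{z_B} q(z_A,z_B\:|\:y_A,\cdot)$, that is, it is independent of $y_B$; pulling it outside the $y_B$-summation gives
$$\sum_{y_A}\Big(\sum_{z_B} q(z_A,z_B\:|\:y_A,\cdot)\Big)\Big(\sum_{y_B} p(y_A,y_B\:|\:x_A,x_B)\Big).$$
Now the nonsignaling condition for $p$ says $\sum_{y_B} p(y_A,y_B\:|\:x_A,x_B) = \sum_{y_B} p(y_A,y_B\:|\:x_A,\cdot)$ is independent of $x_B$, so the entire expression is a function of $z_A$ and $x_A$ only, which is exactly the first line of (\ref{equation:nonsignaling-condition}) for $q\circ p$. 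Running the identical computation with $A$ and $B$ exchanged, and using the second nonsignaling equations for $p$ and $q$, yields the second line.

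For completeness I would also remark that $q\circ p$ is again a legitimate correlation: it is a product of column-stochastic matrices and hence nonnegative and normalized. There is no genuine obstacle in this argument; the only point requiring care is that the decoupling proceeds in two stages---one first uses $q$'s nonsignaling property to free the inner marginal from its dependence on $y_B$, and only then is one entitled to carry out the sum of $p$ over $y_B$ and invoke $p$'s own nonsignaling property. Attempting this in the other order, or trying to eliminate $x_B$ before summing over $z_B$, does not go through.
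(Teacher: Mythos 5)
Your argument is correct and is essentially identical to the paper's proof: both expand the Kolmogorov composition, use $q$'s nonsignaling condition to decouple the $z_B$-marginal from $y_B$, then sum $p$ over $y_B$ and invoke $p$'s nonsignaling condition to remove the $x_B$-dependence, with the second condition following by symmetry. No gaps.
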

\begin{proof}
    We simply compute
    \begin{align*}
        \sum_{z_B} (q\circ p)(z_A, z_B \:|\: x_A, x_B) &= \sum_{y_A, y_B, z_B} q(z_A, z_B \:|\: y_A, y_B) p(y_A, y_B \:|\: x_A, x_B)\\
        &= \sum_{y_A, z_B} q(z_A, z_B \:|\: y_A, \cdot) \sum_{y_B} p(y_A, y_B \:|\: x_A, x_B)\\
        &= \sum_{y_A, z_B} q(z_A, z_B \:|\: y_A, \cdot) \sum_{y_B} p(y_A, y_B \:|\: x_A, \cdot),
    \end{align*}
    where as above we use ``$\cdot$'' to indicate the formula does not depend on this argument. This is the first of the nonsignaling conditions (\ref{equation:nonsignaling-condition}). The other follows identically.
\end{proof}

Write $\category{FinSet}^S_{HV}$, $\category{FinSet}^S_Q$, $\category{FinSet}^S_{NS}$, and $\category{FinSet}^S$ for the categories whose objects are the finite sets and morphisms are $\Hom^S_{HV}(X,Y)$, $\Hom^S_Q(X,Y)$, $\Hom^S_{NS}(X,Y)$, and $\Hom^S(X,Y)$ respectively.

\end{section}

\begin{section}{Injectivity}

In general categories there are two notions of injective homomorphisms: sections (left invertible) and monomorphisms (left cancelable). Every section is a monomorphism in generality; in $\category{FinSet}$ both of these notions are equivalent to a function being one-to-one. Recall that in a general category $\category{C}$ a morphism $f\in\Hom_\category{C}(A,B)$ is a \emph{section} if there exists a $g \in \Hom_\category{C}(B,A)$ with $g\circ f = \mathrm{id}_A$. We say $f$ is a \emph{monomorphism} if whenever $g,h\in \Hom_\category{C}(Z,A)$ satisfy $f\circ g = f \circ h$ then $g = h$. A natural question is to whether there are any other functions that can be inverted, if the inverse is only required to be a synchronous nonlocal game. The answer is no as proven in the next result.

\begin{lemma}\label{lemma:synchronous:section}
    Suppose $p\in\Hom(X,Y)$ and $q\in\Hom^S(Y,X)$ with $q\circ p = \id_X$. Then $p$ corresponds to a one-to-one function.
\end{lemma}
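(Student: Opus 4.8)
The plan is to unwind the composition explicitly and then let the synchronicity of $q$ do the work. Since $p\in\Hom(X,Y)$, it is the deterministic correlation attached to some function $f\colon X\to Y$, namely $p(y_A,y_B\:|\:x_A,x_B)=\indicator{y_A=f(x_A)}\indicator{y_B=f(x_B)}$. Substituting this into Kolmogorov's Law, the sum over $y_A,y_B$ collapses to the single surviving term $y_A=f(x_A)$, $y_B=f(x_B)$, so
\[
(q\circ p)(z_A,z_B\:|\:x_A,x_B) \;=\; q\big(z_A,z_B \:|\: f(x_A),f(x_B)\big).
\]
The hypothesis $q\circ p=\id_X$ therefore reads $q(z_A,z_B\:|\:f(x_A),f(x_B))=\indicator{z_A=x_A}\indicator{z_B=x_B}$ for all $x_A,x_B,z_A,z_B\in X$.

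Next I would argue by contradiction. Suppose $f$ is not one-to-one, so there are distinct $x_A\neq x_B$ in $X$ with $f(x_A)=f(x_B)=:y$. Setting $z_A=x_A$ and $z_B=x_B$ in the displayed identity yields $q(x_A,x_B\:|\:y,y)=1$. But $q\in\Hom^S(Y,X)$ is synchronous, so (\ref{equation:synchronous}) forces $q(x_A,x_B\:|\:y,y)=0$ since $x_A\neq x_B$ --- a contradiction. Hence $f$ is injective, i.e.\ $p$ corresponds to a one-to-one function.

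I do not expect a genuine obstacle here; the only point requiring care is recognizing that membership in $\Hom(X,Y)$, as opposed to one of the larger hom-sets, forces $p$ to be deterministic, so that the composite degenerates to an evaluation of $q$ along the diagonal pair $(f(x_A),f(x_B))$ --- precisely the place where synchronicity of $q$ can be invoked. In particular, no property of $q$ beyond synchronicity (neither classicality, quantumness, nor nonsignaling) enters the argument, which is exactly why the statement is phrased for arbitrary $q\in\Hom^S(Y,X)$ and shows that enlarging the class of allowed inverses does not make any new functions left-invertible.
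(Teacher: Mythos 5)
Your argument is correct and is essentially identical to the paper's proof: both substitute the deterministic form of $p$ into Kolmogorov's Law to reduce the composite to $q(\cdot,\cdot\:|\:f(x_A),f(x_B))$, then invoke synchronicity of $q$ at a collision $f(x_A)=f(x_B)$ to force $x_A=x_B$. No gaps.
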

\begin{proof}
    Let $p,q$ be as stated and write $f:X\to Y$ for the function association to $p$. Writing out $q\circ p = \id_X$ we have for all $(x_A,x_B)\in X^2$ that
    \begin{align*}
        \indicator{x_A = x_A'}\indicator{x_B = x_B'} &= \sum_{y_A, y_B} q(x_A', x_B' \:|\: y_A, y_B) p(y_A, y_B \:|\: x_A, x_B) \\
        &= \sum_{y_A, y_B} q(x_A', x_B' \:|\: y_A, y_B) \indicator{y_A = f(x_A)}\indicator{y_B = f(x_B)} \\
        &= q(x_A', x_B' \:|\: f(x_A), f(x_B)).
    \end{align*}
    Now suppose $f(x_0) = f(x_1)$. Inserting this into the above we get
    $$1 = q(x_0,x_1 \:|\: f(x_0), f(x_1)).$$
    However $q$ is synchronous and hence $x_0 = x_1$.
\end{proof}

\begin{theorem}\label{theorem:section:synchronous}
    In $\category{FinSet}^S$ the sections are precisely the deterministic correlations
    $$p(y_A, y_B|x_A, x_B) = \indicator{y_A = f_A(x_A,x_B)}\indicator{y_B = f_B(x_A,x_B)}$$
    such that $F = (f_A,f_B):X^2 \to Y^2$ is one-to-one and $f_A(x_A,x_B) = f_B(x_A,x_B)$ if and only if $x_A = x_B$.
\end{theorem}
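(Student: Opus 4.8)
The statement is an equivalence, so I would prove the two directions separately, with essentially all of the effort in the forward one. For the reverse direction, suppose $p(y_A,y_B\:|\:x_A,x_B) = \indicator{y_A = f_A(x_A,x_B)}\indicator{y_B = f_B(x_A,x_B)}$ with $F = (f_A,f_B)\colon X^2 \to Y^2$ one-to-one and $f_A(x_A,x_B) = f_B(x_A,x_B)$ if and only if $x_A = x_B$. First I would check $p \in \Hom^S(X,Y)$: on a repeated input the diagonal hypothesis gives $f_A(x,x) = f_B(x,x)$, so $p(y_A,y_B\:|\:x,x) = 0$ unless $y_A = y_B$. Then I would exhibit a left inverse $q \in \Hom^S(Y,X)$ by setting $q(z_A,z_B\:|\:y_A,y_B) = \indicator{(z_A,z_B) = F^{-1}(y_A,y_B)}$ for $(y_A,y_B)$ in the image of $F$ --- well defined by injectivity --- and extending off the image in any way that keeps $q$ synchronous, for instance sending every remaining pair to a fixed $(x_0,x_0)$. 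Here the diagonal hypothesis is exactly what makes $q$ synchronous: every repeated pair $(y,y)$ in the image of $F$ is $F(x,x)$ for some $x$, so $q(\cdot\:|\:y,y)$ is supported on the diagonal of $X^2$. A one-line computation then gives $(q\circ p)(x_A',x_B'\:|\:x_A,x_B) = q(x_A',x_B'\:|\:f_A(x_A,x_B),f_B(x_A,x_B)) = \indicator{x_A' = x_A}\indicator{x_B' = x_B}$, so $p$ is a section.

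For the forward direction, suppose $p \in \Hom^S(X,Y)$ has a left inverse $q \in \Hom^S(Y,X)$, and write $q\circ p = \id_X$ coordinate-wise:
$$\indicator{x_A' = x_A}\indicator{x_B' = x_B} \;=\; \sum_{y_A,y_B} q(x_A',x_B'\:|\:y_A,y_B)\,p(y_A,y_B\:|\:x_A,x_B).$$
Setting $(x_A',x_B') = (x_A,x_B)$ writes $1$ as a convex combination, with weights $p(\cdot\:|\:x_A,x_B)$, of numbers $q(x_A,x_B\:|\:y_A,y_B)\in[0,1]$, hence $q(x_A,x_B\:|\:y_A,y_B) = 1$ for every $(y_A,y_B)$ in the support of $p(\cdot\:|\:x_A,x_B)$; setting $(x_A',x_B')\neq(x_A,x_B)$ kills those coordinates of $q$. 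Thus $q(z_A,z_B\:|\:y_A,y_B) = \indicator{(z_A,z_B) = (x_A,x_B)}$ whenever $(y_A,y_B) \in \mathrm{supp}\,p(\cdot\:|\:x_A,x_B)$, and in particular these supports are pairwise disjoint over distinct inputs. Granting that each $p(\cdot\:|\:x_A,x_B)$ is a point mass, write $F(x_A,x_B) = (f_A(x_A,x_B),f_B(x_A,x_B))$ for its support: then $F$ is one-to-one because two inputs with a common support point would be sent by $q$ to both; $f_A(x,x) = f_B(x,x)$ because $p$ is synchronous on repeated inputs; and if $f_A(x_A,x_B) = f_B(x_A,x_B) = y$ then $q(z_A,z_B\:|\:y,y) = \indicator{(z_A,z_B) = (x_A,x_B)}$, which by synchronicity of $q$ forces $x_A = x_B$.

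The step I expect to be the main obstacle is showing that a section must be deterministic. My plan is to feed the rigid description of $q$ obtained above back into $q\circ p = \id_X$: once $q$ is pinned down on $\bigcup_{x_A,x_B}\mathrm{supp}\,p(\cdot\:|\:x_A,x_B)$, and since synchronicity of $q$ confines any repeated pair $(y,y)$ in that union to a coordinate coming from a repeated input, one can play the two synchronicity conditions --- for $p$ on repeated inputs and for $q$ on repeated outputs --- against the identity equations on the diagonal inputs and on the reflected inputs $(x_B,x_A)$ to conclude that no input can carry two distinct output pairs in its support. Keeping straight which output pair may occur in which input's support is, I expect, the fiddly heart of the argument.
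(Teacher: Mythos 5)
Your reverse direction, and the parts of the forward direction you actually carry out in full---the injectivity of $F$, the two diagonal conditions, and the construction of the left inverse---coincide with the paper's argument and are fine. The genuine gap is precisely the step you flag as ``the main obstacle'': you never prove that each $p(\cdot\:|\:x_A,x_B)$ is a point mass; you only ``grant'' it and then describe a plan. The paper disposes of this in one line: writing $\mu(y_A,y_B)=p(y_A,y_B\:|\:x_A,x_B)$, the identity equation exhibits the point mass $\indicator{x_A'=x_A}\indicator{x_B'=x_B}$ as the image of $\mu$ under the stochastic map $q$, and the paper invokes concavity of entropy to conclude $0=H(\indicator{x_A'=x_A}\indicator{x_B'=x_B})\geq H(\mu)$, forcing $\mu$ to be deterministic. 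Nothing in your proposal substitutes for this step.

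Moreover, the plan you sketch cannot close the gap. Your own analysis shows that $q(\cdot\:|\:y_A,y_B)$ is forced to be the point mass at $(x_A,x_B)$ for every $(y_A,y_B)$ in the support of $p(\cdot\:|\:x_A,x_B)$, and that these supports are pairwise disjoint. But once $q$ has this form, the identity equations $(q\circ p)(\cdot\:|\:x_A,x_B)=\indicator{\cdot=(x_A,x_B)}$ hold automatically no matter how $p$ spreads its mass within each support, and synchronicity only restricts which diagonal pairs $(y,y)$ may lie in which support; there is nothing left to ``play against.'' Concretely, take $X=\{0\}$, $Y=\{a,b\}$, $p(a,a\:|\:0,0)=p(b,b\:|\:0,0)=\tfrac12$, and $q(\cdot\:|\:y_A,y_B)$ the point mass at $(0,0)$ for all $(y_A,y_B)$: both are synchronous and $q\circ p=\id_X$, yet $p$ is not deterministic. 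So no argument from $q\circ p=\id_X$ plus synchronicity alone can yield determinism. (This example also shows that the inequality $H(Q\mu)\geq H(\mu)$ used in the paper requires more than column-stochasticity of $Q$---a general stochastic map can decrease entropy---so the paper's own treatment of this step merits scrutiny.) In short, your instinct about where the difficulty lies is correct, but the proposal does not resolve it and the sketched route cannot.
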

\begin{proof}
    Let $p \in \Hom^S(X,Y)$ be a section and write $q \in \Hom^S(Y,X)$ for a correlation with $q\circ p = \id_X$. Fix $(x_A,x_B) \in X^2$ and and write $\mu(y_A,y_B) = p(y_A, y_B|x_A, x_B)$. Now,
    \begin{align}\nonumber
        \indicator{x_A' = x_A}\indicator{x_B' = x_B} &= (q\circ p) (x_A', x_B' \:|\: x_A, x_B)\\\label{equation:section:synchronous:proof1}
        &= \sum_{y_A,y_B} q(x_A', x_B' \:|\: y_A, y_B) p(y_A, y_B \:|\: x_A, x_B)\\
        &= \sum_{y_A,y_B} q(x_A', x_B' \:|\: y_A, y_B) \mu(y_A,y_B).\nonumber
    \end{align}
    As $q$ is a stochastic map, and entropy $H$ being a concave function, we have $H(\indicator{x_A' = x_A}\indicator{x_B' = x_B}) \geq H(\mu)$. However $H(\indicator{x_A' = x_A}\indicator{x_B' = x_B}) = 0$ and hence $\mu$ is a point mass concentrated as some point $(\hat{y}_A, \hat{y}_B)$. We define the function $F:X^2 \to Y^2$ by $F(x_A, x_B) = (\hat{y}_A, \hat{y}_B)$ obtained this way. Thus for $F = (f_A,f_B)$ we have
    $$p(y_A, y_B|x_A, x_B) = \indicator{y_A = f_A(x_A,x_B)}\indicator{y_B = f_B(x_A,x_B)}.$$
    
    Inserting this formula into (\ref{equation:section:synchronous:proof1}) produces
    \begin{equation}\label{equation:section:synchronous:proof2}
    \indicator{x_A' = x_A}\indicator{x_B' = x_B} = q(x_A', x_B' \:|\: f_A(x_A,x_B), f_B(x_A,x_B)).
    \end{equation}
    Therefore $q$ restricted to the image of $F$ in $Y^2$ is just a function back into $X^2$. Extend this function in any way to a function $G:Y^2 \to X^2$, which then satisfies $G\circ F = \id_{X^2}$ and hence $F$ is one-to-one.
    
    As $p$ is synchronous, from equation (\ref{equation:synchronous}) we get
    $$\indicator{y_A = y_B} = \indicator{y_A = f_A(x,x)}\indicator{y_B = f_B(x,x)},$$
    and so $f_A(x,x) = f_B(x,x)$ for all $x\in X$. Conversely, if $f_A(x_A,x_B) = f_B(x_A,x_B)$
    then as $q$ is synchronous we have from (\ref{equation:section:synchronous:proof2}) that
    $$\indicator{x_A' = x_A}\indicator{x_B' = x_B} = \indicator{x_A' = x_B'}.$$
    That is, we must have $x_A = x_B$, as desired.
    
    Conversely, let $p$ be given as stated in the theorem. Clearly $p$ is synchronous given its formula. Now fix any $\hat{x}\in X$ and define
    $$G(y_A,y_B) = \left\{\begin{array}{cl} 
    (x_A,x_B) & \text{if $(y_A,y_B) = F(x_A,x_B)$,}\\
    (\hat{x},\hat{x}) & \text{otherwise.}\end{array}\right.$$
    This is well defined since $F$ is one-to-one. Write $G = (g_A, g_B)$ and define
    $$q(x_A,x_B \:|\: y_A,y_B) = \indicator{x_A = g_A(y_A, y_B)} \indicator{x_B = g_B(y_A, y_B)}.$$
    Note that $(y,y) = F(x_A, x_B)$ if and only if $x_A = x_B$, and $(y,y)$ not in the image of $F$ we have $G(y,y) = (\hat{x},\hat{x})$. Hence
    $$q(x_A,x_B \:|\: y,y) = 0 \text{ whenever $x_A \not= x_B$,}$$
    and so $q \in \Hom^S(Y,X)$. Finally,
    \begin{align*}
        (q\circ p)(x_A', x_B' \:|\: x_A, x_B) &= \sum_{y_A,y_B} \indicator{x_A' = g_A(y_A, y_B)} \indicator{x_B' = g_B(y_A, y_B)} \indicator{y_A = f_A(x_A,x_B)}\indicator{y_B = f_B(x_A,x_B)}\\
        &= \indicator{x_A' = g_A(f_A(x_A,x_B), f_B(x_A,x_B))} \indicator{x_B' = g_B(f_A(x_A,x_B), f_B(x_A,x_B))}\\
        &= \indicator{x_A' = x_A}\indicator{x_B' = x_B}
    \end{align*}
    and so $p$ is a section.
\end{proof}

\begin{corollary}\label{corollary:section:nonsignaling}
    The sections in $\category{FinSet}^S_{NS}$ are precisely the sections in $\category{FinSet}$.
\end{corollary}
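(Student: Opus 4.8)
The plan is to deduce this from Theorem~\ref{theorem:section:synchronous} by adding a single observation about marginals. First I would note that $\category{FinSet}^S_{NS}$ is a subcategory of $\category{FinSet}^S$ with the same identities, so a section $p\in\Hom^S_{NS}(X,Y)$---together with a left inverse $q\in\Hom^S_{NS}(Y,X)\subseteq\Hom^S(Y,X)$---is automatically a section of $\category{FinSet}^S$. By Theorem~\ref{theorem:section:synchronous} it is therefore deterministic,
\[
p(y_A,y_B|x_A,x_B)=\indicator{y_A=f_A(x_A,x_B)}\indicator{y_B=f_B(x_A,x_B)},
\]
with $F=(f_A,f_B)\colon X^2\to Y^2$ injective and $f_A(x_A,x_B)=f_B(x_A,x_B)$ exactly when $x_A=x_B$. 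What remains is to show that such a $p$ is in fact (the correlation of) a one-to-one function $X\to Y$, and hence a section in $\category{FinSet}$; the reverse inclusion is easy.

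The key step uses that $p$, being a morphism of $\category{FinSet}^S_{NS}$, is itself nonsignaling. Summing out $y_B$ gives $\sum_{y_B}p(y_A,y_B|x_A,x_B)=\indicator{y_A=f_A(x_A,x_B)}$, and the nonsignaling condition forces the right-hand side to be independent of $x_B$; since the indicator of a single element of $Y$ pins down that element, $f_A(x_A,x_B)$ does not depend on $x_B$, so $f_A(x_A,x_B)=f(x_A)$ for some $f\colon X\to Y$. Symmetrically $f_B(x_A,x_B)=g(x_B)$ for some $g\colon X\to Y$, and the diagonal condition $f_A(x,x)=f_B(x,x)$ from the theorem (equivalently, synchronicity of $p$) gives $f=g$. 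Thus $p$ is the correlation of the function $f$ and $F=f\times f$; injectivity of $F$ is equivalent to injectivity of $f$ (restrict the second coordinate to a fixed value), which is also precisely what the ``if and only if'' clause asserts. Hence $p$ is a section in $\category{FinSet}$. Conversely, a one-to-one function $f\colon X\to Y$ has a set-theoretic left inverse $g\colon Y\to X$ with $g\circ f=\id_X$, and the correlations of $f$ and $g$ are synchronous with marginals that plainly ignore the partner's input, hence lie in $\Hom^S_{NS}$; so every section of $\category{FinSet}$ remains a section of $\category{FinSet}^S_{NS}$.

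I do not expect a real obstacle here: essentially all of the difficulty is already absorbed into Theorem~\ref{theorem:section:synchronous}, and the only new content is the remark that the nonsignaling marginal condition collapses the two-variable functions $f_A,f_B$ onto their own coordinate. The one place deserving a moment's care is the equivalence of injectivity of $F=f\times f$ with injectivity of $f$, but this is immediate by testing pairs that agree in one coordinate. The same argument applies verbatim to $\category{FinSet}^S_{HV}$ and $\category{FinSet}^S_Q$, which sit between $\category{FinSet}$ and $\category{FinSet}^S_{NS}$, so their sections also coincide with those of $\category{FinSet}$; the corollary records only the nonsignaling case.
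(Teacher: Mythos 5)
Your argument is correct and is essentially identical to the paper's own proof: both reduce to Theorem~\ref{theorem:section:synchronous} and then use the nonsignaling marginal condition to collapse $f_A$ and $f_B$ to functions of a single coordinate, with synchronicity forcing them to agree. Your added remarks (injectivity of $F=f\times f$ versus $f$, and the explicit easy converse) only make explicit steps the paper leaves implicit.
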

\begin{proof}
    Since $\Hom(X,Y)\subseteq \Hom_{NS}^S(X,Y)$ a section in $\Hom(X,Y)$ is also a section in $\Hom_{NS}^S(X,Y)$.

    Conversely, let $p\in\Hom_{NS}^S(X,Y)$ be a section. Then it is also a section of $\Hom^S(X,Y)$ and so from the theorem we have
    \begin{equation}\label{equation:section:nonsignaling:proof1}
        p(y_A, y_B|x_A, x_B) = \indicator{y_A = f_A(x_A,x_B)}\indicator{y_B = f_B(x_A,x_B)}
    \end{equation}
    such that $F = (f_A,f_B):X^2 \to Y^2$ is one-to-one and $f_A(x_A,x_B) = f_B(x_A,x_B)$ if and only if $x_A = x_B$. Since $p$ is nonsignaling
    $$\sum_{y_B} p(y_A, y_B \:|\: x_A, x_B) = \indicator{y_A = f_A(x_A,x_B)}$$
    is independent of $x_B$. That is $f_A(x_A,x_B) = f(x_A)$ alone. Identically, $f_B(x_A,x_B) = g(x_B)$. But then $f(x) = f_A(x,x) = f_B(x,x) = g(x)$ for all $x\in X$ and therefore
    (\ref{equation:section:nonsignaling:proof1}) simplifies to
    $$p(y_A, y_B|x_A, x_B) = \indicator{y_A = f(x_A)}\indicator{y_B = f(x_B)}.$$
    That is $p\in\Hom(X,Y)$.
\end{proof}

\begin{corollary}\label{corollary:section:other}
    The sections in $\category{FinSet}^S_{HV}$ and $\category{FinSet}^S_{HV}$ are precisely the sections in $\category{FinSet}$.
\end{corollary}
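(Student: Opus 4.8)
The plan is to reduce both cases to Corollary~\ref{corollary:section:nonsignaling} by exploiting the inclusion chain of hom-sets. First I would dispatch the (easy) forward direction: since $\Hom(X,Y)\subseteq\Hom^S_{HV}(X,Y)\subseteq\Hom^S_Q(X,Y)$ and $\id_X$ lies in every one of these hom-sets, any section $p$ in $\category{FinSet}$, i.e.\ a one-to-one function $p\colon X\to Y$ with a retraction function $g\colon Y\to X$, continues to witness that $p$ is a section in $\category{FinSet}^S_{HV}$ and in $\category{FinSet}^S_Q$, because $g$ is again a function and hence a morphism in both larger categories.

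For the converse, the key observation is that classical and quantum correlations are automatically nonsignaling, which is exactly the content of the inclusions $\Hom^S_{HV}(X,Y)\subseteq\Hom^S_{NS}(X,Y)$ and $\Hom^S_Q(X,Y)\subseteq\Hom^S_{NS}(X,Y)$ recorded earlier: for a classical correlation $\sum_\omega\mu(\omega)\Pr_A(y_A\,|\,x_A,\omega)\Pr_B(y_B\,|\,x_B,\omega)$, summing over $y_B$ gives $\sum_\omega\mu(\omega)\Pr_A(y_A\,|\,x_A,\omega)$, independent of $x_B$, and similarly for the other marginal; the quantum case is the same computation using $\sum_{y_B}F^{x_B}_{y_B}=\openone$. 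Consequently, if $p$ is a section in $\category{FinSet}^S_{HV}$ with witness $q\in\Hom^S_{HV}(Y,X)$ satisfying $q\circ p=\id_X$, then both $p$ and $q$ are synchronous \emph{and} nonsignaling, so $p$ is already a section in $\category{FinSet}^S_{NS}$; the identical argument handles $\category{FinSet}^S_Q$ using $\Hom^S_Q(Y,X)\subseteq\Hom^S_{NS}(Y,X)$. Corollary~\ref{corollary:section:nonsignaling} then forces $p\in\Hom(X,Y)$, i.e.\ $p$ is a section in $\category{FinSet}$.

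I expect no real obstacle here: the statement is a purely formal consequence of Corollary~\ref{corollary:section:nonsignaling} together with the containment $\Hom^S_{HV}\subseteq\Hom^S_Q\subseteq\Hom^S_{NS}$. The only point that deserves a word of care is that it is not just $p$ but also its chosen inverse $q$ that must land in $\Hom^S_{NS}$ before Corollary~\ref{corollary:section:nonsignaling} can be invoked, and this is immediate from the same inclusions applied on the hom-set $\Hom^S_{HV}(Y,X)$ (respectively $\Hom^S_Q(Y,X)$). (In writing this up I would also read the statement with its evident correction, namely that it concerns $\category{FinSet}^S_{HV}$ and $\category{FinSet}^S_Q$.)
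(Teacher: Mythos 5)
Your proposal is correct and follows exactly the paper's own argument: the forward direction via the inclusions $\Hom(X,Y)\subseteq\Hom^S_{HV}(X,Y)\subseteq\Hom^S_Q(X,Y)$, and the converse by noting that a section in either category is a section in $\category{FinSet}^S_{NS}$ and then invoking Corollary~\ref{corollary:section:nonsignaling}. You also correctly identify the typo in the statement ($\category{FinSet}^S_{HV}$ appearing twice where the second should be $\category{FinSet}^S_Q$).
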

\begin{proof}
    Since $\Hom(X,Y)\subseteq \Hom^S_{HV}(X,Y) \subseteq \Hom^S_Q(X,Y)$ the one-to-one functions define sections in each of these categories. Moreover a section in each of these categories must form a section in $\category{FinSet}^S_{NS}$, and therefore must be in $\Hom(X,Y)$.
\end{proof}

The above results show that the sections and the one-to-one functions are the same sets (having identified a function with a deterministic synchronous correlation). However being monic is a strictly weaker condition in general as the following results will show.

\begin{lemma}\label{lemma:monomorphism:nullspace}
    In each of $\category{FinSet}^S_{HV}$, $\category{FinSet}^S_Q$, $\category{FinSet}^S_{NS}$, and $\category{FinSet}^S$ any correlation whose stochastic matrix has zero right nullspace is a monomorphism.
\end{lemma}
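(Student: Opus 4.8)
The plan is to translate the monomorphism condition directly into a statement about matrices and then invoke the hypothesis. Recall from the construction of these categories that composition of correlations is nothing but multiplication of the associated column-stochastic matrices. So suppose $p\in\Hom^S_\bullet(X,Y)$ has stochastic matrix $P$ (of size $|Y|^2\times|X|^2$) with zero right nullspace, fix an arbitrary finite set $Z$, and let $g,h\in\Hom^S_\bullet(Z,X)$ be morphisms in whichever of $\category{FinSet}^S_{HV}$, $\category{FinSet}^S_Q$, $\category{FinSet}^S_{NS}$, $\category{FinSet}^S$ we are working in, with associated $(|X|^2\times|Z|^2)$ stochastic matrices $G$ and $H$. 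The hypothesis $p\circ g = p\circ h$ becomes $PG = PH$, i.e. $P(G-H)=0$.

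The second and final step is to note that each column of the matrix $G-H$ is a vector in $\mathbb{R}^{|X|^2}$ annihilated by $P$, hence lies in the right nullspace of $P$, which by assumption is $\{0\}$. Therefore every column of $G-H$ is zero, so $G=H$ and thus $g=h$. Since $g,h$ were arbitrary, $p$ is left-cancelable, i.e. a monomorphism. The argument is identical in all four categories: the only facts used are that composition is matrix multiplication and that each hom-set is a set of stochastic matrices; we never need $G-H$ itself to be a correlation, only that matrix multiplication is linear.

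I expect essentially no obstacle here; the content is bookkeeping. The one point to be careful about is that the relevant nullspace must be taken over the full space $\mathbb{R}^{|X|^2}$ in which the columns of $G-H$ live, rather than over some sub-cone of legitimate correlation vectors, and that although $G-H$ is a difference of stochastic matrices and so not stochastic, linearity of left multiplication by $P$ still applies. There is nothing deeper to verify.
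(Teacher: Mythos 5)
Your proposal is correct and follows exactly the paper's argument: composition is matrix multiplication, so $p\circ g = p\circ h$ gives $P(G-H)=0$, each column of $G-H$ lies in the right nullspace of $P$, and hence $G=H$. The remark that $G-H$ need not itself be a correlation, only a matrix annihilated by $P$, is the same (implicit) observation the paper relies on.
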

\begin{proof}
    Let $p\in\Hom^S(X,Y)$ have stochastic map $P$ with zero right nullspace. If $q_1, q_2 \in\Hom^S(U,X)$ are correlations with $p\circ q_1 = p\circ q_2$ then their stochastic maps $Q_1,Q_2$ satisfy $P(Q_1 - Q_2) = 0$. In particular each column of $Q_1-Q_2$ is in the null space of $P$ and therefore $Q_1 = Q_2$ and $q_1 = q_2$. The identical argument works in the other categories.
\end{proof}

The remainder of this section is dedicated to proving that ``zero right nullspace'' characterizes when a synchronous correlation in any of these categories is a monomorphism. Let $P$ be a stochastic map and $P\vec{u} = \vec{0}$. We can decompose $\vec{u} = \vec{u}_+ - \vec{u}_-$ where each of $\vec{u}_\pm$ has nonnegative entries. Notice that
\begin{align*}
    0 &= (1, \dots, 1)P\vec{u}\\
    &= (1, \dots, 1)(\vec{u}_+ - \vec{u}_-)\\
    &= \|\vec{u}_+\|_1 - \|\vec{u}_-\|_1.
\end{align*}
Assuming $\vec{u}\not=\vec{0}$---which is what we ultimately aim to contradict---we may renormalize $\vec{u}$ if necessary so that $\|\vec{u}_\pm\|_1 = 1$, and hence each of $\vec{u}_\pm$ defines a probability distribution. Our program is then to use $\vec{u}_\pm$ to construct appropriate correlations $q_\pm$ from which we can deduce $\vec{u} = \vec{0}$.

\begin{proposition}\label{proposition:synchronous:monomorphism}
  In $\category{FinSet}^S$ the monomorphisms are precisely those correlations whose stochastic matrices have zero right nullspace.
\end{proposition}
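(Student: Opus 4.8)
The plan is to prove the converse of Lemma~\ref{lemma:monomorphism:nullspace}; that lemma already supplies the direction ``zero right nullspace $\Rightarrow$ monomorphism,'' so it remains to show that if $p\in\Hom^S(X,Y)$ is a monomorphism then its stochastic matrix $P$ has zero right nullspace. I would argue the contrapositive: assuming $P$ has a nonzero vector $\vec u$ in its right nullspace, I will build two distinct correlations $q_1,q_2\in\Hom^S(\{0,1\},X)$ with $p\circ q_1=p\circ q_2$, so that $p$ is not monic.

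First I would normalize $\vec u$ exactly as in the paragraph preceding the proposition: write $\vec u=\vec u_+-\vec u_-$ with $\vec u_\pm$ nonnegative and of disjoint support; since $P$ is column stochastic, $(1,\dots,1)P\vec u=0$ forces $\|\vec u_+\|_1=\|\vec u_-\|_1$, and as $\vec u\neq\vec 0$ this common value is positive, so after rescaling $\vec u$ we may take $\vec u_+$ and $\vec u_-$ to be probability distributions on $X^2$.

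Next, fix any $\hat x\in X$ and let $\delta$ be the point mass at $(\hat x,\hat x)\in X^2$. For $i=1,2$ define $q_i\in\Hom^S(\{0,1\},X)$ by
$q_i(\,\cdot\,,\cdot\:|\:0,0)=q_i(\,\cdot\,,\cdot\:|\:1,0)=q_i(\,\cdot\,,\cdot\:|\:1,1)=\delta$, together with $q_1(\,\cdot\,,\cdot\:|\:0,1)=\vec u_+$ and $q_2(\,\cdot\,,\cdot\:|\:0,1)=\vec u_-$. Each $q_i$ is a legitimate stochastic matrix because every one of its four columns is a probability distribution on $X^2$, and each is synchronous because the only columns constrained by (\ref{equation:synchronous}) are those for the diagonal inputs $(0,0)$ and $(1,1)$, and these both equal $\delta$, which is supported on the diagonal of $X^2$; the condition (\ref{equation:synchronous}) imposes nothing on the $(0,1)$-column, and that is precisely the freedom being exploited. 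Since $\vec u\neq\vec 0$ we have $\vec u_+\neq\vec u_-$, hence $q_1\neq q_2$. On the other hand $Q_1-Q_2$ is the matrix all of whose columns vanish except the column indexed by the input pair $(0,1)$, which equals $\vec u$; therefore $P(Q_1-Q_2)=\vec 0$, i.e.\ $p\circ q_1=p\circ q_2$. This contradicts $p$ being a monomorphism, completing the argument.

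I do not expect a genuine obstacle here: the single observation doing the work is that a bare synchronous correlation on the two-point domain is completely unconstrained away from the diagonal inputs, so one nullspace vector can be inserted as a single off-diagonal column. The reason this warrants separate propositions for $\category{FinSet}^S_{HV}$, $\category{FinSet}^S_Q$, and $\category{FinSet}^S_{NS}$ is that in those categories the off-diagonal columns are coupled (to one another, and through a hidden-variable model or a representation), so there one must instead synthesize $q_\pm$ from $\vec u_\pm$ using Lemmas~\ref{lemma:2-point-domain:nonsignaling} and \ref{lemma:2-point-domain:classical}; for $\category{FinSet}^S$ itself no such machinery is needed, and the proof is essentially bookkeeping.
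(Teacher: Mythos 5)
Your proof is correct and is essentially the paper's own argument: the same decomposition $\vec u=\vec u_+-\vec u_-$ into probability distributions, the same correlations $q_\pm$ on the two-point domain with three columns equal to the point mass at $(\hat x,\hat x)$ and the $(0,1)$-column carrying $\vec u_\pm$, and the same conclusion from $P(Q_1-Q_2)=0$. The only (immaterial) difference is that you phrase it as a contrapositive rather than a contradiction.
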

\begin{proof}
    One implication is provided by Lemma \ref{lemma:monomorphism:nullspace}. For the converse, suppose $p\in\Hom^S(X,Y)$ is a monomorphism with associated stochastic matrix $P$, and suppose $P\vec{v} = \vec{0}$ for some $\vec{v}\in\mathbb{R}^{|X|^2}$. As above, presume $\vec{v} \not= \vec{0}$ and decompose $\vec{v} = \vec{v}_+ - \vec{v}_-$ where each $\vec{v}_\pm$ is nonnegative with sum equal one. 
    
    From $\vec{v}_\pm$ we construct $q_\pm\in\Hom^S(\{0,1\},X)$ as follows: fix any $\hat{x} \in X$ at define
    \begin{align*}
        q_\pm(x_A,x_B|0,0) &= \left\{\begin{array}{cl} 1 & \text{if $x_A=x_B=\hat{x}$}\\ 0 & \text{otherwise,}\end{array}\right.\\
        q_\pm(x_A,x_B|0,1) &= v_\pm(x_A,x_B)\\
        q_\pm(x_A,x_B|1,0) &= \left\{\begin{array}{cl} 1 & \text{if $x_A=x_B=\hat{x}$}\\ 0 & \text{otherwise,}\end{array}\right.\\
        q_\pm(x_A,x_B|1,1) &= \left\{\begin{array}{cl} 1 & \text{if $x_A=x_B=\hat{x}$}\\ 0 & \text{otherwise.}\end{array}\right.\\
    \end{align*}
    Since $q_\pm(x_A,x_B|0,0) = q_\pm(x_A,x_B|1,1) = 0$ when $x_A\not= x_B$, both $q_+$ and $q_-$ are synchronous. Their associated stochastic maps satisfy $Q_+ - Q_- = \left( \vec{0}, \vec{v}, \vec{0}, \vec{0}\right)$, and hence $P(Q_+ - Q_-) = 0$. This implies that $p\circ q_+ = p\circ q_-$, from which we conclude $q_+ = q_-$ as $p$ is a monomorphism. Therefore $Q_+ = Q_-$ and so we must have had $\vec{v} = \vec{0}$ as desired.
\end{proof}

Unfortunately this proof does not carry over to the other categories easily. The construction of $q_\pm$ is specialized to general synchronous correlations and would not define homomorphisms in the more restrictive categories. Namely to construct $q_1,q_2$ in the proper category leads to stochastic maps $Q_1 - Q_2$ which cannot not simply be formed by the given vector $\vec{v}$ in the kernel. Consequently we will need to know more information about the structure of the kernel of a stochastic map in each of the categories.

\begin{lemma}\label{lemma:monomorphism:nonsignaling}
    Let $p$ be a nonsignaling synchronous correlation, $P$ its stochastic matrix, and suppose $P\vec{u} = \vec{0}$. Define $w_1(x_A,x_B) = \indicator{x_A=x_B}\sum_z u(x_A,z)$ and $w_2(x_A,x_B) = \indicator{x_A=x_B}\sum_z u(z,x_B)$. Then we have $P\vec{w}_1 = P\vec{w}_2 = \vec{0}$.
\end{lemma}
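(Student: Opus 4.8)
The strategy is to exploit the nonsignaling structure of $p$ together with the synchronicity condition to show that the two "diagonalized marginal" vectors $\vec w_1$ and $\vec w_2$ are themselves annihilated by $P$. The key observation is that $P$ acting on a vector $\vec u$ indexed by pairs $(x_A,x_B)$ can be understood column-by-column, and the nonsignaling conditions (\ref{equation:nonsignaling-condition}) let us relate the action of $P$ on a general vector to its action on vectors that are "products" or "diagonal" in a controlled way. Concretely, I would first write out $(P\vec u)(y_A,y_B) = \sum_{x_A,x_B} p(y_A,y_B\mid x_A,x_B)\, u(x_A,x_B) = 0$ and think of this as a family of linear relations indexed by $(y_A,y_B) \in Y^2$.

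First I would compute $(P\vec w_1)(y_A,y_B) = \sum_{x} p(y_A,y_B\mid x,x)\,\bigl(\sum_z u(x,z)\bigr)$. Because $p$ is synchronous, $p(y_A,y_B\mid x,x) = 0$ unless $y_A = y_B$, so this vanishes identically off the diagonal, and on the diagonal it equals $\sum_x p(y,y\mid x,x)\sum_z u(x,z)$. The plan is to rewrite $p(y,y\mid x,x)$ using nonsignaling: summing $p(y_A,y_B\mid x_A,x_B)$ over $y_B$ gives a quantity independent of $x_B$, and synchronicity forces $\sum_{y_B} p(y,y_B\mid x,x) = p(y,y\mid x,x)$, so $p(y,y\mid x,x) = \sum_{y_B} p(y,y_B\mid x,x_B)$ for any $x_B$. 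Then
\begin{align*}
(P\vec w_1)(y,y) &= \sum_{x,z} p(y,y\mid x,x)\,u(x,z)\\
 &= \sum_{x,z}\Bigl(\sum_{y_B} p(y,y_B\mid x,z)\Bigr) u(x,z)\\
 &= \sum_{y_B}\sum_{x,z} p(y,y_B\mid x,z)\,u(x,z)\ =\ \sum_{y_B} (P\vec u)(y,y_B)\ =\ 0,
\end{align*}
using $P\vec u = \vec 0$ in the last step. The argument for $\vec w_2$ is symmetric, using the other nonsignaling condition (marginalizing over $y_A$ and over $x_A$).

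The main obstacle, such as it is, lies in correctly chaining the two applications of the nonsignaling identity: one must use synchronicity to identify the diagonal value $p(y,y\mid x,x)$ with a full marginal $\sum_{y_B} p(y,y_B\mid x,x)$, and then use nonsignaling to replace the second input $x$ by an arbitrary $z$ before summing against $u(x,z)$. I would be careful that the marginal being input-independent is exactly what licenses substituting $x_B = z$ inside the sum over $z$, and that no hidden dependence on the summation variable sneaks in. Once that bookkeeping is set up, both claims $P\vec w_1 = \vec 0$ and $P\vec w_2 = \vec 0$ follow immediately from $P\vec u = \vec 0$, so there is no genuine analytic difficulty — the content is entirely in organizing the marginalization correctly.
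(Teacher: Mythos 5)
Your proposal is correct and is essentially the paper's own argument read in the opposite direction: the paper starts from $P\vec u=\vec 0$, sums over $y_B$, and uses nonsignaling plus synchronicity to arrive at $\sum_x p(y,y\mid x,x)\sum_z u(x,z)=0$, whereas you start from $(P\vec w_1)(y,y)$ and unwind it to $\sum_{y_B}(P\vec u)(y,y_B)=0$ using the same two identities. The handling of the off-diagonal entries via synchronicity is also identical, so there is nothing to add.
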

\begin{proof}
  Writing out $P\vec{u} = \vec{0}$ produces
  $$0 = \sum_{x_A,x_B} p(y_A,y_B\:|\: x_A,x_B) u(x_A,x_B) \text{ for all $y_A,y_B\in Y$.}$$
  Taking the sum on $y_B$ and using that $p$ is nonsignaling gives
  \begin{align*}
      0 &= \sum_{x_A,x_B,y_B} p(y_A,y_B\:|\: x_A,x_B)u(x_A,x_B)\\
      &= \sum_{x_A,y_B} p(y_A,y_B\:|\: x_A,\cdot)\sum_{x_B} u(x_A,x_B),
  \end{align*}
  where we continue to use ``$\cdot$'' to highlight that the correlation does not depend on $x_B$. In fact, since it does not depend on $x_B$, we could evaluate that argument at $x_A$ and use the synchronicity of $p$ to find
  $$\sum_{y_B} p(y_A,y_B\:|\: x_A,\cdot) = \sum_{y_B} p(y_A,y_B\:|\: x_A,x_A) = p(y_A,y_A\:|\: x_A,x_A).$$
  Inserting this into the above equation shows
  \begin{align*}
      0 &= \sum_{x_A} p(y_A,y_A\:|\: x_A,x_A)\sum_z u(x_A,z)\\
      &= \sum_{x_A,x_B} p(y_A,y_A\:|\: x_A,x_B)\indicator{x_A = x_B}\sum_z u(x_A,z)\\
      &= \sum_{x_A,x_B} p(y_A,y_A\:|\: x_A,x_A)w_1(x_A,x_B).
  \end{align*}
  Finally, when $y_A \not= y_B$ we have
  $$\sum_{x_A,x_B} p(y_A,y_B\:|\: x_A,x_A)w(x_A,x_B) = \sum_{x_A} p(y_A,y_B\:|\: x_A,x_A)\sum_z u(x_A,z)
  = 0$$
  as $p$ is synchronous. Therefore $\sum_{x_A,x_B} p(y_A,y_B\:|\: x_A,x_A)w_1(x_A,x_B) = 0$ for all $y_A,y_B\in Y$. The proof $P\vec{w}_2 = \vec{0}$ starting from the sum on $y_A$ is identical.
\end{proof}

We would like to use Lemma \ref{lemma:2-point-domain:nonsignaling} to generate nonsignaling correlations, with $u = v$, and Lemma \ref{lemma:monomorphism:nonsignaling} almost allows this. The missing piece is the side-condition $\sum_{x'} u(x',x) = \sum_{x'} u(x,x')$, using the notation of the latter lemma, which does not appear to hold in general. We say a correlation is symmetric if it satisfies
$$p(y_A,y_B \:|\: x_A,x_B) = p(y_B,y_A \:|\: x_B,x_A).$$
For symmetric correlations this condition is trivial. Moreover from equations (\ref{equation:classical:form},\ref{equation:quantum:form}) every synchronous classical and quantum correlation is symmetric, and so we can complete our characterization of monomorphisms in these two categories as follows.

\begin{lemma}\label{lemma:monomorphism:symmetric}
    Let $p\in\Hom^S(X,Y)$ be a symmetric with associated stochastic matrix $P$. Suppose $P\vec{u} = \vec{0}$. Then $v(x_A,x_B) = u(x_B,x_A)$ also has $P\vec{v} = \vec{0}$.
\end{lemma}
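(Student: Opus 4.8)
The plan is a direct index-chasing computation: the claim is essentially that the symmetry $p(y_A,y_B\:|\:x_A,x_B) = p(y_B,y_A\:|\:x_B,x_A)$ intertwines the ``transpose on inputs'' operation $u\mapsto v$ (where $v(x_A,x_B)=u(x_B,x_A)$) with a ``transpose on outputs'' operation on the codomain, and the latter clearly preserves the property of being annihilated by $P$.

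Concretely, I would start by writing the hypothesis $P\vec u = \vec 0$ in components,
$$\sum_{x_A,x_B} p(y_A,y_B\:|\:x_A,x_B)\,u(x_A,x_B) = 0 \quad\text{for all }y_A,y_B\in Y,$$
and then evaluate the $(y_A,y_B)$-component of $P\vec v$. Substituting the definition of $v$ gives $\sum_{x_A,x_B} p(y_A,y_B\:|\:x_A,x_B)\,u(x_B,x_A)$; relabeling the dummy summation indices $x_A\leftrightarrow x_B$ turns this into $\sum_{x_A,x_B} p(y_A,y_B\:|\:x_B,x_A)\,u(x_A,x_B)$; and applying the symmetry of $p$ rewrites $p(y_A,y_B\:|\:x_B,x_A)$ as $p(y_B,y_A\:|\:x_A,x_B)$. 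What remains, $\sum_{x_A,x_B} p(y_B,y_A\:|\:x_A,x_B)\,u(x_A,x_B)$, is exactly the $(y_B,y_A)$-component of $P\vec u$, hence $0$. Since $(y_A,y_B)$ was arbitrary, $P\vec v = \vec 0$.

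There is no real obstacle here; the only thing to be careful about is bookkeeping — distinguishing the relabeling of the input variables (which is a harmless reindexing of the sum) from the swap of the output pair (which is where the symmetry hypothesis is genuinely used), and noting that $P\vec u=\vec0$ is a statement for \emph{all} output pairs so that permuting them costs nothing. I would phrase the final write-up as a short three-line display chaining these equalities, with a parenthetical remark that the reindexing and the symmetry of $p$ are used in consecutive steps.
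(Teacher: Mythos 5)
Your proposal is correct and follows exactly the same route as the paper's proof: expand the $(y_A,y_B)$-component of $P\vec v$, reindex the sum by swapping $x_A\leftrightarrow x_B$, apply the symmetry of $p$ to move the swap to the output pair, and recognize the result as the $(y_B,y_A)$-component of $P\vec u$. Nothing is missing.
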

\begin{proof}
    Simply,
    \begin{align*}
        \sum_{x_A,x_B} p(y_A,y_B\:|\: x_A, x_B) v(x_A,x_B) &= \sum_{x_A,x_B} p(y_A,y_B\:|\: x_A, x_B) u(x_B, x_A)\\
        &= \sum_{x_A,x_B} p(y_A,y_B\:|\: x_B, x_A) u(x_A, x_B)\\
        &= \sum_{x_A,x_B} p(y_B,y_A\:|\: x_A, x_B) u(x_A, x_B)\ =\ 0 \text{ for all $y_A,y_B$.}
    \end{align*}
\end{proof}

\begin{theorem}\label{theorem:monomorphism:symmetric}
    In $\category{FinSet}^S_{HV}$ and $\category{FinSet}^S_Q$ the monomorphisms are precisely those correlations whose stochastic matrices have zero right nullspace.
\end{theorem}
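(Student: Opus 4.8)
The plan is to prove the two implications separately. One direction is already in hand: Lemma~\ref{lemma:monomorphism:nullspace} shows that any synchronous correlation whose stochastic matrix has zero right nullspace is a monomorphism in each of these categories, so only the converse needs to be argued.

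For the converse, suppose $p\in\Hom^S_{HV}(X,Y)$ (respectively $p\in\Hom^S_Q(X,Y)$) is a monomorphism with stochastic matrix $P$, and suppose $P\vec{v}=\vec{0}$ for some nonzero $\vec{v}\in\mathbb{R}^{|X|^2}$. Following the normalization discussion preceding Proposition~\ref{proposition:synchronous:monomorphism}, decompose $\vec{v}=\vec{v}_+-\vec{v}_-$ with each $v_\pm$ a probability distribution on $X^2$. The point that makes the argument work is that a classical or quantum correlation is \emph{both symmetric and nonsignaling}: symmetry follows from the normal forms (\ref{equation:classical:form}) and (\ref{equation:quantum:form}), and the nonsignaling property from the containments $\Hom^S_{HV}(X,Y)\subseteq\Hom^S_Q(X,Y)\subseteq\Hom^S_{NS}(X,Y)$. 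Hence both Lemma~\ref{lemma:monomorphism:nonsignaling} and Lemma~\ref{lemma:monomorphism:symmetric} apply to the relation $P\vec{v}=\vec{0}$, giving $P\vec{v}^{T}=\vec{0}$ (where $v^T(x_A,x_B)=v(x_B,x_A)$) as well as $P\vec{w}_1=P\vec{w}_2=\vec{0}$ for the vectors $w_1(x_A,x_B)=\indicator{x_A=x_B}\sum_z v(x_A,z)$ and $w_2(x_A,x_B)=\indicator{x_A=x_B}\sum_z v(z,x_B)$.

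Next I would feed $v_+$ and $v_-$ into Lemma~\ref{lemma:2-point-domain:classical} to produce classical synchronous correlations $q_\pm\in\Hom^S_{HV}(\{0,1\},X)\subseteq\Hom^S_Q(\{0,1\},X)$. Reading off the four columns of their stochastic matrices $Q_\pm$ (indexed by the inputs $(0,0),(0,1),(1,0),(1,1)$), the difference $Q_+-Q_-$ has columns $\vec{w}_1$, $\vec{v}$, $\vec{v}^{T}$, $\vec{w}_2$ respectively; here one must check the bookkeeping that $\indicator{x_A=x_B}(\theta_+-\theta_-)$ is exactly $w_1$ and $\indicator{x_A=x_B}(\phi_+-\phi_-)$ is exactly $w_2$ in the notation of Lemma~\ref{lemma:2-point-domain:classical} with $u=v_\pm$. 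Since $P$ annihilates each of these four columns by the preceding paragraph, $P(Q_+-Q_-)=0$, i.e. $p\circ q_+=p\circ q_-$, so $q_+=q_-$ because $p$ is monic; comparing the $(0,1)$-columns of $Q_+=Q_-$ forces $\vec{v}_+=\vec{v}_-$ and hence $\vec{v}=\vec{0}$, contradicting our assumption. Therefore $P$ has zero right nullspace.

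The main obstacle is not any one computation but the \emph{assembly}: one has to replace the ad hoc construction of $q_\pm$ used in the $\category{FinSet}^S$ case by the construction of Lemma~\ref{lemma:2-point-domain:classical}, chosen precisely so that \emph{every} column of $Q_+-Q_-$ lies in the right nullspace of $P$, and this is exactly what forces the simultaneous use of all three auxiliary lemmas together with the symmetry and nonsignaling of classical and quantum correlations. The detail most worth spelling out carefully is the identification of the two diagonal columns of $Q_+-Q_-$ with the vectors $w_1$ and $w_2$ of Lemma~\ref{lemma:monomorphism:nonsignaling}.
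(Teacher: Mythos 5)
Your proposal is correct and follows essentially the same route as the paper's proof: decompose the kernel vector into probability distributions, use Lemmas~\ref{lemma:monomorphism:nonsignaling} and~\ref{lemma:monomorphism:symmetric} (via symmetry and nonsignaling of classical/quantum correlations) to place all four columns of $Q_+-Q_-$ in the right nullspace, and build $q_\pm$ via Lemma~\ref{lemma:2-point-domain:classical} to conclude from monicity that $\vec{v}=\vec{0}$. The bookkeeping you flag for the diagonal columns checks out exactly as in the paper.
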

\begin{proof}
    Let $p$ be a monomorphic synchronous either hidden variables or quantum correlation with associated stochastic matrix $P$. In either case $p$ is symmetric and nonsignaling. Suppose $P\vec{u} = \vec{0}$ with $\vec{u}\not=\vec{0}$ and as above decompose $\vec{u} = \vec{u}_+ - \vec{u}_-$ where each of $\vec{u}_\pm$ defines a probability distribution. Define
    \begin{align*}
        v_\pm(x_A,x_B) &= u_\pm(x_B,x_A)\\
        w_\pm(x_A,x_B) &= \openone_{x_A=x_B}\sum_z u_\pm(x_A,z)\\
        z_\pm(x_A,x_B) &= \openone_{x_A=x_B}\sum_z v_\pm(x_A,z)\ =\ \openone_{x_A=x_B}\sum_z u_\pm(z,x_B).
    \end{align*}
    Then the vectors $\vec{v} = \vec{v}_+ - \vec{v}_-$, $\vec{w} = \vec{w}_+ - \vec{w}_-$, and $\vec{z} = \vec{z}_+ - \vec{z}_-$ are in the right nullspace of $P$ from Lemmas \ref{lemma:monomorphism:nonsignaling} and \ref{lemma:monomorphism:symmetric}.
    
    Set
    \begin{align*}
        q_\pm(x_A,x_B\:|\:0,0) &= w_\pm(x_A,x_B)\\
        q_\pm(x_A,x_B\:|\:0,1) &= u_\pm(x_A,x_B)\\
        q_\pm(x_A,x_B\:|\:1,0) &= v_\pm(x_A,x_B)\\
        q_\pm(x_A,x_B\:|\:1,1) &= z_\pm(x_A,x_B).
    \end{align*}
    Then by Lemma \ref{lemma:2-point-domain:classical} both $q_\pm\in\Hom^S_{HV}(\{0,1\},X)$. Moreover 
    $$P(Q_+ - Q_-) = P\left(\begin{array}{cccc} \vec{w}&\vec{u}&\vec{v}&\vec{z}\end{array}\right) = 0$$
    and so $p\circ q_+ = p\circ q_-$. As $p$ is monomomorphism $q_+ = q_-$ and in particular, $\vec{u} = \vec{u}_+ - \vec{u}_- = \vec{0}$ obtaining our contradiction.
\end{proof}

A general nonsignaling correlation will not be symmetric and so the above proof will not work. Fortunately a simple modification can be made that provides the same result.

\begin{theorem}
    In $\category{FinSet}^S_{NS}$ the monomorphisms are precisely those correlations whose stochastic matrices have zero right nullspace.
\end{theorem}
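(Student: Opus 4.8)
The plan is to follow the template of Theorem~\ref{theorem:monomorphism:symmetric}: the forward implication is immediate from Lemma~\ref{lemma:monomorphism:nullspace}, so the work is in the converse, which I would prove by contraposition. Suppose $p\in\Hom^S_{NS}(X,Y)$ has stochastic matrix $P$ with $P\vec u=\vec 0$ for some $\vec u\neq\vec 0$, normalized as in the discussion before Proposition~\ref{proposition:synchronous:monomorphism} so that $\vec u=\vec u_+-\vec u_-$ with $u_\pm$ probability distributions on $X^2$. Lemma~\ref{lemma:monomorphism:nonsignaling} then supplies two diagonally-supported vectors $\vec w_1,\vec w_2$ in the right nullspace of $P$, with $w_1(x_A,x_B)=\indicator{x_A=x_B}\sum_z u(x_A,z)$ and $w_2(x_A,x_B)=\indicator{x_A=x_B}\sum_z u(z,x_B)$. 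I would then split into two cases according to whether $\vec u$ has a nontrivial marginal.

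\emph{Case 1: $\vec w_1\neq\vec 0$ or $\vec w_2\neq\vec 0$.} Then we have a nonzero diagonally-supported vector $\vec w$ in the right nullspace of $P$; writing its diagonal as $d=d_+-d_-$ in positive and negative parts and using $\sum_x d(x)=0$, renormalization yields distinct probability distributions $\pi_\pm$ on $X$ whose vectorizations satisfy $\vec\pi_+-\vec\pi_-\propto\vec w$. Take $q_\pm\in\Hom^S_{NS}(\{0\},X)$ to be the randomized-constant correlations $q_\pm(x_A,x_B\:|\:0,0)=\indicator{x_A=x_B}\pi_\pm(x_A)$, which are synchronous classical (hence nonsignaling) and whose stochastic matrices are single columns with $Q_+-Q_-\propto\vec w$, hence annihilated by $P$. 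Thus $p\circ q_+=p\circ q_-$ while $q_+\neq q_-$, contradicting that $p$ is monic.

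\emph{Case 2: $\vec w_1=\vec w_2=\vec 0$.} Then $u_+$ and $u_-$ share a common row-sum distribution $\rho$ and a common column-sum distribution $\gamma$ on $X$. Put $v(x_A,x_B)=\gamma(x_A)\rho(x_B)$ and feed the pairs $(u_+,v)$ and $(u_-,v)$ into Lemma~\ref{lemma:2-point-domain:nonsignaling} (its two side conditions hold because the row-sums of $u_\pm$ match the column-sums of $v$ and vice versa), obtaining $q_\pm\in\Hom^S_{NS}(\{0,1\},X)$. By construction $q_+$ and $q_-$ agree on the inputs $(0,0)$, $(1,0)$, $(1,1)$ and differ only on $(0,1)$, where the column difference is exactly $\vec u$; hence $P(Q_+-Q_-)=\vec 0$, so $p\circ q_+=p\circ q_-$ with $q_+\neq q_-$, again contradicting that $p$ is monic.

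The conceptual obstacle is seeing that the symmetric argument's substitute $v(x_A,x_B)=u(x_B,x_A)$ has no nonsignaling analogue, so one cannot populate all four columns of $Q_+-Q_-$ from the single vector $\vec u$; the case split is the ``simple modification'' that gets around this, with Lemma~\ref{lemma:monomorphism:nonsignaling} doing the work of extracting a diagonally-supported null vector whenever $\vec u$ has a nonzero marginal. The two constructions themselves are routine: in Case~1 the correlations are essentially constant, and in Case~2 the only real point is that choosing $v$ independently of the sign $\pm$ collapses $Q_+-Q_-$ to a single nonzero column equal to $\vec u$.
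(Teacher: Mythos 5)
Your proof is correct, but it takes a genuinely different route from the paper's. The paper uses a single uniform construction: alongside the diagonal vectors $\vec{w}_\pm,\vec{z}_\pm$ of Lemma~\ref{lemma:monomorphism:nonsignaling} it defines $v_\pm = w_\pm + z_\pm - u_\pm$ and feeds the pair $(u_\pm,v_\pm)$ into Lemma~\ref{lemma:2-point-domain:nonsignaling}, so that all four columns of $Q_+-Q_-$ lie in the right nullspace of $P$. You instead split on whether $\vec{u}$ has a nonvanishing marginal: if so, a randomized-constant strategy on a one-point domain already defeats monicity (this case is not present in the paper at all, and it is legitimate since the monomorphism property quantifies over all objects $Z$, including singletons); if not, the matching marginals let you choose the $(1,0)$-column to be the sign-independent product measure $\gamma\otimes\rho$, collapsing $Q_+-Q_-$ to the single column $\vec{u}$. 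Your route has a concrete payoff: the paper's $v_\pm$ equals $-u_\pm$ off the diagonal, hence is negative wherever $u_\pm$ has off-diagonal mass, so it is not literally a probability distribution and Lemma~\ref{lemma:2-point-domain:nonsignaling} does not apply to it as stated; your $v=\gamma\otimes\rho$ is manifestly a probability distribution satisfying both marginal conditions, so the lemma applies cleanly. The price is the case split, which is forced because the product-measure trick requires the $\pm$ marginals to coincide --- exactly the situation your Case~1 excludes. Both arguments otherwise follow the same template as Theorem~\ref{theorem:monomorphism:symmetric} (decompose a putative null vector into probability distributions, build two distinct synchronous test correlations whose stochastic matrices differ by null vectors, and conclude from $p\circ q_+ = p\circ q_-$).
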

\begin{proof}
    Let $p$ be a monomorphic synchronous nonsignaling correlation with associated stochastic matrix $P$. As before, suppose $P\vec{u} = \vec{0}$ and decompose $\vec{u} = \vec{u}_+ - \vec{u}_-$ into vectors that define probability distributions. Now define
    \begin{align*}
        w_\pm(x_A,x_B) &= \indicator{x_A = x_B} \sum_z u_\pm(x_A,z)\\
        z_\pm(x_A,x_B) &= \indicator{x_A = x_B} \sum_z u_\pm(z,x_B)\\
        v_\pm(x_A,x_B) &= w_\pm(x_A,x_B) + z_\pm(x_A,x_B) - u_\pm(x_A,x_B).
    \end{align*}
    Both $\vec{w} = \vec{w}_+ - \vec{w}_-$ and $\vec{z} = \vec{z}_+ - \vec{z}_-$ are in the nullspace of $P$ by Lemma \ref{lemma:monomorphism:nonsignaling}. But then $\vec{v}$ is also in the nullspace of $P$ as it is a linear combination of $\vec{w}$, $\vec{z}$, and $\vec{u}$. As in the previous proof form the correlations
    \begin{align*}
        q_\pm(x_A,x_B\:|\:0,0) &= w_\pm(x_A,x_B)\\
        q_\pm(x_A,x_B\:|\:0,1) &= u_\pm(x_A,x_B)\\
        q_\pm(x_A,x_B\:|\:1,0) &= v_\pm(x_A,x_B)\\
        q_\pm(x_A,x_B\:|\:1,1) &= z_\pm(x_A,x_B).
    \end{align*}
    If we can verify $\sum_{x'} u_\pm(x,x') = \sum_{x'} v_\pm(x',x)$ and $\sum_{x'}u_\pm(x',x) = \sum_{x'} v_\pm(x,x')$ then by Lemma \ref{lemma:2-point-domain:nonsignaling} both $q_\pm \in \Hom^S_{NS}(\{0,1\},X)$, and the proof from the previous proposition carries through. To verify these two facts, we simply work through their definitions:
    \begin{align*}
        \sum_{x'} v_\pm(x',x) &= \sum_{x'} \left(w_\pm(x',x) + z_\pm(x',x) - u_\pm(x',x)\right)\\
        &= \sum_y u_\pm(x,y) + \sum_y u_\pm(y,x) - \sum_{x'} u_\pm(x',x)\\
        &= \sum_y u_\pm(x,y).
    \end{align*}
    and identically $\sum_{x'} u_\pm(x',x) = \sum_{x'} v_\pm(x,x')$.
\end{proof}

\end{section}

\begin{section}{Surjectivity}

Like injectivity, there are two notions of surjective homomorphisms in a general category: retracts (right invertible) and epimorphisms (right cancelable).  In a general $\category{C}$ a morphism $f\in\Hom_\category{C}(A,B)$ is a \emph{retract} if there exists a $g \in \Hom_\category{C}(B,A)$ with $f\circ g = \mathrm{id}_B$, and a \emph{epimorphism} if whenever $g,h\in \Hom_\category{C}(B,D)$ satisfy $g\circ f = h \circ f$ then $g = h$. As expected, every retract is an epimorphism, and in $\category{FinSet}$ both of these notions are equivalent to that of an onto function. Analogously to sections, retractions correspond to onto functions in categories of nonlocal games.

\begin{theorem}\label{theorem:retraction:synchronous}
    In $\category{FinSet}^S$ the retractions are precisely the deterministic correlations
    $$p(y_A, y_B|x_A, x_B) = \indicator{y_A = f_A(x_A,x_B)}\indicator{y_B = f_B(x_A,x_B)}$$
    such that (i) $F = (f_A,f_B):X^2 \to Y^2$ is onto, (ii) $f_A(x,x) = f_B(x,x)$, and (iii) for each $y\in Y$ we have $f_A(x,x) = y = f_B(x,x)$ for some $x\in X$.
\end{theorem}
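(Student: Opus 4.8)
The plan is to run the same two-sided argument used for Theorem~\ref{theorem:section:synchronous}, with the roles of domain and range interchanged. For the forward inclusion, suppose $p\in\Hom^S(X,Y)$ is a retraction and pick $q\in\Hom^S(Y,X)$ with $p\circ q=\id_Y$. Fix $(y_A,y_B)\in Y^2$ and set $\nu(x_A,x_B)=q(x_A,x_B\,|\,y_A,y_B)$; expanding $p\circ q=\id_Y$ gives $\sum_{x_A,x_B}p(y_A',y_B'\,|\,x_A,x_B)\,\nu(x_A,x_B)=\indicator{y_A'=y_A}\indicator{y_B'=y_B}$, i.e.\ the stochastic matrix $P$ sends $\nu$ to a point mass. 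By the same concavity-of-entropy observation used for sections, $\nu$ must itself be a point mass, so $q$ is deterministic: $q(x_A,x_B\,|\,y_A,y_B)=\indicator{x_A=g_A(y_A,y_B)}\indicator{x_B=g_B(y_A,y_B)}$ for a function $G=(g_A,g_B):Y^2\to X^2$. Substituting back, $p(\,\cdot\,|\,G(y_A,y_B))=\delta_{(y_A,y_B)}$; in particular $p$ is deterministic on $\mathrm{im}(G)$, this forces $G$ to be one-to-one, and the function $F=(f_A,f_B):X^2\to Y^2$ recording the deterministic outputs of $p$ satisfies $F\circ G=\id_{Y^2}$.

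From there the three conditions come out by inspecting the diagonal. Every vertex $\delta_{(y_A,y_B)}$ is attained by some column of $P$, so $F$ is onto, giving (i). Synchronicity of $p$ forces $p(\,\cdot\,|\,x,x)$ onto the diagonal of $Y^2$, so $f_A(x,x)=f_B(x,x)$, which is (ii). Synchronicity of $q$ forces $G(y,y)$ onto the diagonal of $X^2$, say $G(y,y)=(x,x)$, and then $(y,y)=F(G(y,y))=F(x,x)$ gives $f_A(x,x)=y=f_B(x,x)$, which is (iii). The step I expect to be the real obstacle is upgrading ``$p$ is deterministic on $\mathrm{im}(G)$'' to ``$p$ is deterministic everywhere'': the convexity argument applied to $p\circ q=\id_Y$ naturally pins down the partner $q$ rather than $p$ itself, so one has to work harder---e.g.\ by exploiting that $PQ$ is the full identity (so $P$ has full row rank $|Y|^2$) together with synchronicity---to argue that no column of $P$ can fail to be a point mass.

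For the converse, let $p$ be deterministic of the stated form satisfying (i)--(iii). Using (iii), choose for each $y\in Y$ some $h(y)\in X$ with $f_A(h(y),h(y))=y=f_B(h(y),h(y))$; using (i), choose for each off-diagonal pair $(y_A,y_B)\in Y^2$ a point $G(y_A,y_B)\in F^{-1}(y_A,y_B)$, and for a diagonal pair set $G(y,y)=(h(y),h(y))$. Then $G=(g_A,g_B):Y^2\to X^2$ satisfies $F\circ G=\id_{Y^2}$ and $g_A(y,y)=g_B(y,y)$, so the deterministic correlation $q(x_A,x_B\,|\,y_A,y_B)=\indicator{x_A=g_A(y_A,y_B)}\indicator{x_B=g_B(y_A,y_B)}$ is synchronous and lies in $\Hom^S(Y,X)$. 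A computation identical to the one in the proof of Theorem~\ref{theorem:section:synchronous} then gives $(p\circ q)(y_A',y_B'\,|\,y_A,y_B)=\indicator{y_A'=f_A(G(y_A,y_B))}\indicator{y_B'=f_B(G(y_A,y_B))}=\indicator{y_A'=y_A}\indicator{y_B'=y_B}$, so $p\circ q=\id_Y$ and $p$ is a retraction. This direction is routine bookkeeping; the only point needing care is that conditions (ii) and (iii) are precisely what make the diagonal choices $G(y,y)$ both available and synchronous.
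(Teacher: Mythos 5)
Your converse direction is correct and is essentially the paper's: conditions (ii) and (iii) are exactly what make the diagonal choices $G(y,y)=(x,x)$ available and synchronous, and the closing computation is the same bookkeeping as in Theorem \ref{theorem:section:synchronous}. Your forward direction also opens the same way the paper does---$p\circ q=\id_Y$ says $q$ is a section, so (by Theorem \ref{theorem:section:synchronous}, or your re-run of the entropy argument) $q$ is deterministic, given by a one-to-one $G:Y^2\to X^2$ with $g_A=g_B$ exactly on the diagonal---and your derivations of (i)--(iii) from $F\circ G=\id_{Y^2}$ match the paper's.

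The step you flag as ``the real obstacle'' is, however, a genuine gap, and the repair you sketch cannot work. That $PQ$ is the identity gives $P$ full row rank, but full row rank says nothing about the remaining $|X|^2-|Y|^2$ columns of $P$ being point masses, and in fact they need not be. Concretely, take $X=\{0,1,2\}$, $Y=\{0,1\}$, let $G:Y^2\to X^2$ be the inclusion $G(y_A,y_B)=(y_A,y_B)$, set $p(\,\cdot\,|\,x_A,x_B)=\delta_{(x_A,x_B)}$ for $(x_A,x_B)\in Y^2\subseteq X^2$, let $p(\,\cdot\,|\,2,2)$ be uniform on the diagonal of $Y^2$, and let $p(\,\cdot\,|\,x_A,x_B)$ be uniform on $Y^2$ for the remaining off-diagonal inputs. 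This $p$ is synchronous, satisfies $p\circ q=\id_Y$ for the deterministic $q$ built from $G$, and is not deterministic. So a retraction is only forced to be deterministic on $\mathrm{im}(G)$, exactly as you observed, and no amount of extra work will force determinism everywhere. You have in fact put your finger on an imprecision in the theorem itself: the paper's proof quietly sidesteps the issue by never verifying that $p$ equals the deterministic correlation of the $F$ it constructs---it only extracts $G$ from Theorem \ref{theorem:section:synchronous}, extends the partial inverse of $G$ to an $F$ on all of $X^2$, and checks (i)--(iii) for that $F$. The statement your argument actually proves, and the one that is true for $|X|>|Y|$, is that $p$ is a retraction if and only if the columns of $P$ include all $|Y|^2$ point masses $\delta_{(y_A,y_B)}$, with each diagonal one $\delta_{(y,y)}$ occurring at some diagonal input $(x,x)$. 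The right move is not to work harder at the upgrade but to recognize that it is false and adjust the characterization accordingly.
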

\begin{proof}
    If $p\in\Hom^S(X,Y)$ is a retract then $p\circ q = \id_Y$, which is precisely the condition for $q\in\Hom^S(Y,X)$ being a section. By Theorem \ref{theorem:section:synchronous} we have
    $$q(x_A, x_B \:|\: y_A, y_B) = \indicator{x_A = g_A(y_A, y_B)}\indicator{x_B = g_B(y_A, y_B)}$$
    where $G = (g_A,g_B):Y^2 \to X^2$ is one-to-one and $g_A(y_A,y_B) = g_B(y_A,y_B)$ if and only if $y_A = y_B$. Just as in the proof of that theorem fix a $\hat{y}\in Y$ and define
    $$F(x_A, x_B) = \left\{\begin{array}{cl}
        (y_A, y_B) & \text{if $(x_A, x_B) = G(y_A, y_B)$,}\\
        (\hat{y},\hat{y}) & \text{otherwise.}\end{array}\right.$$
    Clearly $F$ is onto since $G$ is defined on all of $Y^2$. If $(x,x) = G(y_A, y_B)$ then we must have $f_A(x,x) = y_A = y_B = f_B(x,x)$; otherwise $f_A(x,x) = \hat{y} = f_B(x,x)$. Finally, fix an arbitrary $y\in Y$. Then we have $x_A = g_A(y,y) = g_B(y,y) = x_B$, and so set $x = x_A = x_B$. By construction this $x$ has $F(x,x) = (y,y)$ or equivalently $f_A(x,x) = y = f_B(x,x)$.
    
    Conversely, suppose $p$ is as given in the statement of the theorem. Clearly $p$ is synchronous given its formula. Construct $G:Y^2 \to X^2$ by selecting some $(x_A,x_B) \in F^{-1}(y_A,y_B)$ for each $(y_A,y_B) \in Y^2$, under the condition that for $(y_A,y_B) = (y,y)$ we select an $(x,x) \in F^{-1}(y,y)$ whose existence is guaranteed. Then set $G(y_A,y_B) = (x_A, x_B)$, and define
    $$q(x_A, x_B \:|\: y_A, y_B) = \indicator{x_A = g_A(y_A, y_B)}\indicator{x_B = g_B(y_A, y_B)}.$$
    As $G(y,y) = (x,x)$ we have $q$ is synchronous and just as in Theorem \ref{theorem:section:synchronous} one computes $p\circ q = \id_Y$. Thus $p\in\Hom^S(X,Y)$ is a retract.
\end{proof}

\begin{corollary}\label{corollary:retraction:nonsignaling}
    The retractions in $\category{FinSet}^S_{NS}$ are precisely the retractions in $\category{FinSet}$.
\end{corollary}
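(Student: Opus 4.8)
The plan is to mirror the proof of Corollary \ref{corollary:section:nonsignaling} essentially verbatim, exploiting that a retraction is the categorical dual of a section and that the nonsignaling marginal argument is symmetric between Alice and Bob. First I would dispose of the easy inclusion: if $p\in\Hom(X,Y)$ is a retraction in $\category{FinSet}$, there is a function $g:Y\to X$ with $p\circ g=\id_Y$; since a function is in particular a deterministic (hence nonsignaling) synchronous correlation, $g\in\Hom^S_{NS}(Y,X)$, so $p$ is already a retraction in $\category{FinSet}^S_{NS}$.

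For the converse, let $p\in\Hom^S_{NS}(X,Y)$ be a retraction. Because $\Hom^S_{NS}(Y,X)\subseteq\Hom^S(Y,X)$, the inverse witnessing this also lives in $\Hom^S(Y,X)$, so $p$ is a retraction in $\category{FinSet}^S$ and Theorem \ref{theorem:retraction:synchronous} applies: $p(y_A,y_B|x_A,x_B)=\indicator{y_A=f_A(x_A,x_B)}\indicator{y_B=f_B(x_A,x_B)}$ for some $F=(f_A,f_B):X^2\to Y^2$ that is onto, with $f_A(x,x)=f_B(x,x)$ for all $x$, and such that for every $y\in Y$ there is an $x$ with $f_A(x,x)=y=f_B(x,x)$. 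Now I would invoke the nonsignaling conditions (\ref{equation:nonsignaling-condition}) exactly as in Corollary \ref{corollary:section:nonsignaling}: $\sum_{y_B}p(y_A,y_B|x_A,x_B)=\indicator{y_A=f_A(x_A,x_B)}$ must be independent of $x_B$, so $f_A(x_A,x_B)=f(x_A)$ depends on $x_A$ alone; symmetrically $f_B(x_A,x_B)=g(x_B)$; and the diagonal condition forces $f(x)=f_A(x,x)=f_B(x,x)=g(x)$. Hence $p(y_A,y_B|x_A,x_B)=\indicator{y_A=f(x_A)}\indicator{y_B=f(x_B)}$, i.e.\ $p\in\Hom(X,Y)$ is the correlation of an honest function $f:X\to Y$. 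Finally, condition (iii) of Theorem \ref{theorem:retraction:synchronous} says that for each $y\in Y$ there is $x$ with $f(x)=y$, so $f$ is onto and therefore a retraction in $\category{FinSet}$.

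There is no substantive obstacle: Theorem \ref{theorem:retraction:synchronous} does the heavy lifting, and the only thing requiring care is to use all three of its conclusions in the right way—the marginal (nonsignaling) argument collapses $F$ from a map $X^2\to Y^2$ to a single function $f:X\to Y$, the diagonal condition $f_A(x,x)=f_B(x,x)$ identifies the two coordinate functions, and condition (iii) is precisely surjectivity of the resulting $f$.
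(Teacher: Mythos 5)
Your proof is correct and takes exactly the route the paper intends: the paper omits this proof, stating it is essentially identical to that of Corollary \ref{corollary:section:nonsignaling}, and your write-up is precisely that adaptation (reduce to $\category{FinSet}^S$ and apply Theorem \ref{theorem:retraction:synchronous}, use the nonsignaling marginals to collapse $F$ to a single function $f$, and read off surjectivity of $f$ from condition (iii)). Nothing further is needed.
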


\begin{corollary}\label{corollary:retraction:other}
    The retractions in $\category{FinSet}^S_{HV}$ and $\category{FinSet}^S_{HV}$ are precisely the retractions in $\category{FinSet}$.
\end{corollary}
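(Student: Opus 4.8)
The plan is to mirror the proof of Corollary~\ref{corollary:section:other}, replacing its appeal to Corollary~\ref{corollary:section:nonsignaling} by the analogous nonsignaling statement for retractions, Corollary~\ref{corollary:retraction:nonsignaling}. (Note the statement should read ``$\category{FinSet}^S_{HV}$ and $\category{FinSet}^S_Q$''; the second category is quantum, not a repeat of the hidden-variables one.) The whole argument is a formal consequence of the inclusion chain $\Hom(X,Y) \subseteq \Hom^S_{HV}(X,Y) \subseteq \Hom^S_Q(X,Y) \subseteq \Hom^S_{NS}(X,Y)$ together with the already-established nonsignaling case, so no genuinely new construction is needed.

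For the ``easy'' direction, I would observe that if $f\colon X\to Y$ is an onto function then it has a set-theoretic right inverse $g\colon Y\to X$ (choose a preimage of each element of $Y$), and the deterministic correlations associated to $f$ and $g$ witness $f\circ g = \id_Y$; since a deterministic correlation has the classical form (\ref{equation:classical:form}) with $\mu$ a point mass, both of these correlations lie in $\Hom^S_{HV}$ and hence in $\Hom^S_Q$. Thus every retraction of $\category{FinSet}$ is a retraction in each of $\category{FinSet}^S_{HV}$ and $\category{FinSet}^S_Q$.

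For the converse, suppose $p\in\Hom^S_Q(X,Y)$ (the case $p\in\Hom^S_{HV}(X,Y)$ being subsumed) is a retraction, say with $p\circ q = \id_Y$ for some $q\in\Hom^S_Q(Y,X)$. Because $\Hom^S_Q(X,Y)\subseteq\Hom^S_{NS}(X,Y)$ and $\Hom^S_Q(Y,X)\subseteq\Hom^S_{NS}(Y,X)$, the very same pair $(p,q)$ exhibits $p$ as a retraction in $\category{FinSet}^S_{NS}$; by Corollary~\ref{corollary:retraction:nonsignaling} it is then a retraction in $\category{FinSet}$, i.e.\ $p$ is (the deterministic correlation of) an onto function, so $p\in\Hom(X,Y)$. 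Combining the two directions yields the claim. I do not anticipate a real obstacle: the only point that needs a sentence of care is that being a section/retraction is inherited downward along the faithful inclusions of these categories, simply because both the morphism and its one-sided inverse survive in the larger hom-set — exactly the step that does the work in Corollary~\ref{corollary:section:other}.
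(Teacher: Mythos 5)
Your proposal is correct and is essentially the argument the paper intends: the paper explicitly omits this proof as ``essentially identical'' to that of Corollary~\ref{corollary:section:other}, which is exactly the two-step inclusion argument you give (onto functions are retractions in every category by the chain $\Hom \subseteq \Hom^S_{HV} \subseteq \Hom^S_Q \subseteq \Hom^S_{NS}$, and any retraction in the smaller categories persists to $\category{FinSet}^S_{NS}$ and is then forced into $\Hom(X,Y)$ by Corollary~\ref{corollary:retraction:nonsignaling}). You are also right that the second category in the statement should read $\category{FinSet}^S_{Q}$.
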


The proofs of Corollaries \ref{corollary:retraction:nonsignaling} and \ref{corollary:retraction:other} are essentially identical to those of Corollaries \ref{corollary:section:nonsignaling} and \ref{corollary:section:other} respectively and so are omitted. Similarly, we can draw some immediate conclusions about the nature of epimorphism by dualizing the corresponding results on monomorphisms. For example the proof of the following lemma is structurally identical as that of Lemma \ref{lemma:monomorphism:nullspace} and so is left to the reader.

\begin{lemma}\label{lemma:epimorphism:nullspace}
    In each of $\category{FinSet}^S_{HV}$, $\category{FinSet}^S_Q$, $\category{FinSet}^S_{NS}$, and $\category{FinSet}^S$ any correlation whose stochastic map has zero left nullspace is an epimorphism.
\end{lemma}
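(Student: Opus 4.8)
The plan is to dualize the proof of Lemma~\ref{lemma:monomorphism:nullspace} verbatim, swapping left for right throughout. Concretely, suppose $p\in\Hom^S(Y,Z)$ has stochastic matrix $P$ with zero left nullspace, and suppose $g,h\in\Hom^S(Z,W)$ satisfy $g\circ p = h\circ p$. Passing to the associated stochastic matrices, this reads $(G-H)P = 0$, so every row of $G-H$ lies in the left nullspace of $P$. Since that nullspace is trivial, $G = H$, hence $g = h$, which is exactly the epimorphism condition. The same computation applies unchanged in $\category{FinSet}^S_{HV}$, $\category{FinSet}^S_Q$, and $\category{FinSet}^S_{NS}$, since the argument only uses that composition is matrix multiplication and that the relevant hom-sets are closed under it (already established).

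First I would fix notation consistent with the epimorphism definition given just above (a morphism $p\in\Hom_\category{C}(A,B)$ is epic if $g\circ p = h\circ p$ forces $g=h$ for all $g,h\in\Hom_\category{C}(B,D)$), so that $p$ has domain the ``middle'' object. Then I would record the one-line linear-algebra fact: for a matrix $P$ with trivial left kernel, $MP = NP$ implies $M = N$ (apply the hypothesis to each row of $M-N$). Finally I would remark that closure of composition in each category guarantees $g\circ p$, $h\circ p$ are again morphisms of the appropriate type, so no subtlety arises in restricting to a subcategory.

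There is no real obstacle here; the only thing to be careful about is bookkeeping of which side the nullspace sits on and which side the cancellation happens on, so that the ``zero left nullspace'' hypothesis matches the ``right-cancellable'' conclusion. Given the stated intent to leave this ``to the reader,'' the cleanest presentation is simply to point to Lemma~\ref{lemma:monomorphism:nullspace} and say the proof is obtained by transposing every stochastic matrix in that argument.

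\begin{proof}
    This is the transpose of the proof of Lemma~\ref{lemma:monomorphism:nullspace}. Let $p\in\Hom^S(Y,Z)$ have stochastic matrix $P$ with zero left nullspace. If $g_1,g_2\in\Hom^S(Z,W)$ satisfy $g_1\circ p = g_2\circ p$, then their stochastic matrices $G_1,G_2$ satisfy $(G_1-G_2)P = 0$. In particular each row of $G_1-G_2$ lies in the left nullspace of $P$, so $G_1 = G_2$ and hence $g_1 = g_2$. The identical argument works in the other categories, since each is closed under composition.
\end{proof}
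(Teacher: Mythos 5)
Your proof is correct and is exactly the dualization the paper has in mind when it says the argument is ``structurally identical'' to Lemma~\ref{lemma:monomorphism:nullspace} and leaves it to the reader: the composition $g\circ p$ corresponds to the product $GP$ of column-stochastic matrices, so $(G_1-G_2)P=0$ puts each row of $G_1-G_2$ in the left nullspace of $P$, forcing $G_1=G_2$. Nothing further is needed.
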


\begin{theorem}\label{theorem:epimorphism:synchronous}
    In $\category{FinSet}^S$ epimorphisms are precisely those correlations whose stochastic matrices have zero left nullspace.
\end{theorem}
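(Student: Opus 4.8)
The plan is to establish the two implications of the ``precisely'' claim separately. One direction --- a stochastic matrix with zero left nullspace yields an epimorphism --- is exactly Lemma \ref{lemma:epimorphism:nullspace}, so all the work lies in the converse: if $p \in \Hom^S(X,Y)$ is an epimorphism with associated stochastic matrix $P$, then $P$ has trivial left nullspace. This is the mirror image of Proposition \ref{proposition:synchronous:monomorphism}: where that argument uses correlations $q_\pm \in \Hom^S(\{0,1\},X)$ whose stochastic matrices differ by a column vector lying in the right nullspace of $P$, here I would build correlations $q_\pm \in \Hom^S(Y,\{0,1\})$ whose stochastic matrices differ by a row vector lying in the left nullspace of $P$.

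In detail, I would assume $\vec u^{\,T} P = \vec 0$ for some $\vec u \in \mathbb{R}^{|Y|^2}$ with $\vec u \neq \vec 0$ and derive a contradiction. Since $\vec u^{\,T}P = \vec 0$ is unchanged under scaling, first rescale so that $|u(y_A,y_B)| \le 1$ for all $y_A,y_B \in Y$, then set
\begin{align*}
q_\pm(0,0 \:|\: y_A,y_B) &= \tfrac12\bigl(1 \pm u(y_A,y_B)\bigr), & q_\pm(0,1 \:|\: y_A,y_B) &= 0,\\
q_\pm(1,0 \:|\: y_A,y_B) &= 0, & q_\pm(1,1 \:|\: y_A,y_B) &= \tfrac12\bigl(1 \mp u(y_A,y_B)\bigr).
\end{align*}
The bound $|u|\le 1$ makes all four entries nonnegative, and for each $(y_A,y_B)$ they sum to $1$, so these are genuine correlations; because $q_\pm(0,1 \:|\: y,y) = q_\pm(1,0 \:|\: y,y) = 0$, both are synchronous, hence elements of $\Hom^S(Y,\{0,1\})$.

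Next I would observe that the difference $Q_+ - Q_-$ of the associated stochastic matrices has row $\vec u^{\,T}$ in the position indexed by output $(0,0)$, row $-\vec u^{\,T}$ in the position indexed by $(1,1)$, and zero rows in the positions indexed by $(0,1)$ and $(1,0)$. Hence $(Q_+-Q_-)P = 0$, i.e.\ $q_+\circ p = q_-\circ p$; since $p$ is an epimorphism this forces $q_+ = q_-$, and comparing the $(0,0)$ rows gives $u(y_A,y_B)=0$ for all $y_A,y_B$ --- contradicting $\vec u \neq \vec 0$. Therefore $P$ has zero left nullspace, which completes the characterization.

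I expect the only subtle point to be the one that is also delicate in Proposition \ref{proposition:synchronous:monomorphism}: manufacturing the two comparison correlations so that they are simultaneously synchronous and differ by exactly the nullspace vector (and by nothing else). Putting all the probability mass of $q_\pm$ on the ``diagonal'' outputs $(0,0)$ and $(1,1)$, split as $\tfrac12(1\pm u)$, threads both needles at once, which is why the bound $|u|\le 1$ --- rather than the $\ell^1$-normalization used in the monomorphism arguments --- is the natural one here.
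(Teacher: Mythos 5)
Your proposal is correct and follows essentially the same route as the paper: both put all the probability mass of the comparison correlations $q_\pm\in\Hom^S(Y,\{0,1\})$ on the synchronous outputs $(0,0)$ and $(1,1)$ so that synchronicity is automatic and $Q_+-Q_-$ has rows $\pm\vec{u}^{\,T}$ there. The only cosmetic difference is that the paper splits the nullspace vector into positive and negative parts with an $\ell^\infty$ normalization and sets $q_\pm(1,1\:|\:y_A,y_B)=w_\pm(y_A,y_B)$, whereas you fold the sign into $\tfrac12(1\pm u)$ directly; both variants are valid.
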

\begin{proof}
    Lemma 21 provides one implication. To prove the converse, suppose $p\in\Hom^S(X,Y)$ is an epimorphism with associated stochastic matrix $P$ and let $\covec{w}P = \covec{0}$. Similar to what we have done before, decompose $\covec{w} = \covec{w}_+ - \covec{w}_-$ into is positive and negative parts. Renormalize if necessary so that $\|\covec{w}_\pm\|_\infty \leq 1$. Define
    \begin{align*}
        q_\pm(0,0\:|\: y_A, y_B) &= 1 - w_\pm(y_A, y_B)\\
        q_\pm(0,1\:|\: y_A, y_B) &= 0\\
        q_\pm(1,0\:|\: y_A, y_B) &= 0\\
        q_\pm(1,1\:|\: y_A, y_B) &= w_\pm(y_A, y_B).
    \end{align*}
    Clearly $q_\pm\in\Hom^S(Y, \{0,1\})$. If $Q_\pm$ are the associated stochastic map then
    $$Q_+ - Q_- = \left(\begin{array}{c} -\covec{w} \\ \covec{0} \\ \covec{0} \\ \covec{w} \end{array}\right),$$
    and therefore $(Q_+ - Q_-)P = 0$ and $q_+\circ p = q_-\circ p$. As $p$ is an epimorphisms we have $Q_+ = Q_-$ and hence $\covec{w} = \covec{0}$.
\end{proof}

Proving converses of Lemma \ref{lemma:monomorphism:nullspace} required intricate knowledge about how synchronous correlations behave in each of our categories. While none of those constructions dualize readily we can obtain the analogous results through different arguments.

\begin{theorem}\label{theorem:epimorphism:nonsignaling}
    In $\category{FinSet}^S_{NS}$ epimorphisms are precisely those correlations whose stochastic matrices have zero left nullspace.
\end{theorem}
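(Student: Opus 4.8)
The plan is to mirror the structure of Theorem~\ref{theorem:epimorphism:synchronous}. One implication is Lemma~\ref{lemma:epimorphism:nullspace}, so I would fix an epimorphism $p\in\Hom^S_{NS}(X,Y)$ with stochastic matrix $P$, take a left nullvector $\covec{w}P=\covec{0}$, assume for contradiction that $\covec{w}\neq\covec{0}$ (when $|Y|=1$ the left nullspace is automatically trivial, so assume $|Y|\geq 2$), and produce $q_+\neq q_-$ in $\Hom^S_{NS}(Y,\{0,1\})$ with $q_+\circ p=q_-\circ p$. The obstruction flagged after Theorem~\ref{theorem:epimorphism:synchronous} is that the $q_\pm$ used there are not nonsignaling, so I must instead obtain $q_\pm$ from Lemma~\ref{lemma:2-point-range:nonsignaling}. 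If $q_\pm$ corresponds to a nonnegative $\tilde w_\pm$ on $Y^2$ obeying conditions (1)--(3) of that lemma, and $\delta=\tilde w_+-\tilde w_-$, then the rows of $Q_+-Q_-$ (indexed by $(z_A,z_B)$) are $\delta(y_A,y_B)$, $\delta(y_A,y_A)-\delta(y_A,y_B)$, $\delta(y_B,y_B)-\delta(y_A,y_B)$, and $\delta(y_A,y_B)-\delta(y_A,y_A)-\delta(y_B,y_B)$. Hence $q_+\circ p=q_-\circ p$ holds as soon as the three vectors $\delta$, $\covec a$, $\covec b$ --- where $a(y_A,y_B)=\delta(y_A,y_A)$ and $b(y_A,y_B)=\delta(y_B,y_B)$ --- all lie in the left nullspace of $P$.

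This points to a lemma dual to Lemma~\ref{lemma:monomorphism:nonsignaling}: for synchronous nonsignaling $p$, if $\covec w P=\covec 0$ then $\covec a P=\covec b P=\covec 0$ with $a(y_A,y_B)=w(y_A,y_A)$ and $b(y_A,y_B)=w(y_B,y_B)$. I would prove it by the same manoeuvre as Lemma~\ref{lemma:monomorphism:nonsignaling}: to evaluate $\sum_{y_A,y_B}w(y_A,y_A)\,p(y_A,y_B\:|\:x_A,x_B)$ first sum on $y_B$, use nonsignaling to replace the $x_B$ argument by $x_A$, then use synchronicity to collapse $\sum_{y_B}p(y_A,y_B\:|\:x_A,x_A)=p(y_A,y_A\:|\:x_A,x_A)$; this leaves $\sum_{y_A}w(y_A,y_A)\,p(y_A,y_A\:|\:x_A,x_A)$, which is exactly $\covec w P$ evaluated at $(x_A,x_A)$ (only diagonal terms survive there by synchronicity), hence zero. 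The claim for $\covec b$ is the same after summing on $y_A$ and using nonsignaling in the other argument.

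It then remains to write an arbitrary nullvector as a difference $\covec w=\tilde w_+-\tilde w_-$ with both $\tilde w_\pm$ satisfying the positivity inequalities (1)--(3). Since the left nullspace is a subspace I may first rescale so that $\|\covec w\|_\infty$ is as small as desired. I would then set $\tilde w_-(y_A,y_B)=\alpha\,\indicator{y_A=y_B}+\beta$ and $\tilde w_+=\tilde w_-+\covec w$, for small fixed constants $0<\beta<\alpha$ (e.g.\ $\alpha=3\|\covec w\|_\infty$, $\beta=2\|\covec w\|_\infty$). A routine check shows that once $\|\covec w\|_\infty$ is small enough (say at most $\tfrac{1}{11}$) both $\tilde w_\pm$ are nonnegative and satisfy (1)--(3): the ``diagonal bump'' $\alpha$ absorbs the possible failure of $\tilde w_+(y_A,y_B)\leq\tilde w_+(y_A,y_A)$ introduced by adding $\covec w$, while $\beta$ together with the smallness of $\covec w$ preserves condition (3) and nonnegativity. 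With $q_\pm$ so defined via Lemma~\ref{lemma:2-point-range:nonsignaling} and $\delta=\covec w$, the row computation above and the dual lemma give $(Q_+-Q_-)P=0$, so $q_+\circ p=q_-\circ p$; since $p$ is epic, $q_+=q_-$, hence $Q_+=Q_-$, and comparing the $(1,1)$ rows (which recover $\tilde w_\pm$) yields $\covec w=\covec 0$ --- the desired contradiction.

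The part I expect to take real thought is the middle step: recognizing that the left nullspace of a synchronous nonsignaling correlation carries the ``diagonalization'' closure property $\covec w\mapsto\covec a,\covec b$, and then choosing the base function $\tilde w_-$ so that every suitably rescaled nullvector becomes a difference of two functions each obeying the one-sided inequalities demanded by Lemma~\ref{lemma:2-point-range:nonsignaling}. Once those two points are in place, the rest is bookkeeping parallel to Theorem~\ref{theorem:monomorphism:symmetric}.
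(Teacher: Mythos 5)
Your proposal is correct and follows essentially the same route as the paper: the same ``diagonalization'' lemma for the left nullspace (proved by the same sum-on-$y_B$, nonsignaling, synchronicity manoeuvre), the same appeal to Lemma~\ref{lemma:2-point-range:nonsignaling}, and the same row computation for $Q_+-Q_-$. The only difference is cosmetic: the paper splits $\covec{w}$ into positive and negative parts each bumped by $\tfrac14$ on the diagonal, whereas you take a constant-plus-diagonal-bump base $\tilde w_-$ and set $\tilde w_+=\tilde w_-+\covec{w}$; both satisfy the hypotheses of that lemma after rescaling.
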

\begin{proof}
    Lemma \ref{lemma:epimorphism:nullspace} provides one implication. To prove the converse, suppose $p\in\Hom^S_{NS}(X,Y)$ is an epimorphism with associated stochastic matrix $P$ and let $\covec{w}P = \covec{0}$. Synchronicity gives
    \begin{equation*}
        0 = \sum_{y_A,y_B} w(y_A,y_B) p(y_A,y_B\:|\:x,x) =  \sum_{y} w(y,y) p(y,y\:|\:x,x)
    \end{equation*}
    Define $u(y_A,y_B) = w(y_A,y_A)$ (independent of $y_B$). Then from the nonsignaling condition
    \begin{equation*}
        \sum_{y_A,y_B} u(y_A,y_B) p(y_A,y_B\:|\:x_A,x_B) = \sum_{y_A} w(y_A,y_A) \sum_{y_B} p(y_A,y_B\:|\:x_A,x_B)
    \end{equation*}
    is independent of $x_B$. And so we may take $x_B = x_A$, which gives
    \begin{align*}
        \sum_{y_A,y_B} u(y_A,y_B) p(y_A,y_B\:|\:x_A,x_B) &= \sum_{y_A} w(y_A,y_A) \sum_{y_B} p(y_A,y_B\:|\:x_A,x_A)\\
        &= \sum_{y_A} w(y_A,y_A) p(y_A,y_A\:|\:x_A,x_A) = 0.
    \end{align*}
    That is, $\covec{u}$ also satisfies $\covec{u}P = \covec{0}$. Identically we define $v(y_A,y_B) = w(y_B,y_B)$ which has $\covec{v}P = \covec{0}$.
    
    Scaling $\covec{w}$ if necessary, we may assume $\|\covec{w}\|_\infty \leq \frac{1}{4}$. As before, we decompose $\covec{w} = \covec{w}_1 - \covec{w}_2$, but here we choose
    \begin{align*}
        w_+(y_A,y_B) &= \frac{1}{4}\delta_{y_A,y_B} + \max\{w(y_A,y_B), 0\}\\
        w_-(y_A,y_B) &= \frac{1}{4}\delta_{y_A,y_B} + \max\{-w(y_A,y_B), 0\}.
    \end{align*}
    That is, $\covec{w}_+$ and $\covec{w}_-$ are the positive and negative parts of $\covec{w}$ as usual, but with an additional $\frac{1}{4}$ added to each element where $y_A=y_B$. When $y_A\not= y_B$, we have $0 \leq w_\pm(y_A,y_B) \leq \frac{1}{4}$, but when $y_A = y_B = y$ we have $\frac{1}{4} \leq w_\pm(y,y) \leq \frac{1}{2}$. Therefore $w_+$ and $w_-$ both satisfy the conditions of Lemma \ref{lemma:2-point-range:nonsignaling}.
    
    Create the strategies $q_\pm\in\Hom^S_{NS}(Y,\{0,1\})$ according to Lemma \ref{lemma:2-point-range:nonsignaling}:
    \begin{align*}
        q_\pm(0,0\:|\:y_A,y_B) &= 1 + w_\pm(y_A,y_B) - w_\pm(y_A,y_A) - w_\pm(y_B,y_B)\\
        q_\pm(0,1\:|\:y_A,y_B) &= w_\pm(y_B,y_B) - w_\pm(y_A,y_B),\\
        q_\pm(1,0\:|\:y_A,y_B) &= w_\pm(y_A,y_A) - w_\pm(y_A,y_B),\\
        q_\pm(1,1\:|\:y_A,y_B) &= w_\pm(y_A,y_B).
    \end{align*}
    Then the stochastic matrices $Q_\pm$ associated to these strategies satisfy
    \begin{equation*}
        Q_+ - Q_- = \left(\begin{array}{c} \covec{w} - \covec{u} - \covec{v} \\ \covec{v}-\covec{w} \\ \covec{u}-\covec{w} \\ \covec{w} \end{array}\right).
    \end{equation*}
    Hence, $(Q_+ - Q_-)P = 0$ and $q_1\circ p = q_2\circ p$. But as $p$ is an epimorphism we have $Q_+ = Q_-$ and therefore $\covec{w} = \covec{0}$.
\end{proof}

Unfortunately, to extend the above proof to the case of hidden variables or quantum correlations one needs an analogue of Lemma \ref{lemma:2-point-range:nonsignaling} for these categories. However we know of no such results. To treat the classical and quantum cases, we first show that the left nullspace of a symmetric synchronous correlation decomposes into symmetric and antisymmetric parts. Then we prove that the symmetric nullspace and skew nullspace must independently vanish.

\begin{lemma}\label{lemma:epimorphism:decomposition}
    Let $p$ be a symmetric synchronous correlation with associated stochastic matrix $P$. Suppose $\covec{w}P = \covec{0}$. Then $v(y_A,y_B) = w(y_B,y_A)$ also has $\covec{v}P = \covec{0}$. In particular, if we decompose $\covec{w}$ into its symmetric and antisymmetric parts,
    \begin{align*}
        w^{(+)}(y_A,y_B) &= \tfrac{1}{2}(w(y_A,y_B) + w(y_B,y_A)) \text{ and}\\
        w^{(-)}(y_A,y_B) &= \tfrac{1}{2}(w(y_A,y_B) - w(y_B,y_A)),
    \end{align*}
    then $\covec{w}^{(\pm)}P = \covec{0}$.
\end{lemma}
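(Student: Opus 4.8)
The plan is to dualise the argument of Lemma \ref{lemma:monomorphism:symmetric}: whatever worked for the right nullspace of a symmetric correlation works verbatim for the left nullspace, since the symmetry relation $p(y_A,y_B\mid x_A,x_B) = p(y_B,y_A\mid x_B,x_A)$ is precisely what interchanges a relabelling of outputs with a relabelling of inputs. First I would write the hypothesis $\covec{w}P = \covec{0}$ out componentwise as
\[
    \sum_{y_A,y_B} w(y_A,y_B)\, p(y_A,y_B\mid x_A,x_B) = 0 \qquad \text{for every } x_A,x_B\in X.
\]

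Next, to show $\covec{v}P = \covec{0}$ for $v(y_A,y_B) = w(y_B,y_A)$, I would evaluate the $(x_A,x_B)$-component of $\covec{v}P$, relabel the dummy summation indices $y_A \leftrightarrow y_B$ so that $v$ is rewritten back in terms of $w$, and then apply the symmetry of $p$:
\begin{align*}
    \sum_{y_A,y_B} v(y_A,y_B)\, p(y_A,y_B\mid x_A,x_B)
    &= \sum_{y_A,y_B} w(y_B,y_A)\, p(y_A,y_B\mid x_A,x_B)\\
    &= \sum_{y_A,y_B} w(y_A,y_B)\, p(y_B,y_A\mid x_A,x_B)\\
    &= \sum_{y_A,y_B} w(y_A,y_B)\, p(y_A,y_B\mid x_B,x_A)\ =\ 0,
\end{align*}
the last equality being the $(x_B,x_A)$-component of the hypothesis. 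As $(x_A,x_B)$ was arbitrary, $\covec{v}P = \covec{0}$.

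Finally, the left nullspace of $P$ is a linear subspace containing both $\covec{w}$ and $\covec{v}$, hence it contains $\covec{w}^{(+)} = \tfrac12(\covec{w}+\covec{v})$ and $\covec{w}^{(-)} = \tfrac12(\covec{w}-\covec{v})$, which are exactly the claimed symmetric and antisymmetric parts of $\covec{w}$. I expect no genuine obstacle; the only point requiring care is the index bookkeeping in the swap, together with the observation that symmetry of $p$ is what turns the output relabelling into the input swap that makes the hypothesis directly applicable.
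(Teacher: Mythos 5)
Your proposal is correct and matches the paper's proof essentially line for line: the same relabelling of the dummy indices followed by the symmetry of $p$, with the symmetric and antisymmetric parts then handled by linearity of the left nullspace. No issues.
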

\begin{proof}
    This follows from a straightforward calculation:
    \begin{align*}
        \sum_{y_A,y_B} v(y_A,y_B) p(y_A,y_B\:|\:x_A,x_B) 
        &= \sum_{y_A,y_B} w(y_B,y_A) p(y_A,y_B\:|\: x_A,x_B)\\
        &= \sum_{y_A,y_B} w(y_A,y_B) p(y_B,y_A\:|\: x_A,x_B)\\
        &= \sum_{y_A,y_B} w(y_A,y_B) p(y_A,y_B\:|\: x_B,x_A)\ =\ 0.
    \end{align*}
\end{proof}

In \cite{lackey2017nonlocal} we saw that there are nontrivial inequalities---even beyond symmetry---that must be satisfied to conclude that a nonsignaling correlation is classical. Nonetheless the proof of Theorem \ref{theorem:epimorphism:nonsignaling} did not require a full characterization of the elements of $\Hom^S_{NS}(X,\{0,1\})$. In the appendix we prove Lemma \ref{lemma:2-point-range:classical} that provides a sufficient condition for a correlation to be classical.

\begin{proposition}\label{proposition:epimorphisms:symmetric}
    Suppose $p\in\Hom^S_{NS}(X,Y)$ is epic and symmetric with associated stochastic matrix $P$ and suppose $\covec{w}P = \covec{0}$ with $w(y_A,y_B) = w(y_B,y_A)$. Then $\covec{w} = \covec{0}$.
\end{proposition}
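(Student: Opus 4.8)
The plan is to replay the proof of Theorem \ref{theorem:epimorphism:nonsignaling} almost verbatim, the one new point being that, because $\covec{w}$ is symmetric, the auxiliary correlations $q_\pm$ produced along the way can be taken to be \emph{classical} (hidden-variables) rather than merely nonsignaling; this is what lets the epimorphism hypothesis --- for $p$ epic in $\category{FinSet}^S_{HV}$, and hence also in $\category{FinSet}^S_Q$ --- do its job.

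First I would observe that a symmetric correlation is in particular nonsignaling, so the opening of the proof of Theorem \ref{theorem:epimorphism:nonsignaling} applies: from $\covec{w}P = \covec{0}$ together with the synchronicity and nonsignaling of $p$, the functions $u(y_A,y_B) = w(y_A,y_A)$ and $v(y_A,y_B) = w(y_B,y_B)$ also satisfy $\covec{u}P = \covec{v}P = \covec{0}$. Next, after rescaling so that $\|\covec{w}\|_\infty \le \tfrac14$, I would use the same diagonally-shifted splitting $\covec{w} = \covec{w}_+ - \covec{w}_-$ with
$$w_\pm(y_A,y_B) = \tfrac14\,\delta_{y_A,y_B} + \max\{\pm w(y_A,y_B),\,0\}.$$
Since $w$ is symmetric, each $w_\pm$ is symmetric and, exactly as in Theorem \ref{theorem:epimorphism:nonsignaling}, satisfies the hypotheses of Lemma \ref{lemma:2-point-range:nonsignaling} (with $0\le w_\pm\le\tfrac14$ off the diagonal and $\tfrac14\le w_\pm\le\tfrac12$ on it). I would then form $q_\pm\in\Hom^S_{NS}(Y,\{0,1\})$ from $w_\pm$ via Lemma \ref{lemma:2-point-range:nonsignaling}, and invoke the appendix Lemma \ref{lemma:2-point-range:classical} --- whose sufficient condition for classicality is met here precisely because $w_\pm$ is symmetric and sits comfortably inside the nonsignaling range --- to conclude $q_\pm\in\Hom^S_{HV}(Y,\{0,1\})\subseteq\Hom^S_Q(Y,\{0,1\})$. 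The diagonal $\tfrac14$-shifts cancel in every entry of $Q_+ - Q_-$, giving
$$Q_+ - Q_- = \begin{pmatrix} \covec{w}-\covec{u}-\covec{v} \\ \covec{v}-\covec{w} \\ \covec{u}-\covec{w} \\ \covec{w} \end{pmatrix},$$
each of whose rows is a left null vector of $P$. Hence $(Q_+ - Q_-)P = 0$, i.e.\ $q_+\circ p = q_-\circ p$; as $p$ is epic and $q_\pm$ are morphisms of the category in question, $Q_+ = Q_-$, and reading off the last block yields $\covec{w} = \covec{0}$.

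The main obstacle is the classicality check for $q_\pm$. In the nonsignaling world Lemma \ref{lemma:2-point-range:nonsignaling} parametrizes \emph{all} of $\Hom^S_{NS}(Y,\{0,1\})$, so membership was automatic there; for $\Hom^S_{HV}(Y,\{0,1\})$ there is no such clean parametrization, and one must genuinely exhibit (or cite) a hidden-variable model for the specific family $q_\pm$. This is exactly the role of Lemma \ref{lemma:2-point-range:classical}, and it is here that both the symmetry of $\covec{w}$ (available because $p$, being classical or quantum, is symmetric, so by Lemma \ref{lemma:epimorphism:decomposition} we may reduce to symmetric $\covec{w}$) and the $\tfrac14$-diagonal shift are essential --- the latter keeping $w_\pm$ far enough from the boundary of the nonsignaling region to stay inside the classical one. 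This is also why the general nonsignaling case needed the separate argument of Theorem \ref{theorem:epimorphism:nonsignaling}.
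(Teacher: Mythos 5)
Your skeleton is the paper's: reduce to the symmetric case, split $\covec{w}$ into diagonally shifted positive and negative parts, feed these into Lemma \ref{lemma:2-point-range:classical} to obtain classical correlations $q_\pm$, and let the epimorphism hypothesis cancel them. But the quantitative step that makes this work is wrong, and it is precisely the step that carries the weight of the proposition. You keep the normalization $\|\covec{w}\|_\infty \le \tfrac14$ and the shift $\tfrac14\indicator{y_A=y_B}$ from Theorem \ref{theorem:epimorphism:nonsignaling} and then assert that Lemma \ref{lemma:2-point-range:classical} applies because $w_\pm$ is symmetric and ``sits comfortably inside the nonsignaling range.'' That is not what the lemma asks for. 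Its hypothesis (2) is $w(x_j,x_j) \ge \sum_{k\ne j} w(x_j,x_k)$: each diagonal entry must dominate the \emph{sum} of all off-diagonal entries in its row, which is far stronger than the entrywise bounds $w(x_j,x_k)\le w(x_j,x_j)$ of Lemma \ref{lemma:2-point-range:nonsignaling}. With your normalization the off-diagonal entries of $w_\pm$ can each be as large as $\tfrac14$ while the diagonal is at most $\tfrac12$, so already for $|Y|=3$ a row with two off-diagonal entries near $\tfrac14$ violates (2); condition (3) can likewise fail for larger $|Y|$ since the diagonal shift alone contributes $|Y|/4$ to its left-hand side. Symmetry does not rescue you: symmetric synchronous nonsignaling correlations with range $\{0,1\}$ need not be classical (that is exactly the content of the synchronous Bell inequalities in \cite{lackey2017nonlocal}), so membership of $q_\pm$ in $\Hom^S_{HV}(Y,\{0,1\})$ genuinely has to be checked against the lemma's inequalities rather than inferred from being ``well inside'' the nonsignaling set.

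The repair is a sharper normalization, which is what the paper does: rescale so that $\|\covec{w}\|_\infty < 1/n^2$ with $n$ the cardinality of the output set, and use the diagonal shift $\tfrac{1}{n}\indicator{y_A=y_B}$. Then each row of $w_\pm$ has at most $n-1$ off-diagonal entries, each below $1/n^2$, whose sum stays below the diagonal value of at least $1/n$, and the hypotheses of Lemma \ref{lemma:2-point-range:classical} are satisfied. With that change the rest of your argument (the vectors $\covec{u}$, $\covec{v}$, the cancellation of the diagonal shift in $Q_+-Q_-$, and the epimorphism step) goes through as in the paper. One further small slip: a symmetric correlation is not automatically nonsignaling; this is harmless here only because nonsignaling is already part of the hypothesis on $p$.
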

\begin{proof}
    Exactly as in the proof of Theorem \ref{theorem:epimorphism:nonsignaling}, we construct
    \begin{align*}
        u(y_A,y_B) &= w(y_A,y_A) \text{ which is independent of $y_B$, and}\\
        v(y_A,y_B) &= w(y_B,y_B) \text{ which is independent of $y_A$,}
    \end{align*}
    and can conclude that $\covec{u}$ and $\covec{v}$ are also in the left nullspace of $P$ as this follows from $p$ being nonsignaling. Now however we rescale $\covec{w}$ so that $\|\covec{w}\|_\infty < \frac{1}{n^2}$ where $n = |X|$, and decompose this vector into positive and negative parts as
    \begin{align*}
        w_+(y_A,y_B) = \frac{1}{n}\indicator{y_A = y_B} + \max\{w(y_A,y_B), 0\},\\
        w_-(y_A,y_B) = \frac{1}{n}\indicator{y_A = y_B} + \max\{-w(y_A,y_B), 0\}.
    \end{align*}
    Then we have
    $$\sum_{y_A\not= y_B} w_\pm (y_A,y_B) < \frac{1}{n} \leq w_\pm(y_A,y_B),$$
    and
    $$\sum_{y_A} w_\pm(y_A,y_A) \leq 1 \leq 1 + \frac{1}{2}\sum_{y_A\not= y_B} w_{\pm}(y_A,y_B).$$
    These with the symmetry of $\covec{w}_\pm$ are precisely the conditions of Lemma \ref{lemma:2-point-range:classical}, and so we conclude that
    \begin{align*}
        q_\pm(0,0\:|\: y_A,y_B) &= 1 - w_\pm(y_A,y_A) - w_\pm(y_B,y_B) + w_\pm(y_A,y_B)\\
        q_\pm(0,1\:|\: y_A,y_B) &= w_\pm(y_B,y_B) - w_\pm(y_A,y_B),\\
        q_\pm(1,0\:|\: y_A,y_B) &= w_\pm(y_A,y_A) - w_\pm(y_A,y_B),\\
        q_\pm(1,1\:|\: y_A,y_B) &= w_\pm(y_A,y_B)\\
    \end{align*}
    define $q_\pm\in\Hom^S_{HV}(Y,\{0,1\})$. Now the remainder of the proof of Theorem \ref{theorem:epimorphism:nonsignaling} carries over verbatim.
\end{proof}

We know of no means to treat the skew part of the left nullspace by same sort of reasoning we have used in all the results above. Instead we provide an \textit{ad hoc} proof by directly building hidden variables strategies with the needed properties.

\begin{proposition}\label{proposition:epimorphisms:skew}
    Suppose $p\in\Hom^S_{NS}(X,Y)$ is epic and symmetric with associated stochastic matrix $P$ and suppose $\covec{v}P = \covec{0}$ with $v(y_A,y_B) = -v(y_B,y_A)$. Then $\covec{v} = \covec{0}$.
\end{proposition}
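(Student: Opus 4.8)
The plan is to mimic the structure of the previous two results: from the skew vector $\covec{v}$ in the left nullspace, build two hidden-variables correlations $q_+, q_- \in \Hom^S_{HV}(Y,\{0,1\})$ whose stochastic matrices differ by something built directly out of $\covec{v}$ (and auxiliary vectors forced into the nullspace), so that $q_+\circ p = q_-\circ p$, and then invoke the fact that $p$ is epic to force $\covec{v}=\covec{0}$. The essential difference from Proposition \ref{proposition:epimorphisms:skew}'s predecessors is that a skew vector has $v(y,y)=0$ for all $y$, so the diagonal ``padding'' trick ($\tfrac1n\indicator{y_A=y_B}$) used before cannot by itself turn $\covec{v}$ into something with nonnegative, substochastic entries while keeping it in the nullspace. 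Since $\covec{v}$ is skew, $\covec{u}(y_A,y_B)=v(y_A,y_A)=0$ and $\covec{v}'(y_A,y_B)=v(y_B,y_B)=0$ carry no information, so the nonsignaling-diagonal construction from Theorem \ref{theorem:epimorphism:nonsignaling} degenerates. This is why an \textit{ad hoc} argument is needed.

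The concrete approach I would take: first rescale $\covec{v}$ so that $\|\covec{v}\|_\infty$ is as small as we like (say $< \tfrac{1}{n^2}$ with $n=|X|$, or smaller if the inequalities below demand it). Then split $Y^2$ into the diagonal and off-diagonal parts and set, for a suitable constant $c$ of order $1/n$,
\begin{align*}
    w_+(y_A,y_B) &= c\,\indicator{y_A=y_B} + \max\{v(y_A,y_B),0\},\\
    w_-(y_A,y_B) &= c\,\indicator{y_A=y_B} + \max\{-v(y_A,y_B),0\}.
\end{align*}
Because $\covec{v}$ is skew, off the diagonal exactly one of the two maxima is nonzero at each $(y_A,y_B)$, and $w_\pm$ is obtained from $w_\mp$ by transposing the off-diagonal part; in particular $\covec{w}_+ - \covec{w}_- = \covec{v}$, and the symmetrizations $w_\pm^{(+)}(y_A,y_B)=\tfrac12(w_\pm(y_A,y_B)+w_\pm(y_B,y_A))$ are equal to one another plus/minus the same symmetric combination. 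The point of adding the diagonal constant is to guarantee $w_\pm(y,y)\ge c$ dominates the tiny off-diagonal mass, so that the hypotheses of Lemma \ref{lemma:2-point-range:classical} (the appendix lemma providing a sufficient condition for a correlation from $Y$ to $\{0,1\}$ to be classical) are met by each $w_\pm$. I would then define $q_\pm\in\Hom^S_{HV}(Y,\{0,1\})$ from $w_\pm$ exactly via the formulas of that lemma, as in Proposition \ref{proposition:epimorphisms:symmetric}. Computing $Q_+-Q_-$, the diagonal constants and the symmetric pieces cancel in each row, leaving a column vector each of whose entries is a linear combination of $\covec{v}$ and of $\covec{u},\covec{v}'$ (both zero here) — hence a vector killed by $P$ on the right-multiplication side. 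Therefore $(Q_+-Q_-)P=\covec{0}$, so $q_+\circ p = q_-\circ p$; epic-ness of $p$ gives $Q_+=Q_-$, whence $\covec{v}=\covec{0}$.

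The main obstacle is the bookkeeping in the middle step: verifying that $w_\pm$ as defined genuinely satisfy the hypotheses of Lemma \ref{lemma:2-point-range:classical} and, crucially, that $Q_+-Q_-$ really is a vector built only out of nullspace vectors. Since $v(y,y)=0$, the ``$w_\pm(y_A,y_A)$'' and ``$w_\pm(y_B,y_B)$'' terms in the Lemma \ref{lemma:2-point-range:classical} formulas contribute only the constant $c$, which is identical for $q_+$ and $q_-$ and cancels; the surviving differences are $w_+(y_A,y_B)-w_-(y_A,y_B)=v(y_A,y_B)$ in the $(1,1)$ slot and its negative, appropriately combined, in the $(0,1)$, $(1,0)$, $(0,0)$ slots. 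One must check these four combinations are each in $\ker P$ on the right — they are, being $\pm\covec{v}$ together with the (vanishing) $\covec{u},\covec{v}'$ and possibly a symmetric nullspace vector handled by Lemma \ref{lemma:epimorphism:decomposition}. I would also double-check the substochasticity/nonnegativity bounds: off the diagonal $0\le w_\pm < \tfrac1{n^2} < c$, and on the diagonal $c \le w_\pm(y,y) = c$, so all entries of $q_\pm$ lie in $[0,1]$ and sum to $1$ provided $c$ is chosen (e.g. $c=\tfrac1{2n}$) small enough that $1 - w_\pm(y_A,y_A) - w_\pm(y_B,y_B) + w_\pm(y_A,y_B) \ge 0$. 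Once these inequalities are pinned down the argument closes exactly as before.
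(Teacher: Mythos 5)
There is a genuine gap, and it is exactly the obstruction the paper flags before this proposition. Your plan is to feed $w_\pm(y_A,y_B) = c\,\indicator{y_A=y_B} + \max\{\pm v(y_A,y_B),0\}$ into Lemma \ref{lemma:2-point-range:classical} to get $q_\pm\in\Hom^S_{HV}(Y,\{0,1\})$. But hypothesis (1) of that lemma is symmetry, $w(y_A,y_B)=w(y_B,y_A)$, and your $w_\pm$ are not symmetric: since $v$ is skew, off the diagonal one has $w_+(y_A,y_B)=\max\{v(y_A,y_B),0\}$ while $w_+(y_B,y_A)=\max\{-v(y_A,y_B),0\}$, which differ whenever $v(y_A,y_B)\neq 0$ (you yourself observe that $w_+$ is the transpose of $w_-$; neither is symmetric unless $v=0$). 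This is not a bookkeeping issue that smaller constants can fix. The lemma produces a classical correlation precisely by realizing $w(y_j,y_k)$ as $\Pr(S_j\cap S_k)$ for a measure on $\Hom(Y,\{0,1\})$, which forces symmetry; and indeed any synchronous classical correlation satisfies $q(z_A,z_B\,|\,y_A,y_B)=q(z_B,z_A\,|\,y_B,y_A)$, whereas your $q_+(1,1\,|\,y_A,y_B)=w_+(y_A,y_B)$ fails this. So your $q_\pm$ land only in $\Hom^S_{NS}(Y,\{0,1\})$ (where they would merely reprove a case of Theorem \ref{theorem:epimorphism:nonsignaling}), not in $\Hom^S_{HV}$, and epicness in $\category{FinSet}^S_{HV}$ or $\category{FinSet}^S_Q$ cannot be invoked. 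Your aside that ``the symmetrizations are equal to one another plus/minus the same symmetric combination'' does not repair this: symmetrizing $w_\pm$ destroys exactly the skew information you need, since the symmetric part of $\covec{v}$ is zero.

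This is why the paper abandons the two-outcome template here and instead enlarges the output set to $\{0,1,2\}$, building explicit probability measures $\mu_\pm$ on $\Hom(Y,\{0,1,2\})$ supported on functions that single out an ordered pair $(y_0,y_1)$ via $f(y_0)=0$, $f(y_1)=1$. The ordered pair is what lets a manifestly classical (hence symmetric) correlation nonetheless have a difference $Q_+-Q_-$ whose entries are $0$ or $\pm v(y_A,y_B)$; the asymmetry of $v$ is carried by which output label ($0$ versus $1$) each player receives rather than by an asymmetric joint distribution. If you want to complete your write-up you will need some device of this kind; the two-outcome construction cannot work for the skew part.
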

\begin{proof}
    If $\covec{v}\not= \covec{0}$ we may normalize this vector if necessary so that $\|\covec{v}\|_1 = \frac{2}{3}$. Decompose $\covec{v}$ into its usual positive and negative parts:
    $$v_{\pm}(y_A,y_B) = \max\{\pm v(y_A,y_B), 0\}.$$
    Note that as $\covec{v}$ is skew we have
    \begin{align}
        \label{equation:epimorphisms:skew-kernel1} v_{\pm}(y_A,y_B) &> 0 \text{ implies $v_\mp(y_A,y_B) = 0$, and}\\
        \label{equation:epimorphisms:skew-kernel2} v_{\pm}(y_A,y_B) &= v_{\mp}(y_B,y_A)
    \end{align}
    From (\ref{equation:epimorphisms:skew-kernel1}) we have
    \begin{align*}
        \tfrac{2}{3}\ =\ \|\covec{v}_1\|_1 &= \sum_{y_A,y_B} |v_+(y_A,y_B) - v_-(y_A,y_B)|\\
        &= \sum_{y_A,y_B} v_+(y_A,y_B) +  \sum_{y_A,y_B} v_-(y_A,y_B),
    \end{align*}
    and from (\ref{equation:epimorphisms:skew-kernel2}) we have
    $$\sum_{y_A,y_B} v_+(y_A,y_B) = \sum_{y_A,y_B} v_-(y_A,y_B).$$
    Hence
    $$\sum_{y_A,y_B} v_+(y_A,y_B) = \tfrac{1}{3} = \sum_{y_A,y_B} v_-(y_A,y_B).$$
    
    Define functions $\mu_\pm$ on $\Hom(Y,\{0,1,2\})$ as follows:
    \begin{equation}\label{equation:epimorphisms:skew-measure}
    \mu_{\pm}(f) = \left\{\begin{array}{cl}
        v_{\pm}(y_0,y_1) & \text{if $f(y_0) = 0$, $f(y_1) = 1$, but otherwise $f(y) = 2$;}\\
        \sum_{y'} v_\pm(y_0,y') & \text{if $f(y_0) = 1$, but otherwise $f(y) = 2$;}\\
        \sum_{y'} v_\pm(y',y_1) & \text{if $f(y_1) = 0$, but otherwise $f(y) = 2$;}\\
        0 & \text{in any other case.}\end{array}\right.
    \end{equation}
    Summing over $f\in\Hom(Y,\{0,1,2\})$, we see each of the three cases contributes the sum $\sum_{y_0,y_1} v_{\pm}(y_0,y_1) = \frac{1}{3}$. As $\mu_\pm$ are clearly nonnegative, these form probability distributions on $\Hom(Y,\{0,1,2\})$. Write $q_{\pm}\in\Hom^S_{HV}(Y,\{0,1,2\})$ for the classical synchronous correlations corresponding to $\mu_\pm$ respectively.
    
    Consider
    $$q_\pm(0,0\:|\: y_A,y_B) = \sum_f \mu_{\pm}(f)\indicator{0 = f(y_A)}\indicator{0 = f(y_B)};$$
    the only functions that contribute to this sum are those with $f(y_A) = f(y_B) = 0$. If $y_A \not= y_B$, then from (\ref{equation:epimorphisms:skew-measure}) we see there are no such functions in the support of $\mu_{\pm}$ and hence $q_\pm(0,0\:|\: y_A,y_B) = 0$. On the other hand, if $y_A = y_B$ then there are two types of contributions to this sum:
    \begin{enumerate}
        \item for each $y'\not= y_A$ the function that takes $f(y_A) = 0$, $f(y') = 1$, and otherwise has $f(y) = 2$ contributes $v_{\pm}(y_A,y')$, and
        \item the function with $f(y_A) = 0$ and $f(y) = 2$ for all $y\not= y_A$ contributes $\sum_{y'} v_{\pm}(y',y_A)$.
    \end{enumerate}
    And so we have
    $$q_\pm(0,0\:|\: y_A,y_A) = \sum_{y'\not= y_A} v_{\pm}(y_A,y') + \sum_{y'} v_{\pm}(y',y_A).$$
    Now using (\ref{equation:epimorphisms:skew-kernel2}) compute
    \begin{align*}
        &q_+(0,0\:|\: y_A,y_A) - q_-(0,0\:|\: y_A,y_A)\\
        &\qquad =\ \sum_{y'\not= y_a} v_+(y_A,y') + \sum_{y'} v_+(y',y_A) - \sum_{y'\not= y_A} v_-(y_A,y') - \sum_{y'} v_-(y',y_A)\\
        &\qquad =\ \left(\sum_{y'} v_+(y',y_A) - \sum_{y'\not= y_A} v_+(y',y_A)\right) - \left(\sum_{y'} v_-(y',y_A) - \sum_{y'\not= y_A} v_-(y',y_A)\right)\\
        &\qquad =\ v_+(y_A,y_A) - v_-(y_A,y_A) = -v(y_A,y_A)\ =\ 0.
    \end{align*}
    Clearly $q_+(0,0\:|\: y_A,y_B) - q_-(0,0\:|\: y_A,y_B) = 0$ for $y_A\not= y_B$ as both terms vanish.
    
    Now consider
    $$q_\pm(0,1\:|\: y_A,y_B) = \sum_f \mu_{\pm}(f)\indicator{0 = f(y_A)}\indicator{1 = f(y_B)},$$
    which has contributions from functions with $f(y_A) = 0$ and $f(y_B) = 1$. If $y_A = y_B$ there are no such functions. When $y_A \not= y_B$ there is only one such function in the support of $\mu_{\pm}$: the function that additionally has $f(y) = 2$ for all $y \not= y_A,y_B$. Then from (\ref{equation:epimorphisms:skew-measure}) we have $q_\pm(0,1\:|\: y_A,y_B) = v_{\pm}(y_A,y_B)$ and therefore
    $$q_+(0,1\:|\: y_A,y_B) - q_-(0,1\:|\: y_A,y_B) = v(y_A,y_B).$$
    Note this equation is also valid when $y_A = y_B$ as $v(y_A,y_A) = 0$.
    
    Finally consider
    $$q_\pm(0,2\:|\: y_A,y_B) = \sum_f \mu_{\pm}(f)\indicator{0 = f(y_A)}\indicator{2 = f(y_B)}.$$
    As above if $y_A = y_B$ there are no such functions that contribute this sum, while if $y_A\not= y_B$ there are a number of contributors:
    \begin{enumerate}
        \item for each $y'\not= y_A,y_B$ the function that takes $f(y_A) = 0$, $f(y') = 1$, and otherwise has $f(y) = 2$ contributes $v_{\pm}(y_A,y')$, and
        \item the function with $f(y_A) = 0$ and $f(y) = 2$ for all $y\not= y_A$ contributes $\sum_{y'} v_{\pm}(y',y_A)$.
    \end{enumerate}
    So for $y_A \not= y_B$, we note that $v_{\pm}(y_A,y_A) = 0$ and so
    $$q_\pm(0,2\:|\: y_A,y_B) = \sum_{y'\not= y_B} v_{\pm}(y_A,y') + \sum_{y'} v_{\pm}(y',y_A).$$
    Then just as we did before, we use (\ref{equation:epimorphisms:skew-kernel2}) to conclude
    \begin{align*}
        &q_+(0,2\:|\: y_A,y_B) - q_-(0,2\:|\: y_A,y_B)\\
        &\qquad =\ \sum_{y'\not= y_B} v_+(y_A,y') + \sum_{y'} v_+(y',y_A) - \sum_{y'\not= y_B} v_-(y_A,y') - \sum_{y'} v_-(y',y_A)\\
        &\qquad =\ \left(\sum_{y'} v_+(y',y_A) - \sum_{y'\not= y_B} v_+(y',y_A)\right) - \left(\sum_{y'} v_-(y',y_A) - \sum_{y'\not= y_B} v_-(y',y_A)\right)\\
        &\qquad =\ v_+(y_B,y_A) - v_-(y_B,y_A) = -v(y_A,y_B).
    \end{align*}
    Again this formula also holds for $y_A = y_B$.
    
    The other elements of the correlations $q_\pm$ can be analyzed in the same way, except for $q_\pm(2,2 \:|\: y_A,y_B)$ which can be determined from the fact that $q_\pm$ form probability distributions. The overall results are:
    \begin{align*}
        q_+(0,0 \:|\: y_A,y_B) - q_-(0,0 \:|\: y_A,y_B) &= 0\\
        q_+(0,1 \:|\: y_A,y_B) - q_-(0,1 \:|\: y_A,y_B) &= v(y_A,y_B)\\
        q_+(0,2 \:|\: y_A,y_B) - q_-(0,2 \:|\: y_A,y_B) &= -v(y_A,y_B)\\
        q_+(1,0 \:|\: y_A,y_B) - q_-(1,0 \:|\: y_A,y_B) &= -v(y_A,y_B)\\
        q_+(1,1 \:|\: y_A,y_B) - q_-(1,1 \:|\: y_A,y_B) &= 0\\
        q_+(1,2 \:|\: y_A,y_B) - q_-(1,2 \:|\: y_A,y_B) &= -v(y_A,y_B)\\
        q_+(2,0 \:|\: y_A,y_B) - q_-(2,0 \:|\: y_A,y_B) &= v(y_A,y_B)\\
        q_+(2,1 \:|\: y_A,y_B) - q_-(2,1 \:|\: y_A,y_B) &= v(y_A,y_B)\\
        q_+(2,2 \:|\: y_A,y_B) - q_-(2,2 \:|\: y_A,y_B) &= 0.\\
    \end{align*}
    Since $\covec{v}P = \covec{0}$ we have the associated stochastic matrices $(Q_+ - Q_-)P = 0$ and hence $q_+\circ p = q_-\circ p$. But as $p$ is epic, we conclude $q_+ = q_-$ and hence $\covec{v} = \covec{0}$ after all.
\end{proof}

Putting Propositions \ref{proposition:epimorphisms:symmetric} and \ref{proposition:epimorphisms:skew} together gives us the final result.

\begin{theorem}\label{theorem:epimorpism:symmetric}
    In $\category{FinSet}^S_{HV}$ and $\category{FinSet}^S_{Q}$, epimorphisms are precisely those correlations whose stochastic matrices have zero left kernel.
\end{theorem}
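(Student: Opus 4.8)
The plan is to prove the two containments separately. One direction is immediate from Lemma~\ref{lemma:epimorphism:nullspace}: since $\category{FinSet}^S_{HV}$ and $\category{FinSet}^S_Q$ are subcategories of $\category{FinSet}^S$ with the same composition rule, any synchronous classical or quantum correlation whose stochastic matrix has trivial left nullspace is an epimorphism in the respective category. So the work lies entirely in the converse: if $p$ is epic in $\category{FinSet}^S_{HV}$ (resp.\ $\category{FinSet}^S_Q$) with stochastic matrix $P$, and $\covec{w}P = \covec{0}$, we must show $\covec{w} = \covec{0}$.

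The structural fact I would invoke here is that, by equations~(\ref{equation:classical:form}) and~(\ref{equation:quantum:form}), every synchronous classical and every synchronous quantum correlation is symmetric (and of course nonsignaling), so the symmetric-correlation machinery developed just above applies to $p$. I would then appeal to Lemma~\ref{lemma:epimorphism:decomposition} to split $\covec{w} = \covec{w}^{(+)} + \covec{w}^{(-)}$ into its symmetric and antisymmetric parts, with both $\covec{w}^{(+)}P = \covec{0}$ and $\covec{w}^{(-)}P = \covec{0}$. Proposition~\ref{proposition:epimorphisms:symmetric}, applied to the symmetric nullvector $\covec{w}^{(+)}$, gives $\covec{w}^{(+)} = \covec{0}$; Proposition~\ref{proposition:epimorphisms:skew}, applied to the antisymmetric nullvector $\covec{w}^{(-)}$, gives $\covec{w}^{(-)} = \covec{0}$. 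Adding these, $\covec{w} = \covec{0}$, so $P$ has trivial left nullspace, completing the proof. The same argument covers both categories at once, since the symmetry and nonsignaling properties hold in each.

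The one point requiring care --- and the place where the real obstacle would lie were Propositions~\ref{proposition:epimorphisms:symmetric} and~\ref{proposition:epimorphisms:skew} not already in hand --- is matching up the notion of ``epic'' between the categories. Those two propositions are stated for correlations $p$ that are epic and symmetric in the nonsignaling category, but here we only know $p$ is epic in the smaller $\category{FinSet}^S_{HV}$ or $\category{FinSet}^S_Q$. This is harmless: the test correlations $q_\pm$ constructed in the proofs of both propositions are themselves classical (they are built from explicit probability distributions over functions via Lemmas~\ref{lemma:2-point-domain:classical} and~\ref{lemma:2-point-range:classical}), so the cancellation $q_+ \circ p = q_- \circ p \Rightarrow q_+ = q_-$ that those proofs invoke is precisely an instance of $p$ being epic in $\category{FinSet}^S_{HV}$ --- and hence also in $\category{FinSet}^S_Q$, since $\Hom^S_{HV}(Y,Z) \subseteq \Hom^S_Q(Y,Z)$. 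Thus the propositions apply verbatim once ``epic'' is read in the appropriate category, and the assembly above goes through without further modification.
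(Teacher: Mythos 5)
Your proof is correct and follows exactly the paper's route: the paper's entire proof is the one-line remark that Propositions~\ref{proposition:epimorphisms:symmetric} and~\ref{proposition:epimorphisms:skew} combine, via the symmetric/antisymmetric decomposition of Lemma~\ref{lemma:epimorphism:decomposition}, to give the result. Your additional observation that the test correlations in those propositions are classical, so that epicness in the smaller categories $\category{FinSet}^S_{HV}$ and $\category{FinSet}^S_Q$ suffices, is a point the paper leaves implicit and is handled correctly.
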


\end{section}

\begin{section}{Conclusions}

In most of the commonly studied categories sections and monomorphisms, and retractions and epimorphisms coincide and so there no ambiguity as to what it means for two objects to be isomorphic. Formally an isomorphism is a morphism that is both a section and a retractions. Therefore in any of $\category{FinSet}^S_{HV}$, $\category{FinSet}^S_Q$, or $\category{FinSet}^S_{NS}$ an isomorphism is just a bijective function as usual.

However in these categories monomorphism and epimorphism are strictly weaker conditions than section and retraction. This leads to a weaker notion of equivalence, the \emph{bimorphism}, which is morphism that is both epic and monic. In all of these categories, a bimorphism is a correlation whose stochastic matrix is nonsingular. In general, such a morphism is not an isomorphism as its inverse matrix will not be nonnegative, and hence not correspond to any correlation.

Fortunately, or unfortunately depending on one's perspective, this does not introduce a new notion of equivalence of finite sets (which would then have ramifications on arithmetic). For two sets to be bimorphic, they still must have the same cardinality.

\end{section}


\appendix
\begin{section}{Boole's inequalities}

Understanding the relationships that must be satisfied by probabilities of overlapping sets dates back well into the 19th century. Today such inequalities go under the names of Boole \cite{boole1854investigation}, Bonferroni \cite{bonferroni1936teoria}, and Fr{\'e}chet \cite{frechet1935generalisation,frechet1960tableaux}, and are derived directly from the inclusion-exclusion principle. It is well recognized that these bounds have a deep relationship with Bell's inequality \cite{pitowsky1989george}, and it is in this context we examine them.

Originally, Boole considered the following problem: given two overlapping subsets $S_1$ and $S_2$ of some finite probability space $(\Omega,\Pr)$, then what can we say about $\Pr(S_1\cap S_2)$ given $\Pr(S_1)$ and $\Pr(S_2)$? Boole is attributed to showing
\begin{equation}\label{equation:boole:boole-inequalities}
    \max\{0,\Pr(S_1) + \Pr(S_2) - 1\} \leq \Pr(S_1\cap S_2) \leq \min\{\Pr(S_1),\Pr(S_2)\}.
\end{equation}
Clearly, $\Pr(S_1\cap S_2) \leq \Pr(S_1)$ and $\Pr(S_2)$ by monotonicity, and $\Pr(S_1\cap S_2) \geq 0$ by positivity, of the probability measure. The only remaining inequality
$$\Pr(S_1) + \Pr(S_2) - 1 \leq \Pr(S_1\cap S_2)$$
follows immediately from the inclusion-exclusion principle:
$$\Pr(S_1) + \Pr(S_2) - \Pr(S_1\cap S_2) = \Pr(S_1\cup S_2) \leq 1.$$

Our problem is similar: given a finite set $X = \{x_0, \dots, x_{n-1}\}$ and a collection of nonnegative numbers $w(x_j,x_k)$, when does there exist a probability space $(\Omega,\Pr)$ and subsets $S_j\subseteq\Omega$ so that
$$w(x_j,x_j) = \Pr(S_j) \text{ and } w(x_j,x_k) = \Pr(S_j\cap S_k).$$
Boole's inequalities provide necessary conditions:
\begin{enumerate}
    \item $w(x_j,x_k) \leq w(x_j,x_j)$,
    \item $w(x_j,x_k) \leq w(x_k,x_k)$, and
    \item $w(x_j,x_j) + w(x_k,x_k) \leq 1 + w(x_j,x_k)$.
\end{enumerate}
These are precisely the conditions (1--3) of Lemma \ref{lemma:2-point-range:nonsignaling}, which characterize when such numbers define a synchronous nonsignaling correlation. In addition to the obvious required symmetry $w(x_j,x_k) = w(x_k,x_j)$, there are further criteria that stem from relations between the probabilities of intersections of more than two subsets.

We claim that, without loss of generality, we may take $\Omega = \Hom(X,\{0,1\})$ where
$$S_j = \{f:X\to\{0,1\}\::\: f(x_j) = 1\}.$$
The complements of these sets we denote as
$$\overline{S}_j = \{f:X\to\{0,1\}\::\: f(x_j) = 0\},$$
and also introduce the notation $S_j^0 = \overline{S}_j$ and $S_j^1 = S_j$. Note that each $f\in\Hom(X,\{0,1\})$ has
$$\{f\} = \bigcap_{j=0}^{n-1} S_j^{f(x_j)}.$$
In particular, if some probability space $(\tilde{\Omega},\widetilde{\Pr})$ has $w(x_j,x_j) = \widetilde{\Pr}(\tilde{S}_j)$ and $w(x_j,x_k) = \widetilde{\Pr}(\tilde{S}_j\cap \tilde{S}_k)$ for subsets $\tilde{S}_j \subseteq \tilde{\Omega}$, then we define $F:\tilde{\Omega} \to \Omega$ by $F(\omega) = f$ for all $\omega \in \bigcap_{j=0}^{n-1} S_j^{f(x_j)}$. Then the push-forward measure $\Pr = F_*\widetilde{\Pr}$ also has $w(x_j,x_j) = \Pr(S_j)$ and $w(x_j,x_k) = \Pr(S_j\cap {S}_k)$.

Fix $\Pr$, a probability measure on $\Omega$. The we associate to this measure two vectors $\vec{p},\vec{w}\in \mathbb{R}^{2^n}$. Writing $j\in\{0,\dots,2^n-1\}$ in bits,
$$j = j_0 + j_1 2 + \cdots + j_{n-1} 2^{n-1},$$
the entries of these vectors at position $j$ are defined as
$$p_j = \Pr\left(\bigcap_{k=0}^{n-1} S_k^{j_k}\right) \text{ and } w_j = \Pr\left(\bigcap_{k\::\: j_k=1} S_k\right).$$
For example, when $n=2$ these are
$$\vec{p} = \left(\begin{array}{c}
\Pr(\overline{S}_0 \cap \overline{S}_1)\\
\Pr(\overline{S}_0 \cap S_1)\\
\Pr(S_0 \cap \overline{S}_1)\\
\Pr(S_0 \cap S_1)
\end{array}\right) \text{ and } 
\vec{w} = \left(\begin{array}{c}
\Pr(\Omega)\\
\Pr(S_1)\\
\Pr(S_0)\\
\Pr(S_0 \cap S_1)
\end{array}\right).$$
These vectors are related according to the following result, see also \cite{fontana2017characterization}.

\begin{lemma}
    We have $\scriptsize{\left(\begin{array}{cc} 1 & 1\\ 0 & 1\end{array}\right)}^{\otimes n} \vec{p} = \vec{w}$.
\end{lemma}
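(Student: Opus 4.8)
The plan is to evaluate both sides of the asserted identity in terms of the atoms $\bigcap_{k=0}^{n-1}S_k^{j_k}$ of the finite Boolean algebra generated by $S_0,\dots,S_{n-1}$, and to recognize the matrix $\left(\begin{smallmatrix}1&1\\0&1\end{smallmatrix}\right)^{\otimes n}$ as the linear map sending the vector of atom probabilities to the vector of intersection probabilities $w_j$.

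First I would determine the entries of the tensor power. Writing $M=\left(\begin{smallmatrix}1&1\\0&1\end{smallmatrix}\right)$ with rows and columns indexed by $\{0,1\}$, one has $M_{a,b}=1$ if $a\le b$ and $M_{a,b}=0$ otherwise. For $i,j\in\{0,\dots,2^n-1\}$ with bit expansions $i=\sum_k i_k 2^k$ and $j=\sum_k j_k 2^k$, the $(i,j)$ entry of $M^{\otimes n}$ is then $\prod_{k=0}^{n-1}M_{i_k,j_k}$ (independently of the order in which the tensor factors are matched to bit positions, since the product is symmetric). This product equals $1$ precisely when $i_k\le j_k$ for every $k$, i.e.\ when $i\subseteq j$ under the identification of an integer with the subset $\{k:i_k=1\}\subseteq\{0,\dots,n-1\}$, and $0$ otherwise. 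Hence $(M^{\otimes n}\vec p)_i=\sum_{j\supseteq i}p_j=\sum_{j\supseteq i}\Pr\bigl(\bigcap_{k}S_k^{j_k}\bigr)$.

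It then remains to show this equals $w_i=\Pr\bigl(\bigcap_{k:i_k=1}S_k\bigr)$. For this I would insert the decomposition $\Omega=\bigcap_{k:i_k=0}\bigl(S_k\cup\overline{S}_k\bigr)$ into the intersection defining $w_i$ and distribute, writing $\bigcap_{k:i_k=1}S_k$ as the disjoint union, over all assignments of $j_k\in\{0,1\}$ to the positions $k$ with $i_k=0$ (with $j_k=1$ forced wherever $i_k=1$), of the atoms $\bigcap_{k}S_k^{j_k}$. Such assignments correspond bijectively to the indices $j$ with $j\supseteq i$, the atoms are pairwise disjoint, so finite additivity of $\Pr$ yields $w_i=\sum_{j\supseteq i}\Pr\bigl(\bigcap_k S_k^{j_k}\bigr)$, which is exactly the expression computed above. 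Since this holds for all $i$, the vector identity follows.

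There is no substantial obstacle; the only points requiring care are the little-endian bit convention used to index $\vec p$ and $\vec w$ and the bookkeeping in the atom decomposition---specifically, matching the Kronecker product structure of $M^{\otimes n}$ to the product structure of the events $S_0^{j_0}\cap\cdots\cap S_{n-1}^{j_{n-1}}$. An alternative route, if one prefers to avoid the explicit combinatorics, is to verify the $n=1$ case directly and then induct on $n$, conditioning on $S_{n-1}$ and using the mixed-product rule $(A\otimes B)(C\otimes D)=AC\otimes BD$; I would nonetheless present the direct atom-counting argument as the cleaner of the two.
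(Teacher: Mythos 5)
Your argument is correct, but it takes a genuinely different route from the paper. The paper factors $\left(\begin{smallmatrix}1&1\\0&1\end{smallmatrix}\right)^{\otimes n}$ as the product of the single-site operators $M_\ell = \mathbb{1}\otimes\cdots\otimes\left(\begin{smallmatrix}1&1\\0&1\end{smallmatrix}\right)\otimes\cdots\otimes\mathbb{1}$ and shows that each $M_\ell$ marginalizes out the $\ell$th set: applied to a vector of probabilities of intersections, it replaces $S_\ell^{j_\ell}$ by ``$S_\ell$ if $j_\ell=1$, drop it if $j_\ell=0$.'' Iterating over all $\ell$ lands on $\vec w$. You instead compute the entries of the full tensor power in closed form, identify $[M^{\otimes n}]_{i,j}$ with the indicator of $i\subseteq j$ (the zeta function of the Boolean lattice), and then verify the resulting identity $w_i=\sum_{j\supseteq i}p_j$ by partitioning $\bigcap_{k:i_k=1}S_k$ into the atoms it contains. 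Your closed-form computation is arguably cleaner and makes the combinatorial content transparent (it also immediately explains why the inverse condition (\ref{equation:Boole:gen-boole-inequality}) is a M\"obius inversion), whereas the paper's factorization has the advantage of reusing the same single-site bookkeeping in the proof of Lemma \ref{lemma:2-point-range:classical}, where the matrices $N_i$ are applied one at a time to a vector that is not a full probability vector. The induction you mention as an alternative is essentially the paper's proof. The one caveat you already flag correctly: the bit-endianness convention must be applied consistently to the indices of $\vec p$, $\vec w$, and the rows and columns of $M^{\otimes n}$ (the paper's $n=2$ example is itself written in the opposite convention from its general definition), but since all tensor factors are equal this affects only labeling, not the validity of the identity.
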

\begin{proof}
    Define the matrix
    $$M_\ell = \openone \otimes \cdots \otimes \openone \otimes \left(\begin{array}{cc} 1 & 1\\ 0 & 1\end{array}\right) \otimes \openone \otimes \cdots \openone$$
    where the nonidentity matrix is in the $\ell$th factor. The $(j,k)$ entry of this matrix is
    $$[M_\ell]_{j,k} = \indicator{j_0 = k_0}\cdots \indicator{j_{\ell-1} = k_{\ell-1}}(\indicator{j_\ell = k_\ell} + \indicator{j_\ell = 0}\indicator{k_\ell = 1}) \indicator{j_{\ell+1} = k_{\ell+1}}\cdots \indicator{j_{n-1} = k_{n-1}}.$$
    Then
    $$[M_\ell]_{j,k}p_k = \left\{\begin{array}{cl}
        \Pr(\bigcap_{m\not=\ell} S_m^{j_m}\cap\bar{S}_\ell) & \text{if $j_m = k_m$ (for $m\not=\ell$) and $j_\ell = 0$ and $k_\ell = 1$,}\\
        \Pr(\bigcap_{m\not=\ell} S_m^{j_m}\cap S_\ell) & \text{if $j=k$, and}\\
        0 & \text{otherwise.}
    \end{array}\right.$$
    So if $j_\ell = 0$ then
    $$\sum_k [M_\ell]_{j,k}p_k = \Pr(\bigcap_{m\not=\ell} S_m^{j_m}\cap\bar{S}_\ell) + \Pr(\bigcap_{m\not=\ell} S_m^{j_m}\cap S_\ell) = \Pr(\bigcap_{m\not=\ell} S_m^{j_m}).$$
    While if $j_\ell = 1$ then
    $$\sum_k [M_\ell]_{j,k}p_k = \Pr(\bigcap_{m\not=\ell} S_m^{j_m}\cap S_\ell).$$
    That is, when $j_\ell = 0$ the $\ell$th term is removed from the intersection while if $j_\ell = 0$ then it is kept, and is $S_\ell$. But $\scriptsize{\left(\begin{array}{cc} 1 & 1\\ 0 & 1\end{array}\right)}^{\otimes n} = M_1\cdots M_{n-1}$, and so after applying all $M_\ell$ the $j$th entry of the result is $\Pr(\bigcap_{\ell\::\: j_\ell = 1} S_\ell) = w_j$.
\end{proof}

The lemma provides the following characterization: for any vector $\vec{w}\in\mathbb{R}^{2^n}$, it entrywise satisfies
\begin{equation}\label{equation:Boole:gen-boole-inequality}
\left(\begin{array}{cc} 1 & -1\\ 0 & 1\end{array}\right)^{\otimes n} \vec{w} \geq \vec{0}
\end{equation}
if and only if there exists a probability measure $\Pr$ on $\Omega$ so that $w_j = \Pr(\bigcap_{k:j_k=1} S_k)$, and when this is the case the measure is unique. Unfortunately this does not solve our problem as our vector $\vec{w}$ is only defined for entries $w_j$ where $j$ has bit-Hamming weight at most two.

One question to ask is under what conditions on $w(x_j,x_k)$ can we extend these to a whole vector that satisfies inequalities (\ref{equation:Boole:gen-boole-inequality}). For example, in the case $n=3$ these inequalities read
$$\left(\begin{array}{cccccccc}
1 & 1 & -1 &  1 & -1 &  1 &  1 & -1\\
0 & 1 &  0 & -1 &  0 & -1 &  0 &  1\\
0 & 0 &  1 & -1 &  0 &  0 & -1 &  1\\
0 & 0 &  0 &  1 &  0 &  0 &  0 & -1\\
0 & 0 &  0 &  0 &  1 & -1 & -1 &  1\\
0 & 0 &  0 &  0 &  0 &  1 &  0 & -1\\
0 & 0 &  0 &  0 &  0 &  0 &  1 & -1\\
0 & 0 &  0 &  0 &  0 &  0 &  0 &  1
\end{array}\right)
\left(\begin{array}{c}
1 \\ w(x_2,x_2) \\ w(x_1,x_1) \\ w(x_1,x_2) \\ w(x_0,x_0) \\ w(x_0,x_2) \\ w(x_1,x_2) \\ w_7
\end{array}\right) \geq
\left(\begin{array}{c}
0 \\ 0 \\ 0 \\ 0 \\ 0 \\ 0 \\ 0 \\ 0
\end{array}\right).$$
Isolating $w_7$ we obtain
\begin{align}\label{equation:3-boole-inequality}
&\max\left\{\begin{array}{l} 
0,\\ 
-w(x_0,x_0) + w(x_0,x_1) + w(x_0,x_2),\\
-w(x_1,x_1) + w(x_0,x_1) + w(x_1,x_2)\\
-w(x_2,x_2) + w(x_0,x_2) + w(x_1,x_2)
\end{array}\right\} \leq w_7\\
&\qquad\qquad \leq
\min\left\{\begin{array}{l} 
1 - w(x_0,x_0) - w(x_1,x_1) - w(x_2,x_2)\\
\qquad +\ w(x_0,x_1) + w(x_0,x_2) + w(x_1,x_2),\\
w(x_0,x_1),\\
w(x_0,x_2),\\
w(x_1,x_2)
\end{array}\right\}.\nonumber
\end{align}
As (putatively) $w_7 = \Pr(S_1\cap S_2\cap S_3)$ these inequalities can be viewed as the direct generalization of Boole's inequalities to threefold intersections. In order that $w_7$ exists, we must have that each of the four bounded terms on the left are no greater than the each of the bounding terms on the right. This provides inequalities that are satisfied if and only if $w_7$, and hence a probability measure, exists. Note that these $16$ inequalities are precisely those found to generate synchronous Bell inequalities \cite[Equations (6,7)]{lackey2017nonlocal}, albeit with a different index convention.

Extending to four-fold intersections and beyond does not follow easily, unlike the triple-intersection case above. To conclude this section we turn to an easier problem, and merely provide sufficient conditions for the existence of a probability measure. The most straightforward case appears to characterize solutions to (\ref{equation:Boole:gen-boole-inequality}) when $w_j = 0$ for bit-Hamming weight three or higher, although other simple examples likely exist.

\begin{lemma}\label{lemma:2-point-range:classical}
Let $X = \{x_0, \dots, x_{n-1}\}$ be a finite set and $w(x_j,x_k)$ be nonnegative values that satisfy
\begin{enumerate}
    \item $w(x_j,x_k) = w(x_k,x_j)$,
    \item $w(x_j,x_j) \geq \sum_{k\not=j} w(x_j,x_k)$, and
    \item $\sum_{j=0}^{n-1} w(x_j,x_j) \leq 1 + \sum_{j<k} w(x_j,x_k)$.
\end{enumerate}
Then there exists a probability measure on $\Hom(X,\{0,1\})$ such that
$$w(x_j,x_k) = \Pr(\{f\::\: f(x_j) = f(x_k) = 1\}).$$
That is,
\begin{align*}
q(0,0\:|\:x_A,x_B) &= 1 - w(x_A,x_A) - w(x_B,x_B) + w(x_A,x_B)\\
q(0,0\:|\:x_A,x_B) &= w(x_B,x_B) - w(x_A,x_B)\\
q(0,0\:|\:x_A,x_B) &= w(x_A,x_A) - w(x_A,x_B)\\
q(0,0\:|\:x_A,x_B) &= w(x_A,x_B)
\end{align*}
defines a synchronous hidden variables correlation.
\end{lemma}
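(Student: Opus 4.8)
The plan is to construct the required probability measure $\mu$ on $\Hom(X,\{0,1\})$ explicitly, concentrated on functions whose support $f^{-1}(1)$ has at most two elements. Restricting to such functions automatically forces every triple (and higher) intersection $\Pr(S_i\cap S_j\cap S_k)$ to vanish, which is precisely why only conditions (1)--(3) are needed rather than the full family of generalized Boole inequalities (\ref{equation:Boole:gen-boole-inequality}).

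First I would identify each $f\in\Hom(X,\{0,1\})$ with its support $T=f^{-1}(1)\subseteq X$ and set $\mu(\{x_j,x_k\})=w(x_j,x_k)$ for $j\neq k$, then $\mu(\{x_j\})=w(x_j,x_j)-\sum_{k\neq j}w(x_j,x_k)$, then $\mu(\emptyset)=1-\sum_j w(x_j,x_j)+\sum_{j<k}w(x_j,x_k)$, and finally $\mu(T)=0$ for all $T$ with $|T|\geq 3$. Condition (1) makes the pair and singleton assignments unambiguous; nonnegativity of the singleton masses is exactly condition (2); nonnegativity of $\mu(\emptyset)$ is condition (3); the pair masses are nonnegative since $w$ is. That $\mu$ is a probability measure follows by summing all masses and using (1) in the form $\sum_j\sum_{k\neq j}w(x_j,x_k)=2\sum_{j<k}w(x_j,x_k)$, whence the total collapses to $1$.

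Next I would verify $w(x_j,x_k)=\Pr(S_j\cap S_k)$, where $S_j=\{f:f(x_j)=1\}$. For $j\neq k$ the only set in the support of $\mu$ containing both $x_j$ and $x_k$ is $\{x_j,x_k\}$, so $\Pr(S_j\cap S_k)=w(x_j,x_k)$; for $j=k$, the sets containing $x_j$ are $\{x_j\}$ and the $\{x_j,x_k\}$ with $k\neq j$, and their masses telescope to $w(x_j,x_j)$, giving $\Pr(S_j)=w(x_j,x_j)$. The explicit form of $q$ is then read off from $q(y_A,y_B\:|\:x_A,x_B)=\sum_f \mu(f)\indicator{y_A=f(x_A)}\indicator{y_B=f(x_B)}$ (equation (\ref{equation:classical:form})): the entry $q(1,1\:|\:x_A,x_B)$ is $\Pr(S_{x_A}\cap S_{x_B})=w(x_A,x_B)$, while the other three entries follow from $\Pr(\overline{S}_a\cap S_b)=\Pr(S_b)-\Pr(S_a\cap S_b)$ and its variants, substituting $\Pr(S_a)=w(x_A,x_A)$ and $\Pr(S_b)=w(x_B,x_B)$ to match the displayed formulas. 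Synchronicity is automatic, since for $x_A=x_B=x$ every summand carries $\indicator{y_A=f(x)}\indicator{y_B=f(x)}$, which is $0$ unless $y_A=y_B$.

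I do not expect a serious obstacle: the argument is bookkeeping. The only point requiring care is the choice to support $\mu$ on sets of size at most two — equivalently, recognizing that forcing $w_J=0$ for $|J|\geq 3$ is exactly what collapses (\ref{equation:Boole:gen-boole-inequality}) down to conditions (1)--(3) — after which each verification (nonnegativity, total mass one, the four correlation entries) is a one-line computation with the symmetry relation.
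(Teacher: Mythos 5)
Your proof is correct, and it produces exactly the same probability measure as the paper's proof, but it verifies it by a genuinely more elementary route. The paper runs the candidate vector $\vec{w}$ (supported on bit-Hamming weight $\leq 2$) through the tensor-product matrices $N_i$ and proves by induction on the product $\prod_i N_i$ that the resulting vector $\vec{p}$ is entrywise nonnegative, then invokes the characterization via the generalized Boole inequalities (\ref{equation:Boole:gen-boole-inequality}) to conclude a measure exists; the point masses it arrives at in (\ref{equation:Boole:the-final-answer}) are precisely your $\mu(\emptyset)$, $\mu(\{x_j\})$, and $\mu(\{x_j,x_k\})$. You instead write those masses down directly, check nonnegativity against conditions (2), (3), and the nonnegativity of $w$, check total mass one using the symmetry (1), and verify $w(x_j,x_k)=\Pr(S_j\cap S_k)$ by a two-line inclusion--exclusion over supports of size at most two. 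What the paper's route buys is consistency with the appendix's general framework (the M\"obius-inversion formalism that also yields the $n=3$ inequalities (\ref{equation:3-boole-inequality})); what yours buys is a shorter, self-contained argument that makes transparent \emph{why} only conditions (1)--(3) suffice, namely that killing all masses on supports of size $\geq 3$ collapses the full family of inequalities to just these. One cosmetic note: the four displayed lines in the lemma statement all read $q(0,0\:|\:x_A,x_B)$; this is a typo in the paper, and your reading of them as the entries $q(0,0)$, $q(0,1)$, $q(1,0)$, $q(1,1)$ is the intended one.
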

\begin{proof}
    As indicated above, we define $\vec{w}$ as follows:
    $$w_j = \left\{\begin{array}{cl}
        1 & \text{if $j = 0$,}\\
        w(x_\ell,x_\ell) & \text{if $j = 2^\ell$,}\\
        w(x_\ell,x_m) & \text{if $j = 2^\ell + 2^m$ (for $\ell \not= m$),}\\
        0 & \text{otherwise.}\end{array}\right.$$
    Similar to the previous lemma we define a matrix
    $$N_i = \openone \otimes \cdots \otimes \openone \otimes \left(\begin{array}{cc} 1 & -1\\ 0 & 1\end{array}\right) \otimes \openone \otimes \cdots \openone$$
    where again the nonidentity matrix is in the $i$th factor, for which the $(j,k)$ entry is
    $$[N_i]_{j,k} = \indicator{j_0 = k_0}\cdots \indicator{j_{i-1} = k_{i-1}}(\indicator{j_i = k_i} - \indicator{j_i = 0}\indicator{k_i = 1}) \indicator{j_{i+1} = k_{i+1}}\cdots \indicator{j_{n-1} = k_{n-1}}.$$
    Then on any vector $\vec{v}$ supported on indices of bit-Hamming weight at most two,
    $$[N_i]_{j,k}v_k = \left\{\begin{array}{cl}
        -v_{j + 2^i} & \text{if $j_m = k_m$ (for $m\not=i$) and $j_i = 0$ and $k_i = 1$,}\\
        v_j & \text{if $j=k$, and}\\
        0 & \text{otherwise.}
    \end{array}\right.$$
    Note that if $j$ has bit-Hamming weight three or larger then $[N_i]_{j,k}v_k = 0$ for all $k$, and so $[N_i\vec{v}]_j = 0$.
    
    When $j$ has bit-Hamming weight two then $[N_i]_{j,k}v_k = 0$, unless $j=k$ for which it equals $v_j$. That is $[N_i\vec{v}]_j = v_j$.
    
    When $j$ has bit-Hamming weight one, $j = 2^\ell$, we have
    $$[N_i]_{j,k}v_k = \left\{\begin{array}{cl}
        -v_{2^i + 2^\ell} & \text{if $k = 2^i + 2^\ell$ where $i\not= \ell$,}\\
        v_{2^\ell} & \text{if $k = 2^\ell$, and}\\
        0 & \text{otherwise.}
    \end{array}\right.$$
    Thus $[N_i\vec{v}]_j = v_{2^\ell} - v_{2^i + 2^\ell}$ when $i \not = \ell$ and $[N_i\vec{v}]_j = v_{2^\ell}$ when $i = \ell$.
    
    Finally, when $j = 0$ we have
    $$[N_i]_{0,k}v_k = \left\{\begin{array}{cl}
        -v_{2^i} & \text{if $k = 2^i$,}\\
        v_0 & \text{if $k = 0$, and}\\
        0 & \text{otherwise}
    \end{array}\right.$$
    and so $[N_i\vec{v}]_0 = v_0 - v_{2^i}$.
    
    Putting all these results together, we find
    \begin{equation}\label{equation:Boole:the-N}
        [N_i\vec{v}]_j = \left\{\begin{array}{cl}
            v_0 - v_{2^i} & \text{if $j = 0$,}\\
            v_{2^i} & \text{if $j = 2^i$,}\\
            v_{2^\ell} - v_{2^i + 2^\ell} & \text{if $j = 2^\ell$ for $\ell\not= i$,}\\
            v_{2^\ell + 2^m} & \text{if $j = 2^\ell + 2^m$ for $\ell\not= m$,}\\
            0 & \text{otherwise.}
            \end{array}\right.
    \end{equation}
    We aim to compute
    $$\scriptsize{\left(\begin{array}{cc} 1 & -1\\ 0 & 1\end{array}\right)}^{\otimes n} \vec{w} = \left(\prod_i N_i\right)\vec{w}$$
    and verify the inequalities (\ref{equation:Boole:gen-boole-inequality}).
    
    We claim
    \begin{equation}\label{equation:Boole:the-final-answer}
        \left[ \prod_{i=0}^{n-1} N_i \vec{w}\right]_j = \left\{\begin{array}{cl}
            1 - \sum_{\ell=0}^{n-1} w_{2^\ell} + \sum_{0 \leq \ell < m \leq n-1} w_{2^\ell + 2^m} & \text{if $j = 0$,}\\
            w_{2^\ell} - \sum_{m=0}^{\ell-1} w_{2^\ell + 2^m} - \sum_{m=\ell+1}^{n-1} w_{2^\ell + 2^m} & \text{if $j = 2^\ell$,}\\
            w_{2^\ell + 2^m} & \text{if $j = 2^\ell + 2^m$ for $\ell\not= m$,}\\
            0 & \text{otherwise,}
            \end{array}\right.
    \end{equation}
    which we prove by induction. Trivially, $\vec{w}$ starts in this form, so inductively assume
    \begin{equation*}
        \left[ \prod_{i=0}^{n-2} N_i \vec{w}\right]_j = \left\{\begin{array}{cl}
            1 - \sum_{\ell=0}^{n-2} w_{2^\ell} + \sum_{0 \leq \ell < m \leq n-2} w_{2^\ell + 2^m} & \text{if $j = 0$,}\\
            w_{2^\ell} - \sum_{m=0}^{\ell-1} w_{2^\ell + 2^m} - \sum_{m=\ell+1}^{n-2} w_{2^\ell + 2^m} & \text{if $j = 2^\ell$,}\\
            w_{2^\ell + 2^m} & \text{if $j = 2^\ell + 2^m$ for $\ell\not= m$,}\\
            0 & \text{otherwise.}
            \end{array}\right.
    \end{equation*}
    Now for $j$ of weight two
    $$\left[ N_{n-1}\prod_{i=0}^{n-2} N_i \vec{w}\right]_{2^\ell + 2^m} = w_{2^\ell + 2^m}$$
    directly from the $j = 2^\ell + 2^m$ line of (\ref{equation:Boole:the-final-answer}). For $j$ of weight one there are two cases to consider. First
    $$\left[ N_{n-1}\prod_{i=0}^{n-2} N_i \vec{w}\right]_{2^{n-1}} = w_{2^{n-1}} - \sum_{m=0}^{n-2} w_{2^\ell + 2^m}$$
    again directly from the $j = 2^{n-1}$ line of (\ref{equation:Boole:the-final-answer}). For $\ell < n-1$ we use the $j = 2^\ell$ line of (\ref{equation:Boole:the-final-answer}) to obtain
    $$\left[ N_{n-1}\prod_{i=0}^{n-2} N_i \vec{w}\right]_{2^\ell} = w_{2^\ell} - \sum_{m=0}^{\ell-1} w_{2^\ell + 2^m} - \sum_{m=\ell+1}^{n-2} w_{2^\ell + 2^m} - w_{2^\ell + 2^{n-1}}.$$
    Finally, for $j=0$ we have from (\ref{equation:Boole:the-final-answer}) that
    \begin{align*}
        \left[ N_{n-1}\prod_{i=0}^{n-2} N_i \vec{w}\right]_0 &= 1 - \sum_{\ell = 0}^{n-2} w_{2^\ell} + \sum_{0 \leq \ell < m \leq n-2} w_{2^\ell + 2^m} - w_{2^{n-1}} + \sum_{m=0}^{n-2} w_{2^\ell + 2^m}\\
        &= 1 - \sum_{\ell=0}^{n-1} w_{2^\ell} + \sum_{0 \leq \ell < m \leq n-1} w_{2^\ell + 2^m}.
    \end{align*}

    Now, $\left(\prod_i N_i\right)\vec{w} \leq 0$ directly from this formula and the assumptions on the coordinates $\vec{w}$ made in the statement.
\end{proof}

\end{section}

\end{document}